%%%%%%%% ICML 2025 EXAMPLE LATEX SUBMISSION FILE %%%%%%%%%%%%%%%%%

\documentclass{article}

% Recommended, but optional, packages for figures and better typesetting:
\usepackage{microtype}
\usepackage{graphicx}
\usepackage{subfigure}
\usepackage{booktabs} % for professional tables
\usepackage[T1]{fontenc}

\usepackage{pifont}
% hyperref makes hyperlinks in the resulting PDF.
% If your build breaks (sometimes temporarily if a hyperlink spans a page)
% please comment out the following usepackage line and replace
% \usepackage{icml2025} with \usepackage[nohyperref]{icml2025} above.
\usepackage{hyperref}
\usepackage{amsmath,amssymb,amsthm}
\usepackage{tikz}

\newcommand*\circled[1]{\tikz[baseline=(char.base)]{
            \node[shape=circle,draw,inner sep=2pt] (char) {#1};}}
\usepackage{multirow,multicol}            
\usepackage{stmaryrd,scalerel}
\usepackage{enumerate}

\usetikzlibrary{shapes,decorations,arrows,calc,arrows.meta,fit,positioning}
\tikzset{
    -Latex,auto,node distance =1 cm and 1 cm,semithick,
    state/.style ={ellipse, draw, minimum width = 0.7 cm},
    point/.style = {circle, draw, inner sep=0.04cm,fill,node contents={}},
    bidirected/.style={Latex-Latex,dashed},
    el/.style = {inner sep=2pt, align=left, sloped}
}

\usepackage{thmtools}
\usepackage{thm-restate}

% Attempt to make hyperref and algorithmic work together better:

% Use the following line for the initial blind version submitted for review:
\usepackage[accepted]{icml2026}

% If accepted, instead use the following line for the camera-ready submission:
% \usepackage[accepted]{icml2026}

% For theorems and such
\usepackage{amsmath}
\usepackage{amssymb}
\usepackage{mathtools}
\usepackage{amsthm}

% if you use cleveref..
\usepackage[capitalize,noabbrev]{cleveref}

%%%%%%%%%%%%%%%%%%%%%%%%%%%%%%%%
% THEOREMS
%%%%%%%%%%%%%%%%%%%%%%%%%%%%%%%%
\theoremstyle{plain}
\newtheorem{theorem}{Theorem}[section]
\newtheorem{proposition}[theorem]{Proposition}
\newtheorem{lemma}[theorem]{Lemma}
\newtheorem{corollary}[theorem]{Corollary}
\theoremstyle{definition}
\newtheorem{definition}[theorem]{Definition}

\theoremstyle{remark}

\newtheorem{example}[theorem]{Example}
% Todonotes is useful during development; simply uncomment the next line
%    and comment out the line below the next line to turn off comments
%\usepackage[disable,textsize=tiny]{todonotes}
\usepackage[textsize=tiny]{todonotes}

\DeclareMathOperator*{\argmin}{arg\,min}

\setlength{\marginparwidth}{1.5cm}

\newcommand{\mms}{\textsf{MMS}}
\newcommand{\ido}{\textsf{IDO}}

\definecolor{myred}{RGB}{201, 22, 22}
\definecolor{mygreen}{RGB}{38, 150, 68}
\newcommand{\yes}{\textcolor{mygreen}{\ding{51}}}
\newcommand{\no}{\textcolor{myred}{\ding{55}}}

% The \icmltitle you define below is probably too long as a header.
% Therefore, a short form for the running title is supplied here:
\icmltitlerunning{Online Fair Division with Additional Information}

\begin{document}

\twocolumn[
\icmltitle{Online Fair Division with Additional Information}

% It is OKAY to include author information, even for blind
% submissions: the style file will automatically remove it for you
% unless you've provided the [accepted] option to the icml2025
% package.

% List of affiliations: The first argument should be a (short)
% identifier you will use later to specify author affiliations
% Academic affiliations should list Department, University, City, Region, Country
% Industry affiliations should list Company, City, Region, Country

% You can specify symbols, otherwise they are numbered in order.
% Ideally, you should not use this facility. Affiliations will be numbered
% in order of appearance and this is the preferred way.
\icmlsetsymbol{equal}{*}

\begin{icmlauthorlist}
\icmlauthor{Tzeh Yuan Neoh}{1,equal}
\icmlauthor{Jannik Peters}{2,equal}
\icmlauthor{Nicholas Teh}{3,equal}
\end{icmlauthorlist}

\icmlaffiliation{1}{Harvard University, USA}
\icmlaffiliation{2}{Shanghai University of Finance and Economics, China}
\icmlaffiliation{3}{University of Oxford, UK}

\icmlcorrespondingauthor{Nicholas Teh}{nicholas.teh@cs.ox.ac.uk}

% You may provide any keywords that you
% find helpful for describing your paper; these are used to populate
% the "keywords" metadata in the PDF but will not be shown in the document
\icmlkeywords{Machine Learning, ICML}

\vskip 0.3in
]

% this must go after the closing bracket ] following \twocolumn[ ...

% This command actually creates the footnote in the first column
% listing the affiliations and the copyright notice.
% The command takes one argument, which is text to display at the start of the footnote.
% The \icmlEqualContribution command is standard text for equal contribution.
% Remove it (just {}) if you do not need this facility.

%\printAffiliationsAndNotice{}  % leave blank if no need to mention equal contribution
\printAffiliationsAndNotice{\icmlEqualContribution} % otherwise use the standard text.

\begin{abstract}
      We study the problem of fairly allocating indivisible goods to agents in an online setting, where goods arrive sequentially and must be allocated irrevocably. Focusing on the popular fairness notions of envy-freeness, proportionality, and maximin share fairness (and their approximate variants), we investigate how access to future information changes what guarantees are achievable. Without any information, we prove strong impossibility results even for approximate fairness. With normalization information (agents' total values), we provide an algorithm that achieves stronger fairness guarantees than previously known results, and show matching impossibilities for stronger notions. With frequency predictions (value multisets without order), we design a meta-algorithm that lifts a broad class of offline ``share-based'' guarantees to the online setting, matching the best-known offline bounds. Finally, we provide learning-augmented variants of both models: under noisy totals or noisy frequency predictions, our guarantees are robust and degrade gracefully with the error parameters.
\end{abstract}

\section{Introduction}
Fair division of indivisible items is a fundamental problem dealing with the allocation of scarce resources to interested agents in a fair manner \cite{brams1996fairdivision,moulin2003fairdivision}.
Such problems arise naturally in many modern ML-driven systems; for instance, platforms must route discrete opportunities (e.g., ad impressions, recommendation exposure, moderation tickets, compute slots)
to competing stakeholders, often under operational constraints and fairness requirements.
Crucially, these decisions are \emph{sequential}: items arrive over time and must be assigned immediately.
This online nature is central in applications such as multi-resource allocation for cloud computing and shared ML infrastructure
\citep{bei2022cloudcomputingallocation}, assigning donations to charities and food banks \citep{aleksandrov2015onlinefoodbank},
allocating impressions or exposure in recommendation and marketplace settings \citep{murray2020robustuncertain},
and distributing content moderation workload over time \citep{allouah2023fairallocationtime}.
In such settings, an allocation policy is judged not only by efficiency but also by whether it provides
meaningful fairness guarantees to participants.

These applications motivate the study of \emph{online fair division with indivisible goods}.\footnote{We focus on items with nonnegative valuations, which is a major foundational focus in the fair division literature. While many of our results would extend to chores (i.e., items with nonpositive valuations), we restrict our attention to goods to maintain narrative clarity and focus.}
In this setting, we are given a fixed set of agents together with a sequence of indivisible goods arriving sequentially. Upon each arrival, the algorithm observes the agents' values for the current good and must allocate it irrevocably.
The goal is that the final allocation is \emph{fair}. 
Our focus is on standard axioms for indivisible goods---envy-freeness relaxations (EF1/EFX), proportionality relaxations (PROP1), and maximin share (MMS)---and their approximations.

A challenge in the online setting is uncertainty about future arrivals.
Without any information about upcoming goods, strong impossibility results are known even in very small instances,
and classic offline guarantees can collapse \citep{he2019fairerfuturepast,zhou2023icml_mms_chores}.
At the same time, many ML systems do have partial forecasts: historical data can predict aggregate demand,
category frequencies, or total volume over a horizon.
This suggests a natural question:
which kinds of future information are sufficient to recover meaningful fairness guarantees for indivisible goods?

In this paper, we systematically study how two increasingly informative signals about the future change what is achievable.
First, we consider \emph{normalization information} (equivalently, information about each agent's total value over all goods), a common assumption in online fair division \citep{barman2022universal,Gkatzelis_Psomas_Tan_2021,zhou2023icml_mms_chores}.
This, in spirit, is related to the field of \emph{online algorithms with advice} (refer to a survey by \citet{boyar2017onlineadvicesurvey}). For instance, our problem has a close relation to the classical \emph{semi-online scheduling} problem whereby heterogeneous jobs arrive in an online manner and need to be scheduled on homogeneous machines with the goal to minimize the maximum load on any machine (refer to a survey by \citet{dwibedy2022semi}). 
In this setting, it is typically assumed that some additional \emph{offline} (future) information is known about the instance. 
Examples of such information include the total processing time of all jobs \citep{albers2012semi, kellerer2015efficient}, the ratio between the smallest and largest processing time of any job \cite{he1999semi}, or that the jobs arrive in decreasing order of processing time \citep{seiden2000semi}.
Second, we study \emph{frequency predictions}, where the algorithm is given, for each agent, the multiset of values that will appear,
but not the arrival order. This type of partial information has also been studied in learning-augmented and semi-online models
(e.g., knapsack/bin packing/matching) \citep{mehta2007adwords,im2021onlineknapsack,angelopoulos2023binpacking,balseiro2023revenuemgmt}.
We complement both models with \emph{learning-augmented} variants, where the advice may be noisy.

\subsection{Our Contributions}
We study online fair division of indivisible goods under an \emph{adaptive} adversary and \emph{ex post} evaluation of the final allocation.

We begin in \Cref{sec:impossibilities_wo_info} by considering the online fair division model \emph{without} additional information about future goods.
We show that for $n\ge 2$, no deterministic online algorithm can guarantee any positive approximation of EF1.
We also use a lower bound of \citet{benade2018envyvanish} to show that, even for two agents with bounded item values, any deterministic online algorithm can be forced to output an allocation that violates PROP1 on instances with sufficiently many goods (even with bounded item values); consequently, no algorithm can guarantee PROP1 in this fully online model.\footnote{We also clarify a flaw in the construction of \citet{KahanaHa2023} that was claimed to establish a similar impossibility.}

Motivated by prior work that assumes normalized valuations (or equivalently provides each agent's total value in advance)
\citep{Gkatzelis_Psomas_Tan_2021,barman2022universal,zhou2023icml_mms_chores}, \Cref{sec:normalization} develops a new  algorithm that guarantees:
(i) PROP1 for any number of agents, and
(ii) EF1 (and thus $1/2$-MMS) simultaneously for two agents.
This does not follow directly from the semi-online MMS result of
\citet{zhou2023icml_mms_chores}:
their rule can violate PROP1 (and therefore EF1) even for $n=2$.
Our algorithm uses a different removal rule, based on the largest good outside an agent's bundle, which is exactly the type of one-good condition appearing in PROP1 and EF1.
We complement these guarantees with matching impossibility results: even with normalization information, no positive approximation to EFX is possible already for $n=2$, and no positive approximation to EF1 is possible for $n\ge 3$.
These results clarify what normalization information is sufficient for in online fair division. In \Cref{sec:noisy-normalization}, we allow the predicted total values to be inaccurate and show how the guarantees change as a function of the prediction error.

We prove robustness guarantees for the same algorithm:
an additive approximate EF1 bound for $n=2$, and a multiplicative $\kappa$-PROP1 guarantee for general $n$,
parameterized by the advice accuracy.
This explicitly connects online fair division to learning-augmented analysis, while preserving ex post guarantees.

To motivate our investigation into whether stronger fairness guarantees can be achieved by providing the algorithm with richer information, in \Cref{sec:freq}, we consider a model where the online algorithm has access to \emph{frequency predictions} (i.e., for each agent and value, a predictor tells us the frequency with which this value will appear among the agents' valuations for items that will arrive). This is motivated by similar notions in the online knapsack \cite{im2021onlineknapsack}, online bin packing  \cite{angelopoulos2023binpacking}, and online matching \cite{mehta2007adwords} literature.
Here, we prove a general result: any \emph{share-based} fairness notion that can be achieved by an offline algorithm in the single-shot setting\footnote{Some examples studied in the literature include \emph{round robin share} (RRS) \cite{conitzer2017fairpublic,gourves2021pickingsequenceshare}, \emph{minimum EFX share} (MXS) \cite{caragiannis2025newfairnessconceptsallocating}, and $\left(\frac{3}{4} + \frac{3}{3836} \right)$-MMS \cite{akrami2024mms}.} can also be achieved by an online algorithm with frequency predictions---we give an online algorithm that achieves the same share guarantee.
In particular, this implies that the currently best known $\left( \frac{3}{4} + \frac{3}{3836} \right)$-MMS approximation guarantee can also be obtained in our setting using frequency predictions for any number of agents, thereby improving on the (tight) $0.5$-MMS guarantee under normalized valuations established by \citet{zhou2023icml_mms_chores}.
We also show that that EFX can be achieved for two agents, and complement this by showing that for three or more agents, no positive approximation to EFX is possible. 
In Section~\ref{subsec:noisy-frequency-predictions}, we provide a learning-augmented variant: when the predicted multisets are noisy,
the guarantee degrades smoothly with an explicit instantiation error parameter.

Finally, motivated by the relation of our problem to online scheduling (where many works consider the setting with identical machines \cite{dwibedy2022semi}), we study the setting where agents have identical valuation functions (i.e., each good is valued the same by every agent). 
In this setting, an algorithm by \citet{elkind2025temporalfd} implies that EF1 can be achieved without any information about the future.
We complement this by showing that any positive EFX approximation remains impossible for (i) two agents without any information, and (ii) three agents or more given normalization information. For two agents given the normalization information, we provide an algorithm that returns a $\frac{\sqrt{5} - 1}{2}$-EFX allocation, and show that this bound is tight (i.e., no better approximation to EFX is possible).

Finally, although we focus on deterministic algorithms, our impossibility results extend to randomized algorithms under our adaptive adversary and ex-post model (Appendix~\ref{app:randomized_algorithms}).

\begin{table*}[t]
    \centering
    \small
    \setlength{\tabcolsep}{5pt}
    \begin{tabular}{lccc}
    \toprule
    \textbf{Fairness Property} & \textbf{No Information} & \textbf{Normalization Information} & \textbf{Frequency Predictions} \\
    \midrule
    EFX  & \no$^\dagger$ & \no$^\dagger$ & \yes \, ($n=2$); \no$^\dagger$ ($n\ge3$) \\
    EF1  & \no$^\dagger$ & \yes \, ($n=2$); \no$^\dagger$ ($n\ge3$) & \yes \, ($n=2$); $?$ ($n\ge 3$) \\
    PROP1 & \no & \yes & \yes \\
    MMS  & \no$^\dagger$ \cite{zhou2023icml_mms_chores} & $1/2$ ($n=2$); \no$^\dagger$ ($n\ge3$) \cite{zhou2023icml_mms_chores}  & $\frac34+\frac{3}{3836}$ \\
    \bottomrule
    \end{tabular}
    \caption{Overview of fairness guarantees across information models. \yes \, denotes exact feasibility, numeric entries denote a multiplicative approximation factor, 
    \no \, denotes impossibility of exact satisfaction, and \no$^\dagger$ denotes that no strictly positive competitive ratio is achievable. For PROP1 without information, we rule out exact satisfaction for instances with sufficiently many goods, but do not rule out positive approximation factors.}
    \label{tab:overview}
\end{table*}

\subsection{Further Related Work}

\paragraph{Online Fair Division.}
Earlier work on online fair division was motivated by applications such as food bank allocation \citep{aleksandrov2015onlinefoodbank} and studied fairness and incentives under restricted valuation classes; see \citet{aleksandrov2020online} for a survey.
More recent papers analyze how fairness evolves over time \citep{benade2018envyvanish}, fairness-efficiency tradeoffs \citep{zeng2020fairness_efficiency_dynamicFD}, and the impact of changing the past on achievable guarantees \citep{he2019fairerfuturepast}.
Closest to our work, \citet{zhou2023icml_mms_chores} studied the problem given normalized valuations, focusing on MMS; we extend the study to envy- and proportionality-based axioms and to richer prediction models.
A recent follow-up work by \citet{choo2025approxproponline} focuses specifically on approximate PROP1 in online fair division, showing lower bounds for several natural greedy rules under adaptive adversaries, as well as positive guarantees under weaker adversaries and with predictions of the maximum item value.

Recent work also explores structured valuation domains in online fair division.
In particular, \citet{amanatidis2025personalized2value} study personalized two-value instances, and \citet{wang2026onlinefairBinary} investigate online fair allocations under binary valuations beyond classical food-bank models. 
These lines of work highlight what becomes possible under strong domain restrictions.
In contrast, we focus on \emph{general additive valuations} and ask how different \emph{forms of future information}
(normalization information, frequency predictions) enable or preclude classic fairness guarantees.

\citet{elkind2025temporalfd} consider a fully-informed setting where the sequence of incoming items is fixed, and all valuations over future goods is known in advance with the goal being to achieve an approximately envy-free allocation at \emph{every round prefix}, a stronger requirement than ours.

\paragraph{Learning-Augmented Online Fair Division}
A growing body of work studies online algorithms that exploit predictions about the future and analyzes
consistency (when predictions are correct) and robustness (when they are not).
Within fair division, much of this literature focuses on \emph{divisible} goods
\citep{banerjee2022nash_predictions,banerjee2023proportionally,cohen2023learningaugmented_divisible,golrezaei2023onlineadvice_divisible,an20204onlineresource},
where allocations can be fractionally adjusted and this differs substantially from the indivisible setting.
For indivisible goods, existing learning-augmented work often targets objectives other than the classic envy/proportionality relaxations:
\citet{spaeh2023onlineadspredictions} study online ad allocation with predictions under cardinality constraints with a welfare objective;
\citet{balkanski2023SPschedulingpredictions} and \citet{cohen2024truthfulpreductions} focus on truthfulness/strategyproofness and makespan-style guarantees
(closely related to MMS via scheduling analogies).
Our work is complementary in two aspects: (i) we focus on fundamental \emph{fairness axioms} for indivisible goods (EF1/EFX/PROP1/MMS),
and (ii) we provide \emph{learning-augmented} guarantees tailored to these axioms under both noisy normalization information and frequency predictions.

\paragraph{Partial Information and Bandit Feedback.}
Several models study online fair division when valuation information is incomplete or noisy.
\citet{benade2022dynamic} consider distributional valuations with only partial (ordinal) access, and show high-probability fairness under shared
distributions.
A recent line of work connects online fair division to bandit learning and regret minimization, where valuation feedback is noisy or contextual
\citep{procaccia2024honorbandits,yamada2024learningFDbandit,verma2024onlineFDcontextualbandits,Schiffer2025onlinefdbandits}.
These papers typically study uncertainty about values or distributions, observe
limited or noisy feedback, and evaluate performance through regret, expectation,
or fairness over time.  Our setting is different: when a good arrives, we
observe the full valuation vector, the adversary may choose
the arrival order, and the guarantee is ex post fairness of the final
allocation.  Thus, our question is not how quickly values can be learned, but
which predictions about future goods allow one to obtain offline-style
guarantees under irrevocable allocation.

\paragraph{Online Social Choice}
Besides fair division, online algorithms and related settings have been studied for various social choice topics. 
For instance, in the multiwinner voting literature, various papers have studied the online (or temporal selection) of candidates \citep{lackner2020perpetual,alouf2022better,do2022onlineapproval,elkind2022temporalslot,Lackner2023proportional,chandak24proportional,elkind2024temporalelections,elkind2024temporalfairness,zech2024multiwinnerchange,elkind2025temporalchores,elkind2025verifying, phillips2025strengthening,teh2026price}. 
\citet{bullinger2023onlinerandom, bullinger2024stabilityonline} studied online coalition formation in which agents arrive over time and need to be assigned to coalitions. 
Further, there are various works studying online fair matching problems in which either the matching participants arrive over time \citep{esmaeili2023rawlsian, hosseini2024class} or the preferences of the participants need to be discovered in an online process \citep{HMSS21a,Pete22a,amanatidis2024distortion}.

\paragraph{Relation to prophet inequalities.}
Our model is closer to semi-online/advice and learning-augmented online
algorithms than to prophet inequalities. Prophet inequality models are
typically stochastic optimal-stopping or online-selection problems benchmarked against a prophet who sees the realizations in advance
\citep{correa2019recent}. In contrast, our setting is adversarial and
multi-agent: every good must be allocated irrevocably, and performance is
evaluated ex post through fairness of the final allocation rather than through expected reward from selecting a subset of arrivals.

\section{Preliminaries}
For each positive integer $z$, we denote $[z] := \{1,\dots,z\}$.
Throughout the paper, we assume we are given a set $N = [n]$ of (fixed) \emph{agents} and a set $G = \{g_1,\dots, g_m\}$ of  $m$ \emph{indivisible goods} arriving online. We label the goods $g_1,\dots,g_m$ in the order that they arrive. 
Each agent $i \in N$ has a non-negative \emph{valuation function} $v_i\colon 2^G \rightarrow \mathbb{R}_{\geq 0}$.
We write $v$ instead of $v_i$ when all agents have identical valuation functions. 
As with most works in the fair division literature, we assume that valuation functions are \emph{additive}, i.e., for any subset of goods $S \subseteq G$, $v_i(S) = \sum_{g \in S} v_i(\{g\})$.
For convenience, we write $v_i(g)$ instead of $v_i(\{g\})$ for a single good~$g$. 
Every subset of goods in $G$ can also be referred to as a \emph{bundle}.
An \emph{allocation} $\mathcal{A} = (A_1,\dots, A_n)$ is a \emph{partition} of the goods into bundles, with agent $i \in N$ receiving bundle $A_i$.

One of the most common fairness notions considered in fair division is \emph{envy-freeness} (EF): every agent's value for their own bundle should be at least as much as their value for any other agent's bundle.
However, the existence of EF allocations cannot be guaranteed, even in the simplest case of two agents and a single good. Thus, we consider the two most common relaxations of EF.
The first is a relatively strong and widely studied EF relaxation: \emph{envy-freeness up to any good}. 
\begin{definition}[EFX]
    An allocation $\mathcal{A} = (A_1, \dots, A_n)$ is \emph{envy-free up to any good (EFX)} if for every pair of agents $i,j \in N$ and every $g \in A_j$, we have that $v_i(A_i) \geq v_i(A_j \setminus \{g\})$.
\end{definition}
In the offline setting, the existence of EFX allocations for more than three agents is still an open problem, and remains one of ``fair division's most enigmatic question(s)'' \cite{procaccia2020enigmatic}.
As such, many works focus on a relatively weaker but still natural variant of envy-freeness: \emph{envy-freeness up to one good}.
\begin{definition}[EF1]
    An allocation $\mathcal{A} = (A_1, \dots, A_n)$ is \emph{envy-free up to one good (EF1)} if for every pair of agents $i,j \in N$ with $A_j \neq \varnothing$ there exists a $g \in A_j$ such that $v_i(A_i) \geq v_i(A_j \setminus \{g\})$.
\end{definition}
It is easy to see that EFX implies EF1. Furthermore, there are several algorithms that produce an EF1 allocation, such as the envy-cycle elimination algorithm \cite{lipton2004ece} or the classic round-robin procedure.

We also consider a weaker fairness property, called \emph{proportionality} (PROP), which requires that each agent receives her ``proportional share''---that is, at least a $1/n$-th fraction of the total value of all goods according to her own valuation \cite{Steinhaus48}.
Numerous prior works have looked into this fairness concept and its relaxed variants in the offline setting \cite{amanatidis2023fairdivisionsurvey,aziz2023prop1}. 
Similar to the case of EF, a proportional allocation is not always guaranteed to exist, even with just two agents and a single good.
We focus on the analogous ``up to one good'' relaxation commonly considered in the literature.\footnote{A stronger variant of PROP1 would be proportionality \emph{up to any good} (PROPX). However, it is not always satisfiable even in the single-shot setting \citep{aziz2020prop1}.}
\begin{definition}[PROP1]
    An allocation $\mathcal{A} = (A_1, \dots, A_n)$ is \emph{proportional up to one good (PROP1)} if for every agent $i \in N$, either $A_i = G$ or there exists a $g \in G \setminus A_i$ such that $v_i(A_i \cup \{g\}) \geq \frac{v_i(G)}{n}$.
\end{definition}
It is easy to see that EF1 implies PROP1. Moreover, while EF and PROP are equivalent in the case of $n=2$, the same relationship cannot be established for EFX, EF1, or PROP1. 

The last fairness notion we consider in this work is \emph{maximin share fairness} (MMS), which was also the focus of \citet{zhou2023icml_mms_chores}.
Intuitively, MMS guarantees that each agent receives a bundle she values at least as much as she would have gotten if she were allowed to partition all goods into $n$ bundles and then receive the least valuable bundle (according to her own valuation).
\begin{definition}[MMS]
    Let $\Pi(G)$ be the set of all $n$-partitions of $G$. The \emph{maximin share} of each agent $i \in N$ is defined as \emph{\mms}$_{i} := \max_{\mathcal{X} \in \Pi(G)} \min_{j \in N} \{v_i(X_j) \}$.
    Then, an allocation $\mathcal{A} = (A_1,\dots,A_n)$ is maximin share fair (MMS) if $v_i(A_i) \geq \emph{\mms}_i$ for all $i \in N$.
\end{definition}
While PROP implies MMS, MMS implies PROP1 \cite{caragiannis2025newfairnessconceptsallocating}.
Finally, we also study approximate versions of the properties defined above.\footnote{We consider \emph{multiplicative} approximations of these fairness properties, which is a well-established and common approach in fair division (as compared to \emph{additive} approximations).}
For $\alpha \in [0,1]$, we say that an allocation $\mathcal{A}$ is:
\begin{itemize}
    \item $\alpha$-EFX if for every $i,j \in N$, either $A_j = \varnothing$ or for every $g \in A_j$, we have that $v_i(A_i) \geq \alpha \cdot v_i(A_j \setminus \{g\})$;
    \item $\alpha$-EF1 if for every $i,j \in N$, either $A_j = \varnothing$ or there exists a $g \in A_j$ such that $v_i(A_i) \geq \alpha \cdot v_i(A_j \setminus \{g\})$;
    \item $\alpha$-PROP1 if for every $i \in N$, either $A_i = G$ or there exists a $g \in G \setminus A_i$ such that $v_i(A_i~\cup~\{g\})~\geq~\alpha~\cdot~\frac{v_i(G)}{n}$;
    \item $\alpha$-MMS if for every $i \in N$, $v_i(A_i) \geq \alpha \cdot \mms_i$.
\end{itemize}
Note that for any $\alpha \in [0,1]$, $\alpha$-EFX implies $\alpha$-EF1, and for any $n$, EF1 implies $\frac{1}{n}$-MMS \cite{amanatidis2018ef,segal2019democratic}.

We equivalently say that a rule satisfies the property if any allocation returned by the rule on a problem instance satisfies the property.
Note that when $\alpha = 1$, the property is satisfied exactly and the lower $\alpha \in [0,1]$ is, the worse the approximation gets.

\paragraph{Online Setting} As is standard in online fair division literature, we assume an adversarial model where an adversary both constructs the instance and controls the sequence of arriving goods. 
As we consider deterministic online algorithms (again, another standard consideration), the adversary may be adaptive, choosing each item’s valuation profile based on the algorithm’s prior allocation decisions. 
The algorithm is given the number of agents $n$, along with some information (if any), which depends on the setting we are considering. When a good $g$ arrives, the algorithm observes all $\{v_i(g)\}_{i \in N}$ and must immediately and irrevocably assign $g$ to an agent.\footnote{This corresponds to the notion of \emph{completeness}, which is also a standard assumption in almost all prior work on online fair division, and most prior work in the classic fair division model. An exception in the latter is the setting involving the allocation of goods \emph{with charity}, where an objective is to minimize the number of unallocated goods (or those given to charity), on top of ensuring an approximately fair allocation.}
We evaluate fairness \emph{ex post} on the realized final allocation.
We then measure performance of the algorithm by the \emph{competitive ratio}, i.e., the worst-case approximation factor of the algorithm (with respect to the fairness objective) of the final allocation $\mathcal{A}$ over all online instances.

\section{Impossibilities Without Information} \label{sec:impossibilities_wo_info}
We begin by considering the basic setting of online fair division \emph{without} any information---that is, goods arrive one at a time and must be irrevocably allocated to agents, with no knowledge of future items.
\citet[Theorem A.1]{zhou2023icml_mms_chores} proved that in the absence of any information, there does not exist any online algorithm with a competitive ratio strictly larger than $0$ with respect to MMS, even for $n = 2$.
Here, we consider (the approximate variants of) two other commonly studied fairness concepts in the fair division literature (EF and PROP) in the same setting.

We begin with the envy-freeness relaxations.
\citet{benade2018envyvanish} showed that no online algorithm, in the absence of information, can achieve EF1 (and hence EFX).
We strengthen this by proving a stronger impossibility result: no online algorithm in this setting can guarantee \emph{any} positive approximation to EF1.
\begin{restatable}{proposition}{noinfoEFoneunbounded} \label{prop:noinfo_n=2_EF1_unbounded_g}
    For $n \ge 2$, without future information, there does not exist any deterministic online algorithm with a competitive ratio strictly larger than $0$ with respect to \emph{EF1}.
\end{restatable}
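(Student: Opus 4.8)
The plan is to exhibit, for $n=2$, an adaptive adversary strategy that forces any deterministic online algorithm into an allocation where some agent has value $0$ for their own bundle while valuing the other's bundle strictly positively, even after removing any single good. Since $\alpha$-EF1 for any $\alpha > 0$ requires $v_i(A_i) \ge \alpha \cdot v_i(A_j \setminus \{g\}) > 0$ whenever the right-hand side is positive, driving $v_i(A_i)$ to exactly $0$ while keeping $v_i(A_j\setminus\{g\})>0$ for all $g \in A_j$ defeats every positive competitive ratio at once. The reduction to general $n \ge 2$ is then routine: add $n-2$ dummy agents who value every arriving good at $0$, so the adversary's two-agent construction embeds directly and the dummies trivially play no role.

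The core of the argument is the two-agent adversary. I would have the adversary release goods in identical-value pairs and react to the algorithm's choices. Concretely, consider sending a good $g$ valued at $1$ by agent $1$ and $0$ by agent $2$ (so only agent $1$ cares about it). Whoever among the two agents the algorithm hands $g$ to, the adversary next targets the other agent. The intended outcome is to maneuver the algorithm so that, say, agent $1$ ends up with a bundle she values at $0$ while agent $2$ holds at least one good that agent $1$ values positively. The adversary can always force a binary decision to be ``wrong'': present a good that only agent $1$ values, and if the algorithm gives it to agent $2$, then agent $1$'s envy toward agent $2$ becomes positive and can be made irreparable by ensuring agent $1$ never subsequently receives anything she values; if the algorithm instead gives it to agent $1$, the adversary symmetrically switches to creating envy for agent $2$ via a good only agent $2$ values. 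By alternating and committing to starve whichever agent was shortchanged, the adversary pins one agent's own-bundle value at $0$ while the opposing bundle retains positive value even after deleting any single good (which requires placing at least two positively-valued goods in the envied bundle).

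The main obstacle is the ``even after removing any one good'' clause in the EF1 definition: it is not enough for agent $1$ to envy agent $2$; agent $1$ must still envy after the single most-valuable good is struck from agent $2$'s bundle. To defeat the ``up to one good'' slack, the adversary must force \emph{at least two} goods that the envying agent values positively into the other agent's bundle, so that removing any one still leaves positive value on the right-hand side while the left-hand side stays at $0$. I would handle this by iterating the basic gadget: after the first mis-assignment creates envy, the adversary continues to present goods valued only by the now-disadvantaged agent, and any algorithm that keeps the disadvantaged agent at value $0$ must route these to the opponent, accumulating multiple positively-valued goods in the envied bundle; alternatively, if the algorithm ever gives the disadvantaged agent a valued good, the adversary restarts the gadget against the other agent. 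A clean way to package this is to invoke the existing EF1 impossibility of \citet{benade2018envyvanish}: their construction already forces a violation of exact EF1, and one checks that in that construction the violated inequality is in fact $v_i(A_i) = 0 < v_i(A_j \setminus \{g\})$ for every $g$, which is strictly stronger and immediately rules out any positive approximation. I would therefore aim to either adapt their stream so the shortchanged agent's bundle has value exactly $0$, or verify directly that their witnessing instance already has this property, making the strengthening to $\alpha$-EF1 essentially a sharpening of the known bound rather than a fresh construction.
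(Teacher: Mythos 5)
There is a genuine gap, and it sits at the heart of your two-agent gadget. Your adversary only ever releases goods valued positively by a \emph{single} agent, but such goods can never force envy against the natural greedy algorithm that assigns each arriving good to an agent who values it most: every single-valuer good then goes straight to its sole valuer, no agent ever values the other's bundle positively, and the final allocation is exactly envy-free. Your claim that giving the good to agent $1$ lets the adversary ``symmetrically switch'' fails for the same reason --- the symmetric gadget is also answered harmlessly. Relatedly, your insistence on pinning the shortchanged agent's value at \emph{exactly} $0$ cannot work against all algorithms: consider an algorithm that prioritizes giving any arriving good to a zero-value agent who values it positively. To defeat EF1 you need two goods positively valued by the envier inside someone else's bundle, but under that algorithm the first such good is handed to the envier herself, restoring positive value. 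This is precisely why the paper's proof does \emph{not} rely solely on zero-value agents: it fixes $K>1/\gamma$, opens with a good valued $1$ by everyone (forcing a genuinely contested decision), follows with a good valued $K$ by the recipient and $1/K$ by everyone else, and then releases $n-1$ goods valued $K$ by the two involved agents and $1$ by the rest. The case analysis combines pigeonhole arguments (some agent ends with an empty bundle facing a bundle with two positively valued goods) with a \emph{ratio} argument: the disadvantaged agent may well hold positive value $1/K$, but faces a remainder of value at least $1$ after removing the best good, forcing $\gamma \le 1/K < \gamma$, a contradiction. The magnitude asymmetry ($K$ versus $1/K$) is the key idea your construction is missing.

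Your fallback --- invoking the construction of \citet{benade2018envyvanish} and ``checking'' that the violated inequality there has $v_i(A_i)=0$ --- also does not go through. Their lower bound (restated as \Cref{thm:benade}) guarantees \emph{additive} envy of order $c\sqrt{m}$ with all item values in $[0,1]$; in that construction the disadvantaged agent typically accumulates substantial positive value, and an allocation with $v_1(A_1)=V$ and $v_1(A_2\setminus\{g\}) \le V + c\sqrt{m}$ has multiplicative ratio $V/(V+c\sqrt{m})$, which tends to $1$ (not $0$) when $V$ grows like $m$. So additive envy growth rules out exact EF1 (and, as the paper shows in \Cref{cor:prop1-online-impossible}, exact PROP1) but says nothing about positive multiplicative approximation --- which is exactly why the paper supplies a fresh construction rather than a sharpening of the known bound. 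Your dummy-agent reduction for $n\ge 3$ is also glossed too quickly (the algorithm can use dummies as a dumping ground, and one must argue this itself creates an EF1 violation), which is why the paper handles general $n$ directly with an explicit pigeonhole analysis; but this is a secondary issue compared to the broken core gadget.
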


Given this, a natural follow-up question is whether it is possible to approximate a weaker fairness notion in this setting.
Prior work by \citet[Thm. 13]{KahanaHa2023} claims that no algorithm can satisfy PROP1 in this setting.
We show that their example used to disprove the existence of an algorithm that returns a PROP1 allocation is incorrect. 

\begin{example}[\citet{KahanaHa2023}]
    Consider an instance with $n=2$ agents and let goods arrive in the order $g_1, \dots, g_m$.
    Also consider some round $t$ where $g_t$ arrives.
    If $t$ is odd, then let $v_1(g_t) = v_2(g_t) = 1$. If $t$ is even and the good $g_{t-1}$ was allocated to agent $1$, then let $v_1(g_t) = 1$ and $v_2(g_t) = 0.875$;
    symmetrically, if $g_{t-1}$ was allocated to agent $2$, then let $v_1(g_t) = 0.875$ and $v_2(g_t) = 1$. 
\end{example}
\citet{KahanaHa2023} claimed that no online algorithm can achieve PROP1 in the above instance.
    However, observe that a simple online algorithm that always assigns two goods to the first agent, then two goods to the second agent, and so on, in an alternating fashion would satisfy PROP1. 
    Their proof fails in the induction, in which the base case (for $t = 0$) is not true.
    Nevertheless, we are able to reason about the non-existence of a PROP1 allocation from the following result by \citet{benade2018envyvanish}.
\begin{theorem}[\citet{benade2018envyvanish}] \label{thm:benade}
    There exist constants $c>0$ and $m_0\in\mathbb{N}$ such that for every integer $m\ge m_0$ and every deterministic online allocation algorithm, there exists an instance with two agents and $m$ goods satisfying $v_i(g)\in[0,1]$ for all agents $i$ and goods $g$, such that if the algorithm outputs allocation $\mathcal{A}$, then $\max\{\,v_1(A_2)-v_1(A_1), v_2(A_1)-v_2(A_2)\,\} \ge c\sqrt{m}$.
\end{theorem}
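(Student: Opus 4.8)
The plan is to recast the statement as an online vector-balancing (discrepancy) lower bound and then force the balancing vector to grow by a Pythagoras-style adversary. Encode the allocation by signs $x_t\in\{+1,-1\}$, with $x_t=+1$ meaning $g_t$ is given to agent $1$. Writing $a_t=v_1(g_t)$ and $b_t=v_2(g_t)$, the two envy quantities in the statement are exactly the signed sums $E_1:=v_1(A_2)-v_1(A_1)=-\sum_t x_t a_t$ and $E_2:=v_2(A_1)-v_2(A_2)=\sum_t x_t b_t$, so the target $\max\{E_1,E_2\}\ge c\sqrt m$ is a statement about the partial-sum vector $P_t:=(E_1^{(t)},E_2^{(t)})=\sum_{s\le t}x_s u_s$, where the step contributed by item $s$ is $u_s=(-a_s,b_s)$ when $s$ is assigned to agent $1$ and $-u_s$ when assigned to agent $2$. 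Since $a_s,b_s\in[0,1]$, every admissible step lies in the closed second or fourth quadrant; the adversary chooses the line $\pm u_s$ (i.e.\ the item's values), and the online algorithm chooses only the orientation $x_s$.

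First I would use the orthogonal-increment (Pythagoras) adversary: at each round present an item whose step direction $u_t$ is perpendicular to the current vector $P_{t-1}$ and has $\|u_t\|_2=\Theta(1)$. Then, regardless of the sign $x_t$ the algorithm picks, $\|P_t\|_2^2=\|P_{t-1}\|_2^2+\|u_t\|_2^2$, so after $m$ rounds $\|P_m\|_2^2=\sum_t\|u_t\|_2^2=\Theta(m)$ and hence $\|P_m\|_2=\Theta(\sqrt m)$ \emph{no matter what the algorithm does}. This is the engine that produces the $\sqrt m$ scale: the single orientation bit per round cannot shrink a quantity that the adversary has arranged to grow by Pythagoras.

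It remains to convert ``$P_m$ is far from the origin'' into ``$\max\{E_1,E_2\}$ is large and positive,'' and this is the step I expect to be the main obstacle. A large $\|P_m\|_2$ is consistent with both coordinates being large and \emph{negative} (an allocation both agents prefer, hence no envy), so I must prevent the walk from escaping into the deep third quadrant. Here the quadrant constraint on steps is exactly what I would exploit: admissible steps point only up-left or down-right, which limits how the algorithm can drift the vector into the all-negative region, and the adversary can react to the algorithm's orientation to keep $P_t$ away from it. Two technical points need care: the perpendicular direction to $P_{t-1}$ is an admissible (non-positive-slope) step only when the two coordinates of $P_{t-1}$ share a sign, so the adversary must fall back to a near-perpendicular admissible step in the off-diagonal quadrants while still forcing a positive per-round increase of $\|P_t\|_2^2$; and the final bound should be stated for the worst agent, i.e.\ one shows $\max\{E_1,E_2\}\ge \|P_m\|_2/\sqrt2-o(\sqrt m)$ once $P_m$ is kept out of the third quadrant.

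An alternative route that sidesteps the quadrant bookkeeping is Yao's principle. I would draw the items from a fixed distribution, track the rotated coordinates $S:=E_1+E_2=\sum_t x_t(b_t-a_t)$ and $\Delta:=E_1-E_2=-\sum_t x_t(a_t+b_t)$, and observe that a single decision $x_t$ moves both $S$ and $\Delta$ by fixed amounts of opposite coupling, so the algorithm can align its bit with at most one of them; the other is then a mean-zero random walk with $\Theta(1)$ step variance. Anti-concentration (Khintchine, or a Paley--Zygmund/second-moment bound) forces that walk to magnitude $\Omega(\sqrt m)$ with constant probability, and the identity $2\max\{E_1,E_2\}=S+|\Delta|$ turns this into $\mathbb{E}[\max\{E_1,E_2\}]=\Omega(\sqrt m)$; a fixed bad instance then exists by averaging. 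The crux in either route is the same: ruling out the algorithm's attempt to cancel the forced fluctuation in one coordinate against the other (the $S\approx-|\Delta|$ escape), which is why the sign/quadrant structure of the admissible steps must be used, not merely the magnitude of the increments.
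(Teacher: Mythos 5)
First, a framing point: the paper does not prove this statement at all---Theorem~\ref{thm:benade} is imported verbatim from \citet{benade2018envyvanish} and used as a black box to derive Corollary~\ref{cor:prop1-online-impossible}. So there is no in-paper proof to compare against, and your proposal must be judged on its own merits against the original (substantially more delicate, adaptively correlated) argument of Benad\`e et al. On those merits, the proposal is not a proof: both routes break precisely at the crux you yourself flag at the end, and the specific patches you assert for it are false.

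For Route 1, the claimed fall-back---``a near-perpendicular admissible step in the off-diagonal quadrants while still forcing a positive per-round increase of $\|P_t\|_2^2$''---is provably impossible. If $P_{t-1}=(E_1,E_2)=(-p,q)$ with $p,q\ge \tfrac12$ (quadrant II; symmetrically IV), then every admissible direction $u=(-a,b)$ with $a,b\in[0,1]$ satisfies $\|u\|_2^2=a^2+b^2\le a+b\le 2(pa+qb)=2\,|\langle P_{t-1},u\rangle|$, so the algorithm's sign choice always achieves $\|P_t\|_2\le \|P_{t-1}\|_2$; and the walk reaches such a point after a single $(1,1)$-valued good. Moreover, the algorithm's safe terminal region is not quadrant III but the whole set $\{\max(E_1,E_2)\le \epsilon\sqrt m\}$, which contains points of arbitrarily large norm such as $(-c\sqrt m,\,\tfrac12)$; hence your conversion ``$\max\{E_1,E_2\}\ge \|P_m\|_2/\sqrt2-o(\sqrt m)$ once $P_m$ is kept out of the third quadrant'' is false, and norm growth is simply the wrong potential. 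For Route 2, the key claim that ``the algorithm can align its bit with at most one of'' $S$ and $\Delta$ fails because the bit can serve different coordinates in different \emph{phases}, and the two coordinates have lopsided step sizes. Against any fixed product distribution with $\mathbb{E}|b-a|=\Omega(1)$, the algorithm can give each good to the agent who values it more for the first $m-O(\sqrt m)$ rounds (driving $S=-\sum_t|b_t-a_t|=-\Omega(m)$) and then spend the last $O(\sqrt m)$ rounds repaying $\Delta$ to $O(1)$, at an $S$-cost of $O(\sqrt m)$; by your (correct) identity $2\max\{E_1,E_2\}=S+|\Delta|$, the final envy is \emph{negative}. If instead the distribution forces $|b_t-a_t|\le\theta$ pointwise, then balancing $\Delta$ costs only $\pm\theta$ per round in $S$, giving $S_m+|\Delta_m|=O(\theta\sqrt m+1)=o(\sqrt m)$ for $\theta=o(1)$. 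So no fixed distribution leaves a mean-zero walk for Khintchine/Paley--Zygmund to act on (and those tools do not directly apply to adapted signs anyway). In short, the $\sqrt m$ ``Pythagoras engine'' explains why random allocation is tight at $\Theta(\sqrt m)$, but the lower bound against an algorithm that sees each item's values requires an adaptive adversary whose future values are correlated with the algorithm's history in a way neither sketch supplies---that correlation structure is the content of the theorem, not bookkeeping.
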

Using the above result, we can strengthen the conclusion to an explicit online impossibility for PROP1 (even under bounded item values): for instances with sufficiently many goods, an adaptive adversary can force any deterministic online algorithm to output an allocation that violates PROP1.

\begin{corollary}\label{cor:prop1-online-impossible}
    There exists $m_1\in\mathbb{N}$ such that for every $m\ge m_1$ and every deterministic online allocation algorithm, there is an instance with two agents and $m$ goods satisfying $v_i(g)\in[0,1]$ for all $i$ and $g$ on which the algorithm's final allocation is not PROP1.
\end{corollary}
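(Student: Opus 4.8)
The plan is to connect the value gap appearing in \Cref{thm:benade} directly to the PROP1 condition, exploiting that for $n=2$ the proportional share of agent $i$ is exactly $\tfrac12\bigl(v_i(A_1)+v_i(A_2)\bigr)$ and that item values are bounded by $1$. First I would rewrite PROP1 in a convenient per-agent form. Fix the final allocation $\mathcal{A}=(A_1,A_2)$ and consider agent $1$. If $A_2=\varnothing$ then $A_1=G$ and PROP1 holds trivially for agent $1$; otherwise PROP1 for agent $1$ asks for some $g\in A_2$ with $v_1(A_1\cup\{g\})\ge \tfrac12 v_1(G)$. Since $v_1(G)=v_1(A_1)+v_1(A_2)$, the required inequality rearranges to
\[
v_1(g)\;\ge\;\tfrac12\bigl(v_1(A_2)-v_1(A_1)\bigr),
\]
so the best possible choice is a good in $A_2$ of maximum value to agent $1$, and, when $A_2\neq\varnothing$, PROP1 for agent $1$ holds precisely when $v_1(A_2)-v_1(A_1)\le 2\max_{g\in A_2}v_1(g)$. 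The symmetric statement holds for agent $2$.

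Next I would use the value bound to collapse this to an absolute threshold. Because $v_i(g)\in[0,1]$ for every good $g$, we have $2\max_{g\in A_2}v_1(g)\le 2$, and likewise for agent $2$. Hence if $v_1(A_2)-v_1(A_1)>2$, then no single good in $A_2$ can repair agent $1$'s deficit and PROP1 fails for agent $1$; note also that $v_1(A_2)-v_1(A_1)>2>0$ forces $v_1(A_2)>0$, so $A_2\neq\varnothing$ and the nontrivial branch of PROP1 is indeed the relevant one. The analogous implication holds with the roles of $1$ and $2$ swapped. Thus violating PROP1 reduces to forcing one of the two value gaps $v_1(A_2)-v_1(A_1)$ or $v_2(A_1)-v_2(A_2)$ to exceed the absolute threshold $2$.

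Finally I would invoke \Cref{thm:benade}: for every $m\ge m_0$ and every deterministic online algorithm there is a two-agent instance with $v_i(g)\in[0,1]$ on which the algorithm's output satisfies $\max\{v_1(A_2)-v_1(A_1),\,v_2(A_1)-v_2(A_2)\}\ge c\sqrt{m}$. Choosing $m_1:=\max\{m_0,\ \lfloor 4/c^2\rfloor+1\}$ guarantees $c\sqrt{m}>2$ for every $m\ge m_1$, so at least one of the two gaps exceeds $2$; by the reduction above, PROP1 fails for the corresponding agent, and the produced allocation is not PROP1. The only delicate step is the translation in the first paragraph---correctly converting the ``add one good'' guarantee into an inequality on the value gap and checking the empty-bundle edge case---after which the impossibility is immediate, since the $\sqrt{m}$ lower bound eventually outgrows the bounded per-item contribution of at most $1$.
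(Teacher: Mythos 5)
Your proposal is correct and follows essentially the same route as the paper's proof: invoke \Cref{thm:benade}, choose $m_1$ large enough that $c\sqrt{m}>2$, and observe that since every item is worth at most $1$, a value gap exceeding $2$ cannot be closed by adding a single outside good, so PROP1 fails for the lagging agent. Your reformulation of PROP1 as the equivalence $v_1(A_2)-v_1(A_1)\le 2\max_{g\in A_2}v_1(g)$ and the explicit empty-bundle check are harmless presentational refinements of the paper's direct inequality chain, not a different argument.
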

\begin{proof}
    By \Cref{thm:benade}, there exist constants $c>0$ and $m_0$ such that for every $m\ge m_0$ and every deterministic online algorithm, an adaptive adversary can generate an instance with $v_i(g)\in[0,1]$ for which the algorithm outputs an allocation $A=(A_1,A_2)$ satisfying
    \[
    \max\{v_1(A_2)-v_1(A_1),\; v_2(A_1)-v_2(A_2)\}\ \ge\ c\sqrt{m}.
    \]
    Choose $m_1$ so that $m_1\ge m_0$ and $c\sqrt{m_1}>2$. Then for any $m\ge m_1$, the produced instance satisfies that for at least one agent (say agent~1) we have $v_1(A_2)-v_1(A_1)>2$.
    Since $v_1(g)\le 1$ for every good, for any $g\in A_2$ we have $v_1(A_1\cup\{g\}) \le v_1(A_1)+1 < \tfrac{1}{2}(v_1(A_1)+v_1(A_2)) = v_1(G)/2$,
    so agent~1 violates PROP1.
\end{proof}

\section{Normalization Information} \label{sec:normalization}
In the previous section, we observed that even a weak (but commonly studied) fairness notion---PROP1--- cannot be guaranteed without any information about future goods.
This naturally raises the question: if we are given \emph{some} information about future goods, can we obtain better approximation guarantees, or perhaps even satisfy certain fairness properties in the online setting?

Here, we begin to address this question by considering a very common kind of information studied in the literature (both in the online fair division setting, and in other contexts): \emph{normalization} information. For this, we assume that the agents' valuations over all goods sum to $1$ (in our setting, this is, without loss of generality, equivalent to knowing the sum of valuations for each agent over all goods). This assumption is fairly common in the online fair division literature, and was studied by \citet{zhou2023icml_mms_chores} for MMS, \citet{banerjee2022nash_predictions} and \citet{Gkatzelis_Psomas_Tan_2021} in the divisible goods setting, and \citet{banerjee2023proportionally} in the public goods setting.

In our setting, \citet{zhou2023icml_mms_chores} provided an algorithm for computing a $0.5$-MMS allocation for $n=2$, but showed that no online algorithm can have a positive competitive ratio for MMS for $n \geq 3$.
Here, we generalize and modify the algorithm of \citet[Algorithm 1]{zhou2023icml_mms_chores} to obtain an algorithm that returns an allocation that satisfies EF1 for two agents (and hence $0.5$-MMS for two agents), and PROP1 for any number of agents.\footnote{Note that PROP1 is strictly weaker than (and hence not equivalent to) EF1, even when $n=2$ \cite{aziz2023prop1}.} We further note that the algorithm of \citet{zhou2023icml_mms_chores} does not satisfy PROP1 (and therefore also not EF1) even for the case of two agents.\footnote{To see this, consider an instance with six goods and two agents with identical valuations. Let the agents value the goods at $1/4 - \varepsilon, \varepsilon, 3/16, 3/16, 3/16, 3/16$ (in order of arrival). The algorithm of \citet{zhou2023icml_mms_chores} could assign the first and second good to the first agent and the last four goods to the second agent leading to a PROP1 violation of the first agent, as $1/4 + 3/16 < 1/2$.}

The key difference from the approximate MMS algorithm of
\citet{zhou2023icml_mms_chores} is the rule used to remove agents from further
competition. For each active agent $i$, \Cref{alg:normalization} keeps track of the largest value of a good currently outside $i$'s bundle, denoted $M_i$.  When good $g_t$ arrives, the algorithm computes $x_i = v_i(A_i) + \frac{n-1}{n}\max\{M_i,v_i(g_t)\}$.
If $x_i\ge 1/n$, then the proof shows that, in the final allocation, agent
$i$'s bundle together with one good outside it has value at least $1/n$.  This
is the PROP1 condition.  The coefficient $(n-1)/n$ is chosen so that the
summation argument in the proof works for agents who remain active until the
end.  When $n=2$, the same inequality also implies EF1.  Thus the difference
from \citet{zhou2023icml_mms_chores} is not just the use of active agents, but the use of the largest outside good and the threshold $1/n$, which are tailored to PROP1
and EF1.

For each agent $i$, maintain a bundle $A_i$ and a status set of active agents $N'$.
For each $t\in\{0,1,\dots,m\}$, let $A_i^t$ denote the bundle held by agent $i$ after
the first $t$ goods $g_1,\dots,g_t$ have been allocated (so $A_i^0=\varnothing$ and $A_i^m=A_i$).
Define the outside-max of agent $i$ after $t$ allocations as $M_i^t := \max\{ v_i(g) : g \in \bigcup_{j\neq i} A_j^t \}$.
Thus, at iteration $t$ (when $g_t$ arrives), \Cref{alg:normalization} uses the current bundles $A_i^{t-1}$
and the outside-max $M_i^{t-1}$.

\begin{algorithm}[t]
\caption{EF1 for $n=2$ and PROP1 for $n \geq 2$ given normalization information}
\begin{algorithmic}[1]
\STATE Initialize $A_i\gets\varnothing$ and $M_i \gets 0$ for all $i \in N$, and $N'\gets N$
\FOR{$t=1$ \textbf{to} $m$}
    \IF{$N'=\varnothing$}
        \STATE Assign all remaining goods $\{g_t,\dots,g_m\}$ to agent $1$ and \textbf{return } $\mathcal{A} = (A_1,\dots, A_n)$
    \ENDIF
    \FOR{each $i\in N'$}
        \STATE $x_i \gets v_i(A_i) + \frac{n-1}{n}\cdot \max\{M_i,\, v_i(g_t)\}$
        \IF{$x_i\ge \frac{1}{n}$}
            \STATE $N' \leftarrow N' \setminus \{i\}$
        \ENDIF
    \ENDFOR
    \IF{$N'=\varnothing$}
        \STATE Assign all remaining goods $\{g_t,\dots,g_m\}$ to agent $1$ and \textbf{return} $\mathcal{A} = (A_1,\dots, A_n)$
    \ENDIF
    \STATE Choose $k\in\arg\max_{j\in N'} v_j(g_t)$, breaking ties by smallest $x_j$
    \STATE $A_k \leftarrow A_k \cup \{g_t\}$
    \FORALL{$i\in N\setminus\{k\}$}
        \STATE $M_i \gets \max\{M_i,\, v_i(g_t)\}$
    \ENDFOR
\ENDFOR
\STATE \textbf{return} $\mathcal{A} = (A_1,\dots,A_n)$.
\end{algorithmic}
\label{alg:normalization}
\end{algorithm}
Note that in Algorithm~\ref{alg:normalization}, $A_i$ inside the loop denotes the bundle held by agent~$i$ \emph{before} allocating $g_t$.

We first show that \Cref{alg:normalization} returns an EF1 allocation in the case of two agents.
\begin{restatable}{theorem}{normalizationtwoagents} \label{thm:normalization_n=2_EF1}
    For $n=2$, given normalization information, \Cref{alg:normalization} returns an \emph{EF1} allocation.
\end{restatable}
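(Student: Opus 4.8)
The plan is to reduce the combinatorial EF1 condition to a single scalar inequality per agent, and then verify that inequality directly from the quantities maintained by Algorithm~\ref{alg:normalization}.

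First I would record the following equivalence, which is exactly where normalization and $n=2$ enter. Fix the final allocation $\mathcal{A}=(A_1,A_2)$, an agent $i$, and let $j$ be the other agent. When $A_j\neq\varnothing$, the best EF1 witness for $i$ is obtained by deleting $i$'s most valued good of $A_j$, so EF1 for $i$ is equivalent to $v_i(A_i)\ge v_i(A_j)-M_i$, where $M_i=\max_{g\in A_j}v_i(g)$ is precisely the final outside-max tracked by the algorithm. Since every good is allocated and $v_i(G)=1$, we have $v_i(A_j)=1-v_i(A_i)$, and the condition rearranges to $q_i:=v_i(A_i)+\tfrac12 M_i\ge\tfrac12$. (If $A_j=\varnothing$ then $M_i=0$, $q_i=v_i(A_i)=1$, and EF1 holds trivially.) Hence it suffices to prove $q_i\ge\tfrac12$ for both agents.

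Next I would relate $q_i$ to the algorithm's certificate $x_i$. Writing $q_i^t=v_i(A_i^t)+\tfrac12 M_i^t$, two facts drive everything: (i) $q_i^t$ is non-decreasing in $t$ --- if $g_t$ goes to $i$ then $v_i(A_i)$ grows and $M_i$ is unchanged, while if $g_t$ goes elsewhere then $v_i(A_i)$ is unchanged and $M_i$ only grows; and (ii) once $g_t$ has been assigned, $q_i^t\ge x_i^t$, with equality exactly when $g_t$ is not given to $i$ (a one-line case split on whether $\max\{M_i^{t-1},v_i(g_t)\}$ equals $M_i^{t-1}$ or $v_i(g_t)$). Facts (i)--(ii) dispatch every agent $i$ that is ever removed: at its removal step the certificate satisfies $x_i\ge\tfrac1n=\tfrac12$, so $q_i^t\ge x_i^t\ge\tfrac12$, and monotonicity gives $q_i^m\ge\tfrac12$. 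I would also verify here that this is robust to how $g_t$ is routed at that step (to the other agent, or to agent~$1$ via the ``dump'' line), since both routings preserve (i) and (ii).

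The remaining, and main, difficulty is an agent $i$ that is \emph{never} removed, where the certificate yields only the useless bound $x_i^t<\tfrac12$. The idea is to upper-bound $v_i(A_j)$ using the \emph{other} agent's behavior. Because $i$ stays active, $N'$ is never empty, so nothing is ever dumped; thus every good $h_1,\dots,h_p$ of $A_j$ was given to $j$ while $j$ was active and greedily selected, giving $v_i(h_\ell)\le v_j(h_\ell)$ together with $\sum_{\ell'<\ell}v_j(h_{\ell'})+\tfrac12 v_j(h_\ell)<\tfrac12$ from $x_j<\tfrac12$ at the step $h_\ell$ arrived. Transferring through $v_i\le v_j$ good-by-good and taking $\ell=p$ telescopes to $v_i(A_j)-\tfrac12 v_i(h_p)<\tfrac12$, hence $v_i(A_j)<\tfrac12+\tfrac12 M_i$, i.e.\ $q_i>\tfrac12$. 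Applying this to whichever agent survives (and to both, should neither be removed) finishes the proof. The steps I expect to demand the most care are (a) checking that this ``never removed'' argument is exactly what rules out the otherwise-dangerous configuration in which the agent \emph{not} receiving the last good would fail EF1, and (b) the boundary bookkeeping --- empty bundles, ties broken by smallest $x_j$, and the interaction of the dump line with a removed agent that happens to be agent~$1$.
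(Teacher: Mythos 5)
Your proposal is correct and follows essentially the same route as the paper's proof: the same certificate inequality $v_i(A_i)+\tfrac12 M_i^m\ge\tfrac12$, the same split into removed vs.\ never-removed agents (with the same care about the early-stopping dump to agent~1), and the same use of $x_j<\tfrac12$ at the arrival of the \emph{last} good in $A_j$ combined with the greedy comparison $v_i(g)\le v_j(g)$ and normalization. Your ``telescoping'' step in fact only uses the single inequality at $\ell=p$, which is exactly the paper's argument, so the two proofs coincide in substance.
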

Since EF1 implies $0.5$-MMS when there are two agents \cite{amanatidis2018ef,segal2019democratic}, we also obtain the same $0.5$-MMS guarantee as \citet[Algorithm 1]{zhou2023icml_mms_chores}.
Next, we show that the algorithm returns a PROP1 allocation for any number of agents, thereby also providing the first satisfiable fairness property in this setting for three (or more) agents.
\begin{restatable}{theorem}{normalizationPROP} \label{thm:normalization_PROP1}
    For $n \geq 2$, given normalization information, \Cref{alg:normalization} returns a \emph{PROP1} allocation.
\end{restatable}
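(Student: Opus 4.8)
The plan is to reduce the PROP1 requirement for each agent $i$ to the single inequality $v_i(A_i) + M_i^m \ge \tfrac{1}{n}$, where $M_i^m := \max_{g \notin A_i} v_i(g)$ is the final outside-max (the case $A_i = G$ being trivially PROP1). Choosing the PROP1 witness $g^*$ to attain $M_i^m$ turns $v_i(A_i \cup \{g^*\}) \ge v_i(G)/n = 1/n$ into exactly this inequality. I would then split the agents according to \Cref{alg:normalization}'s status set $N'$ into those \emph{removed} at some iteration and those that stay active throughout, and establish the target bound for each group separately.

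For a removed agent $i$, removed at the first iteration $t$ with $x_i^t \ge \tfrac1n$, I would argue from $x_i^t = v_i(A_i^{t-1}) + \tfrac{n-1}{n}\max\{M_i^{t-1}, v_i(g_t)\} \ge \tfrac1n$ together with monotonicity of $M_i^t$ and of the bundle. The bookkeeping turns on where $g_t$ lands: if the run continues or the early-stop dumps the remainder on some agent other than $i$, then $g_t \notin A_i$, so $M_i^m \ge \max\{M_i^{t-1}, v_i(g_t)\}$ and $v_i(A_i^m) = v_i(A_i^{t-1})$; if instead the early-stop assigns everything to agent $1 = i$, then $v_i(A_i^m) \ge v_i(A_i^{t-1}) + v_i(g_t)$. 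In every case $v_i(A_i^m) + M_i^m \ge v_i(A_i^{t-1}) + \max\{M_i^{t-1}, v_i(g_t)\} \ge x_i^t \ge \tfrac1n$, absorbing the coefficient via $\tfrac{n-1}{n}\le 1$. This mirrors inequality \eqref{eq:ef1-strong} from the $n=2$ proof.

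The main obstacle is the never-removed agent $i$, where the $n=2$ argument does not transfer: in \Cref{thm:normalization_n=2_EF1} one could route all of $G \setminus A_i$ through the single opponent's certificate, but for $n \ge 3$ the lost goods are spread across up to $n-1$ agents, so no single certificate bounds $v_i(G \setminus A_i)$. My plan is to aggregate per-agent certificates. Since $i$ is never removed, the algorithm never early-stops and every good in $A_k$ (for $k \neq i$) was handed out by the $\arg\max$ rule while $i$ was active, giving $v_k(g) \ge v_i(g)$ for all $g \in A_k$. For each $k \neq i$ with $A_k \neq \varnothing$, let $h_k$ be the last good allocated to $k$, so $A_k$ minus its final state differs only by $h_k$; because $k$ is active when it receives $h_k$, its certificate is below $\tfrac1n$, which after lower-bounding the max by $v_k(h_k)$ reads $v_k(A_k \setminus \{h_k\}) + \tfrac{n-1}{n} v_k(h_k) < \tfrac1n$. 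Transferring to $v_i$ via the $\arg\max$ inequality yields $v_i(A_k \setminus \{h_k\}) + \tfrac{n-1}{n} v_i(h_k) < \tfrac1n$ for each such $k$.

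I would then sum these inequalities over the at most $n-1$ agents $k \neq i$. The left-hand sides telescope, since $\sum_k v_i(A_k \setminus \{h_k\}) = v_i(G \setminus A_i) - \sum_k v_i(h_k)$, so the total collapses to $v_i(G \setminus A_i) - \tfrac1n \sum_k v_i(h_k) < \tfrac{n-1}{n}$. Finally, each $h_k$ lies outside $A_i$, so $v_i(h_k) \le M_i^m$ and $\tfrac1n\sum_k v_i(h_k) \le \tfrac{n-1}{n} M_i^m \le M_i^m$; substituting gives $v_i(G \setminus A_i) < \tfrac{n-1}{n} + M_i^m$, i.e.\ $v_i(A_i) + M_i^m > \tfrac1n$, as required. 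The delicate point---and the reason the certificate carries the factor $\tfrac{n-1}{n}$ rather than $1$---is precisely that this coefficient lets the telescoped surplus $\tfrac1n \sum_k v_i(h_k)$ be reabsorbed into the single outside-max term $M_i^m$, while still being large enough that the removal rule does not misfire and spoil the removed-agent case.
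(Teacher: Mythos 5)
Your proposal is correct and takes essentially the same approach as the paper's proof: the same removed/active case split, the same last-good certificates $v_k(A_k\setminus\{h_k\})+\tfrac{n-1}{n}v_k(h_k)<\tfrac{1}{n}$ transferred to $v_i$ via the $\arg\max$ allocation rule, and the same summation over $k\neq i$ followed by absorbing $\tfrac{1}{n}\sum_k v_i(h_k)$ into the outside-max. The only cosmetic differences are that you target the slightly weaker certificate $v_i(A_i)+M_i^m\ge\tfrac{1}{n}$ (the paper retains the coefficient $\tfrac{n-1}{n}$ on $M_i^m$, and both suffice for PROP1) and that you handle empty bundles by restricting the sum rather than introducing dummy goods.
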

Note that \Cref{alg:normalization} can be implemented in $\mathcal{O}(mn)$ time.
Each round computes $x_i$ for active agents, selects the maximizing agent, and updates the outside-max values $M_i$ for all $i\neq k$ in $\mathcal{O}(n)$ time.

We complement these positive results with two impossibility results, demonstrating that no algorithm can achieve better guarantees for the considered fairness notions in this setting with normalized information.
We first consider the stronger fairness notion of EFX and show that no online algorithm can guarantee a positive competitive ratio even for two agents.
\begin{restatable}{theorem}{efxtwoagentsnorm} \label{efx_propx_normalization_n=2}
    For $n \geq 2$, given normalization information, there does not exist any online algorithm with a competitive ratio strictly larger than $0$ with respect to \emph{EFX}.
\end{restatable}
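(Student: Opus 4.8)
The plan is to split into two regimes. For $n\ge 3$ the statement is essentially free: the excerpt already records that $\alpha$-EFX implies $\alpha$-EF1 for every $\alpha$, and that under normalization no positive competitive ratio is achievable for EF1 when $n\ge 3$. Hence any algorithm with a positive EFX ratio for $n\ge 3$ would give a positive EF1 ratio, a contradiction; so all the real content is the case $n=2$, which I attack with an explicit adaptive adversary. The starting point is a convenient reformulation of the violation: for an ordered pair $(i,j)$ with $A_j\neq\varnothing$, $\alpha$-EFX fails precisely when
\[
v_i(A_i) \;<\; \alpha\Big(v_i(A_j)-\min_{g\in A_j} v_i(g)\Big).
\]
Since the right-hand side is at most $\alpha\,v_i(A_j)\le \alpha$, forcing a violation for an arbitrarily small target $\alpha$ is equivalent to driving the \emph{envier's own value below $\alpha$} while the opponent holds at least two goods whose smallest element, measured in the envier's valuation, is negligible (so the EFX witness is forced to retain a large good, exactly the gap between EFX and EF1). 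The goal is therefore to build, against any deterministic algorithm, a normalized instance ($\sum_g v_i(g)=1$) whose final allocation realizes this configuration.

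I would realize it as a short adaptive decision tree. The adversary first releases a symmetric valuable good $g_1$ with $v_1(g_1)=v_2(g_1)$; the algorithm must commit it irrevocably, and by relabeling I may assume it goes to agent~$1$, singling out agent~$2$ as the prospective starved agent. It then interleaves two kinds of items whose values are fixed adaptively: ``decoys'', worth a large amount to agent~$1$ but essentially nothing to agent~$2$, and a single ``dust'' item worth (almost) nothing to agent~$2$. The decoys are the device that makes agent~$2$ covet agent~$1$'s bundle without raising agent~$2$'s own value, while the dust guarantees that, wherever items are placed, the bundle of whichever agent ends up ``rich'' contains at least two goods whose minimum in the poor agent's eyes is a dust; the final normalization-closing item is then forced. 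Tracing the algorithm's binary choice at each node and applying the reformulation above, I would arrange that every leaf lands in the violation regime.

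The step I expect to be the main obstacle is ruling out the algorithm's natural ``rescue'' move: whenever agent~$2$ is being starved, the algorithm can route a coveted good back to it and rebalance. The construction must make every such rescue self-defeating---either the routed good is itself a dust in agent~$2$'s \emph{own} valuation, so agent~$2$'s value stays below $\alpha$, or handing it over creates a two-good, dust-dominated bundle on agent~$2$'s side that turns agent~$1$ into the envier, reproducing the violation with the roles swapped. Reconciling these two failure modes under the hard constraint $\sum_g v_i(g)=1$ (which, in particular, forbids every item from favouring the same agent and caps each single item by $1$, so that naive ``give each item to its higher valuer'' is unavailable) is where the delicate bookkeeping sits; I would close it by choosing the decoy and dust magnitudes as explicit functions of $\alpha$ (all taken $\ll \alpha$) and by a finite case analysis over the algorithm's irrevocable placements, verifying that no continuation keeps both agents simultaneously above the threshold while avoiding a dust-dominated two-good bundle. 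Since the argument fixes $\alpha$ first and then produces a failing instance, letting $\alpha\to 0$ yields that no strictly positive competitive ratio is attainable.
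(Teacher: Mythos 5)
Your reduction for $n\ge 3$ is sound: since $\alpha$-EFX implies $\alpha$-EF1, any positive EFX competitive ratio would contradict the EF1 impossibility under normalization (\Cref{ef1_totalvals_n=3}), whose proof is independent of the present theorem, so there is no circularity. This is a legitimate alternative to the paper, which instead invokes the identical-valuations EFX impossibility (\Cref{thm:EFX_n3_norm_identical}); either route works, and yours is arguably the more economical citation.

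The $n=2$ case, however, has a genuine gap, and it sits exactly where you say the ``delicate bookkeeping'' is: you never exhibit the construction, and the sketch you give is internally inconsistent. Items ``worth a large amount to agent~$1$ but essentially nothing to agent~$2$'' cannot ``make agent~$2$ covet agent~$1$'s bundle''---envy by agent~$2$ requires mass in $A_1$ under $v_2$, and any item the adversary marks as valuable to agent~$2$ can simply be routed to agent~$2$ by the algorithm. That is precisely the ``rescue'' move you flag as the main obstacle and then defer. Defeating it under the constraint $\sum_g v_i(g)=1$ is the entire content of the theorem for $n=2$, and it is not a routine finite case analysis over magnitudes ``$\ll\alpha$'': the adversary must budget each agent's unit of mass over a stream whose \emph{length} itself must grow with the target approximation. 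The paper's proof does this with a two-scale construction, stream values $v_1=2^{-k^2}$ and $v_2=2^{-k^3}$, a stream of up to $2^{k^2}-2$ such goods, \emph{adaptive termination} at the first round $j$ in which the algorithm gives a stream good to agent~$1$, and a final good carrying the residual normalization mass (e.g.\ $1-j\cdot 2^{-k^3}$ for agent~$2$), tuned separately in each branch. The decisive inequalities, such as $\gamma\le (j-1)2^{-k^3}/(1-j\cdot 2^{-k^3})\le 2^{-k^3+k^2}/(1-2^{-k^3+k^2})\to 0$, work only because the scale gap $2^{-k^3}\ll 2^{-k^2}$ beats the worst-case stream length $j\le 2^{k^2}$; a single ``dust'' item plus $\alpha$-dependent constants, as in your sketch, does not provide this, since the algorithm can park all low-$v_2$ goods on agent~$2$'s side one at a time and force the adversary to either run out of mass or hand agent~$2$ a valuable good. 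Until you specify the stream, the adaptive stopping rule, and verify the ratio at every leaf (including the branch where agent~$1$ eventually accepts a stream good, which in the paper requires the separate estimate $1-2^{-k^3}-(j-1)2^{-k^2}>2^{-k^3}$), the $n=2$ case---by your own account the only nontrivial part---remains unproved.
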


Next, we turn to EF1 and show that for three or more agents, no online algorithm can guarantee a positive competitive ratio, thereby complementing our positive result for $n = 2$.
\begin{restatable}{theorem}{efonetotalvalsnthree} \label{ef1_totalvals_n=3}
    For $n \geq 3$, given normalization information, there does not exist any online algorithm with a competitive ratio strictly larger than $0$ with respect to \emph{EF1}.
\end{restatable}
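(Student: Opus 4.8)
The plan is to first reduce the general case $n\ge 3$ to $n=3$, and then build an adaptive adversary for three agents. For the reduction, I would keep agents $4,\dots,n$ completely inert: each such agent $i$ values every good used in the three-agent construction at $0$, and I append one private good $p_i$ with $v_i(p_i)=1$ and $v_{i'}(p_i)=0$ for all $i'\neq i$. Whatever the algorithm does with $p_i$, agent $i$ satisfies EF1 towards everyone (after deleting its single large good $p_i$ from any bundle, the residual $v_i$-value is $0$), and agents $1,2,3$ are indifferent to every $p_i$; hence these extra agents and goods cannot repair a violation occurring among $\{1,2,3\}$, and normalization (total $1$ for every agent) is preserved. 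So it suffices to force a violation for $n=3$.

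For $n=3$ the key difficulty, and the reason this case differs from the $n=2$ algorithm of \Cref{thm:normalization_n=2_EF1}, is the following. Any configuration in which every agent holds at most one good is automatically EF1, so the adversary must force some agent $j$ to hold at least two goods that a value-starved agent $i$ (with $v_i(A_i)\to 0$) still envies after deleting the largest. Under normalization this cannot be achieved with a single dominating item as in \Cref{prop:noinfo_n=2_EF1_unbounded_g}, since every value lies in $[0,1]$; instead I would emulate a ``large'' good by a \emph{run} of many tiny goods, exactly as in the EFX impossibility of \Cref{efx_propx_normalization_n=2}, tuned by a parameter $K\to\infty$. The adversary releases a geometric run of goods valued by two of the agents, and threatens the algorithm: concentrating the run onto one holder (who already owns an earlier ``anchor'' good valued by the starved agent) immediately creates a two-good envied bundle, while spreading it forces the third agent to absorb value. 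Tracking the induced potentials should show that the algorithm cannot simultaneously keep all three agents non-starved and avoid a two-good envied bundle; in every branch one then obtains an agent $i$ with $v_i(A_i)=O(1/K)$ facing an agent whose two goods have total $v_i$-value at least a constant minus its maximum, so the competitive ratio is $O(1/K)\to 0$.

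The crux, and the step I expect to be the main obstacle, is controlling the terminal, budget-completing good that enforces the declared normalization on each branch. Since all per-agent totals must equal $1$ regardless of the algorithm's adaptive choices, each branch ends by releasing a final good whose value to each agent equals its unused budget; the danger is that handing this good to the would-be victim \emph{rescues} it. For $n=2$ such a rescue is always available, which is precisely why EF1 is achievable there. The resolution for $n=3$ is to drive the algorithm into a state with \emph{two} simultaneously value-starved agents whose budgets are almost exhausted by goods held elsewhere, so the single terminal good can restore at most one of them while the other remains envious of the decisive two-good bundle. Making this two-victim state unavoidable, by choosing the run lengths and the anchor goods so that every evasion either over-feeds one victim (exhausting that option) or creates the decisive double, is the delicate part, and is what I would verify most carefully, mirroring the case analysis in \Cref{efx_propx_normalization_n=2} but with the third agent supplying the second, unrescuable victim.
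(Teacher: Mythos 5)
Your proposal has a genuine gap, and it is in the very first step: the padding reduction from $n\ge 3$ to $n=3$ does not work. Adding inert agents $4,\dots,n$ is not neutral, because an $n$-agent algorithm may allocate the \emph{core} goods to those inert agents---moves that have no counterpart in the three-agent game---so the three-agent impossibility simply does not apply to it. Concretely, the inert agents act as free trash bins: if your three-agent core uses a bounded number of goods (at most $n$, say), the algorithm can place every positively-valued good in a distinct bundle, and then every agent holds at most one good that any starved agent values. By your own correct observation, such a configuration is automatically EF1 (for any victim $i$ and holder $j$, deleting the unique valuable good from $A_j$ leaves residual $v_i$-value $0$), and the appended private goods $p_i$ cannot change this since they are $0$-valued to agents $1,2,3$. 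Your claim that the extra agents ``cannot repair a violation occurring among $\{1,2,3\}$'' is true but beside the point: with the extra bins available, the adaptive algorithm ensures the violation never arises. This is precisely why the paper's proof of \Cref{ef1_totalvals_n=3} builds an $n$-\emph{dependent} instance rather than reducing to $n=3$: it releases $n+1$ goods in total, so that the pigeonhole principle forces either an empty-handed agent (an immediate $0$-EF1 violation against any two-good bundle) or a doubled-up bundle among the agents that matter, with no spare bins left over.

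Even setting the reduction aside, your three-agent construction is a plan rather than a proof---you yourself flag the decisive steps (the potential-tracking argument and forcing the unrescuable two-victim state) as unverified. It also overcomplicates the problem: long geometric runs of tiny goods are the technique the paper needs for the EFX impossibility (\Cref{efx_propx_normalization_n=2}), but for EF1 with $n\ge 3$ no runs are required. The paper's adversary uses only two adaptive stages and a hierarchy of scales $\varepsilon/\delta^2 \ll \varepsilon/\delta \ll \varepsilon \ll 1/n$: good $g_2$ is worth $1/n$ to agent~$1$ but $\varepsilon/\delta$ to everyone else, and the final $n-1$ goods (including a budget-completing $g_{n+1}$, matching your correct intuition about normalization) are tuned so that \emph{every} branch of the algorithm's response ends with some agent whose ratio is at most $\max\{1/\delta,\ \varepsilon/(\tfrac1n-(n-1)\varepsilon)\}\to 0$. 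Your observation that the terminal good can rescue at most one starved agent is in the right spirit, but the paper achieves unrescuability through the pigeonhole count over $n-1$ goods and $n$ agents, not through engineering two simultaneous victims; to salvage your approach you would need to replace the reduction with a construction whose size and value scales grow with $n$, at which point you are essentially rebuilding the paper's argument.
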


\subsection{Learning-Augmented Guarantees under Noisy Normalization Information}
\label{sec:noisy-normalization}
In many applications the normalization constants $T_i := v_i(G)$ are obtained from prediction and may be inaccurate.
We model this by assuming the algorithm is given bounds $\underline{T}_i$ and $\overline{T}_i$ such that $0<\underline{T}_i \le T_i \le \overline{T}_i$.
Define the additive uncertainty $\rho_i := \overline{T}_i-\underline{T}_i$ and the multiplicative accuracy
$\kappa_i := \underline{T}_i/\overline{T}_i \in (0,1]$.
We run \Cref{alg:normalization} on the scaled valuations $\tilde v_i(S):=v_i(S)/\underline{T}_i$ for all $S\subseteq G$.
Then, our result is as follows.

\begin{restatable}{theorem}{lanormalization} \label{thm:noisy-normalization}
    Let $\mathcal{A}$ be the allocation returned by \Cref{alg:normalization} when run on the scaled valuations
    $\widetilde{v}_i := v_i/\underline{T}_i$.
    \begin{enumerate}[(i)]
        \item If $n = 2$, for every $i,j \in N$ with $A_j \neq \varnothing$, there exists a good $g \in A_j$ such that $v_i(A_i) \ge v_i(A_j \setminus \{g\}) - \rho_i$.
        \item For any $n \ge 2$ and every $i \in N$, either $A_i = G$ or there exists a good $g \in G \setminus A_i$ such that $v_i(A_i \cup \{g\}) \ge \kappa_i \cdot \frac{v_i(G)}{n}$.
        In particular, letting $\kappa:=\min_{i\in N}\kappa_i$, the allocation is \emph{$\kappa$-PROP1}.
    \end{enumerate}
\end{restatable}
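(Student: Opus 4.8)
The plan is to run \Cref{alg:normalization} on the scaled profiles $\widetilde v_i := v_i/\underline T_i$ and to re-use the two certificate inequalities proved in \Cref{thm:normalization_n=2_EF1,thm:normalization_PROP1}, observing that those proofs never truly needed the exact normalization $v_i(G)=1$ but only the inequality $v_i(G)\ge 1$. Under scaling we have $\widetilde v_i(G)=T_i/\underline T_i\ge 1$ precisely because $\underline T_i\le T_i$, so the same arguments apply verbatim to $\widetilde v_i$. Note first that every decision the algorithm makes---which active agent wins $g_t$ (via $\arg\max_{j\in N'}\widetilde v_j(g_t)$) and when an agent leaves $N'$ (via $x_i^t\ge 1/n$)---depends only on the scaled values and the fixed threshold $1/n$, so the run is well defined from $\underline T_i$ alone.

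First I would establish the scaled certificates. Inspecting the proof of \Cref{thm:normalization_PROP1}, the normalization enters only in the decomposition $v_i(G)=v_i(A_i)+\sum_{j\neq i}\bigl(v_i(A_j\setminus\{h_j\})+v_i(h_j)\bigr)$, where it is combined with \eqref{eq:prop-sum} in the form $1=v_i(G)<\text{RHS}$. Replacing $1$ by $\widetilde v_i(G)\ge 1$ keeps the final step $1\le \widetilde v_i(G)<\text{RHS}$ intact, so the certificate $\widetilde v_i(A_i)+\alpha\,\widetilde M_i^m\ge 1/n$ (with $\alpha=\tfrac{n-1}{n}$) survives unchanged; the same substitution in the proof of \Cref{thm:normalization_n=2_EF1} yields the stronger $n=2$ certificate $\widetilde v_i(A_i)+\tfrac12\widetilde M_i^m\ge \tfrac12$. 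Scaling back by $\underline T_i$ (and using $M_i^m=\underline T_i\,\widetilde M_i^m$, since the outside-max commutes with positive scaling) gives the unscaled forms $v_i(A_i)+\alpha M_i^m\ge \underline T_i/n$ and, for $n=2$, $2v_i(A_i)+M_i^m\ge \underline T_i$.

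For part (ii) I would use $v_i(G)=T_i\le\overline T_i$, so that $\kappa_i\,v_i(G)/n=(\underline T_i/\overline T_i)(T_i/n)\le \underline T_i/n$. If $A_i=G$ we are done; otherwise pick $g^*\in\arg\max_{g\in\cup_{j\neq i}A_j}v_i(g)$, so $v_i(g^*)=M_i^m$, and since $\alpha\le 1$ the certificate gives $v_i(A_i\cup\{g^*\})\ge v_i(A_i)+\alpha M_i^m\ge \underline T_i/n\ge \kappa_i\,v_i(G)/n$, which is exactly $\kappa_i$-PROP1; taking $\kappa=\min_i\kappa_i$ finishes the claim.

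For part (i) the additive slack appears in exactly one rearrangement, which I expect to be the only delicate step. With $g^*\in\arg\max_{g\in A_j}v_i(g)$ (so $\widetilde v_i(g^*)=\widetilde M_i^m$) the $n=2$ certificate rewrites as $\widetilde v_i(A_i)-\widetilde v_i(A_j\setminus\{g^*\})=2\widetilde v_i(A_i)+\widetilde v_i(g^*)-\widetilde v_i(G)\ge 1-T_i/\underline T_i$. Multiplying by $\underline T_i$ gives $v_i(A_i)-v_i(A_j\setminus\{g^*\})\ge \underline T_i-T_i\ge -(\overline T_i-\underline T_i)=-\rho_i$, i.e.\ $v_i(A_i)\ge v_i(A_j\setminus\{g^*\})-\rho_i$, as required. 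The main obstacle is conceptual rather than computational: one must verify that the earlier normalization-based proofs used $v_i(G)=1$ \emph{only} as a lower bound (so that over-estimating the true total never helps the adversary), and then track that the resulting scaled slack $T_i/\underline T_i-1$ translates back to the additive error $T_i-\underline T_i\le\rho_i$ and to the multiplicative factor $\kappa_i$ exactly.
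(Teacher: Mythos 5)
Your proposal is correct and matches the paper's proof essentially step for step: the paper likewise re-derives the certificate $\widetilde v_i(A_i)+\alpha\,\widetilde M_i^m\ge 1/n$ for the scaled run, with the sole normalization-dependent step being exactly your observation that $\widetilde v_i(G)=T_i/\underline T_i\ge 1$ suffices (so $\widetilde v_i(G)-\alpha\ge 1/n$), and then performs the identical back-translations---multiplicative via $T_i\le\overline T_i$ for the $\kappa_i$-PROP1 claim and additive via $T_i-\underline T_i\le\rho_i$ for the $n=2$ EF1 claim. The only trivial quibble is that your ``stronger'' $n=2$ certificate is just the general one specialized to $\alpha=1/n=1/2$, but this does not affect correctness.
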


When the advice is exact (i.e., $\underline{T}_i = \overline{T}_i = T_i$ for all $i$), (i) reduces to \Cref{thm:normalization_n=2_EF1} and (ii) reduces to \Cref{thm:normalization_PROP1}.
With inaccurate advice, the EF1 guarantee in  (i) loses at most $\rho_i$, while the PROP1 guarantee in (ii) keeps a multiplicative factor of
$\kappa_i=\underline{T}_i/\overline{T}_i$.

Unlike PROP1, robustness for EF1 under noisy normalization information is naturally additive: in this advice model, no multiplicative $\alpha$-EF1 guarantee (for any $\alpha>0$) is possible in general; see Appendix~\ref{app:sec_remark_additiveEF1}.

\section{Frequency Predictions} \label{sec:freq}
We next consider a stronger form of information, called \emph{frequency predictions}, inspired by similar notions studied in the online knapsack \cite{im2021onlineknapsack}, online bin packing  \cite{angelopoulos2023binpacking}, and online matching \cite{mehta2007adwords} literature. These types of predictive information have also been explored in the context of revenue management \cite{balseiro2023revenuemgmt}. 
Specifically, we assume access to a predictor that, for each agent and value, tells us the frequency with which this value will appear among the agent's valuations for goods that will arrive.
Formally, for each agent $i \in N$, $V_i$ denotes the multiset containing values that agent $i$ will have for goods that will arrive (we call this the \emph{frequency multiset}).
At a high level, $V_i$ captures \emph{distributional} information about agent $i$'s future utilities while discarding the arrival order.
We emphasize that our guarantees are worst-case with respect to an \emph{adversarial permutation} of the multiset: the predictor reveals which values will appear and how many times, but not \emph{when} they appear nor which concrete item carries which value.
This is directly analogous to semi-online models in scheduling and packing, where knowing the multiset of future sizes but not their order is a standard form of partial information.

From a modeling perspective, frequency predictions are particularly natural in settings where goods fall into a moderate number of \emph{types} (categories).
If each arriving good has a type $\tau(g)\in\mathcal{T}$ and agent values are (approximately) type-dependent, then forecasting the number of arrivals of each type is a standard prediction task.
Given estimated type counts and known (or learned) type utilities $(u_i)_{i\in N}$, one obtains an induced frequency multiset for each agent without predicting the full arrival sequence.

We are interested in understanding whether access to frequency predictions can improve the fairness guarantees in the online setting. 
To this end, we establish a considerably general result: any \emph{share-based} fairness notion \citep{babaioff2024fair} that is achievable in the single-shot setting can also be obtained via an online algorithm with frequency predictions. 
Formally, a share $s$ is a function mapping each frequency multiset $V_i$ and number of agents $n$ to a nonnegative share value $s(V_i,n)\in\mathbb{R}_{\ge 0}$. 
A share is \emph{feasible}, if for every instance, there exists an allocation giving each agent a utility of at least $s(V_i, n)$. Examples of feasible shares (i.e., shown to be achievable in the single-shot setting) include \emph{round robin share} (RRS) \cite{conitzer2017fairpublic,gourves2021pickingsequenceshare}, \emph{minimum EFX share} (MXS) \cite{caragiannis2025newfairnessconceptsallocating},  or $\left(\frac{3}{4} + \frac{3}{3836} \right)$-MMS \cite{akrami2024mms}.
Then, our result is as follows.

For $t\in[m]$ and agent $i$, let $V_i^t$ be the multiset of $i$'s values for the remaining goods $\{g_t,\dots,g_m\}$.  The corresponding ordered, or identical ordering (\ido), valuation $v_i^{t,\ido}$ is obtained by sorting $V_i^t$ in non-increasing order and assigning the $k$-th largest
value to good $g_{t+k-1}$.  Thus, in the valuation profile
$v^{t,\ido}=(v_1^{t,\ido},\dots,v_n^{t,\ido})$,
all agents weakly rank the remaining goods in the common order
$g_t,g_{t+1},\dots,g_m$.

\begin{restatable}{theorem}{feasibleshare} \label{thm:freq_share}
    Let $s$ be a feasible share in the single-shot fair division setting. Then, in the online fair division setting, there exists an online algorithm using frequency predictions that guarantees each agent a bundle value of at least $s(V_i, n)$.
\end{restatable}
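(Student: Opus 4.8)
The plan is to convert the offline feasibility of the share into an online \emph{extendability invariant}. Since each $V_i$ (and hence the target $s_i := s(V_i,n)$) is known up front, the goal reduces to assigning each arriving good irrevocably so that every agent $i$ ends with $v_i(A_i)\ge s_i$. The central object I would maintain is the following invariant: after goods $g_1,\dots,g_t$ have been allocated into partial bundles $(A_1^t,\dots,A_n^t)$, the partial allocation is \emph{safe}, meaning that for \emph{every} way the remaining goods could realize (i.e.\ every instance on $m-t$ fresh goods whose value multiset for agent $i$ equals the residual $R_i := V_i\ominus\{v_i(g_1),\dots,v_i(g_t)\}$), there is a completion giving each agent $i$ total value at least $s_i$. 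The base case $t=0$ is exactly feasibility of $s$: here $R_i=V_i$, and ``every residual instance is completable to the shares'' is precisely the statement that every instance with marginals $V_i$ admits an allocation meeting $s(V_i,n)$. If safety is preserved by a suitable choice of recipient at each step, then at $t=m$ the (empty) completion certifies $v_i(A_i)\ge s_i$ for all $i$, proving the theorem.

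The inductive step is where the algorithm is actually defined. When $g_{t+1}$ arrives, its full profile $\mathbf v=(v_1,\dots,v_n)$ is revealed, with $v_i\in R_i$, and I must pick a recipient $k$ keeping the updated allocation safe. The natural exchange idea is: take any feasible completion of a residual instance that \emph{begins} with a good carrying profile $\mathbf v$ (such a completion exists by safety, since prepending $\mathbf v$ keeps the marginals equal to $R_i$), and let $k$ be the agent to whom that good is assigned in the completion; deleting that good shows the residual demands can still be met, so the position obtained by giving $g_{t+1}$ to $k$ is completable for \emph{that} future. Formally I would set up a bipartite (flow) structure between the still-unrealized value slots of each $R_i$ and the agents' residual demands $d_i := \max\{0,\,s_i - v_i(A_i^t)\}$, and use an augmenting-path / Hall-type argument to select a recipient that preserves the relevant deficiency condition.

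The main obstacle is precisely that safety quantifies over \emph{all} adaptive continuations, whereas the exchange above only produces, for each individual future, \emph{some} good recipient; distinct futures may point to distinct agents, and the adversary chooses the future after seeing our choice. The crux is therefore to upgrade ``for each future there is a recipient'' to ``there is a single recipient good for every future''. This is where I expect to lean hardest on the fact that the share depends \emph{only} on the marginal multiset $V_i$: feasibility then holds simultaneously for all correlations, supplying a uniform family of feasible completions to route the arrived good through, and I would encode the choice of $k$ as a maximum matching / max-flow whose feasibility is guaranteed by a Hall-type condition that the invariant carries forward. A convenient way to organize this is the standard reduction to the identically-ordered (\ido) instance: precompute a feasible allocation of the sorted instance determined by the multisets $V_i$, and assign each arriving good so as to fill the appropriate rank-slot, transporting the \ido{} guarantee to the realized, arbitrarily correlated instance. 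Verifying that this slot-filling never gets stuck---again an augmenting-path check---is the technical heart, while the remaining bookkeeping (residual multisets, demands, and the final certificate) is routine.
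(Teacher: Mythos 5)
Your plan correctly isolates the central difficulty---upgrading ``for each individual future there exists a safe recipient'' to ``there is a single recipient that is safe against every future chosen adaptively''---but it never resolves it, and that resolution is precisely the content of the theorem. The safety invariant's base case is indeed share feasibility, but the inductive step is only gestured at: your exchange argument produces a future-dependent recipient $k_F$, and the promised ``Hall-type condition that the invariant carries forward'' is never stated, let alone verified. Worse, the proposed bipartite flow between unrealized value slots of each $R_i$ and residual demands $d_i$ does not model the actual constraint structure: assigning the arriving good to one agent simultaneously consumes one slot from \emph{every} agent's residual multiset (the slot matching $v_i(g_{t+1})$ for each $i$), and the adversary additionally controls how the remaining slots of the $n$ multisets are correlated into goods. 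This is not a bipartite matching feasibility question, and no deficiency condition is exhibited. Your final pivot to the \ido{} reduction points in the right direction, but ``fill the appropriate rank-slot'' is ill-defined as stated---the arriving good generally occupies a \emph{different} rank in each agent's residual multiset---and the ``augmenting-path check that slot-filling never gets stuck,'' which you yourself call the technical heart, is exactly what is missing.

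The paper fills this gap with a concrete mechanism rather than a flow argument. It first uses feasibility of $s$ on the \ido{} instance induced by $(V_i)_{i\in N}$ to obtain an offline allocation, and converts it into a \emph{picking sequence} $\pi^1$ (well-defined because in an identically ordered instance the $k$-th pick deterministically takes $g_k$). Online, upon each arrival $g_t$, \Cref{alg:frequency} simulates the current picking sequence on a hybrid profile in which $g_t$ carries its true revealed values and all future goods carry the \ido{}-sorted residual values; the good goes to whichever agent picks it in this simulation, and that agent's turn is deleted. Correctness rests on two ingredients your sketch lacks: \Cref{lem:freqpred_mms_lem}, which shows that for \emph{any} picking sequence and \emph{any} profile consistent with the residual multisets, an agent's pick at her $k$-th overall turn is worth at least the $k$-th largest element of her multiset (so every realization pointwise dominates the \ido{} realization---this is the substitute for your quantification over all futures), and the inductive potential of \Cref{lem:freq_algo}, $v_i(A_i^t)$ plus the agent's residual \ido{} promise, which is shown never to decrease and certifies $v_i(A_i)\ge s(V_i,n)$ at termination. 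Without these, your proposal identifies the right obstacle and the right reduction but remains a plan rather than a proof.
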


The proof of Theorem~\ref{thm:freq_share} is constructive.  
First, compute an offline allocation that achieves the share $s$ for the ordered instance induced by the frequency multisets.  
Since all agents rank the goods in the same order in this instance, this allocation determines a picking sequence $\pi^1$.
Algorithm~\ref{alg:frequency} then uses this picking sequence online.  At time $t$, it keeps the true values of the current good $g_t$, fills the remaining goods according to the ordered residual multisets, simulates the remaining picking sequence, and assigns $g_t$ to the agent who would pick it in that simulation.

\begin{example}
Suppose $n=2$, $m=3$, $V_1=\{7,4,1\}$, and $V_2=\{6,5,1\}$.
The \ido{} instance on goods $g_1,g_2,g_3$ has $v_1^{1,\ido}(g_1,g_2,g_3)=(7,4,1)$ and $v_2^{1,\ido}(g_1,g_2,g_3)=(6,5,1)$,
so both agents rank $g_1$ above $g_2$ above $g_3$.  
If an offline share-achieving allocation on this ordered instance is
$X_1=\{g_1,g_3\}$ and $X_2=\{g_2\}$, then the induced picking sequence is
$\pi^1=(1,2,1)$.  
Algorithm~\ref{alg:frequency} does not assume that the actual arrival sequence is ordered.  After
each arrival, it removes the realized values from the residual multisets and runs the remaining picking sequence on the updated valuation profile.
\end{example}

Thus, whenever the offline share $s$ is achievable in polynomial time, the induced online algorithm is also polynomial time (with per-round overhead polynomial in $n$).

As discussed before, with normalization information, no online algorithm can achieve better than $0.5$-MMS for $n = 2$ or even any positive approximation to MMS for $n > 2$. In contrast, frequency predictions allow us to leverage the best-known MMS approximation in the single-shot setting for any number of agents, giving us the following result.

\begin{corollary} \label{cor:freq_mms}
   For $n \geq 2$, given frequency predictions, there exists an online algorithm that returns a \emph{$(\frac{3}{4} + \frac{3}{3836})$-MMS} allocation.
\end{corollary}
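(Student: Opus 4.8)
The plan is to obtain this as a direct instantiation of \Cref{thm:freq_share}; the only thing I need to supply is a share function that certifies the desired bound and to confirm that it fits the share-based framework. I would define, for each agent, the share $s(V_i, n) := \left(\frac{3}{4} + \frac{3}{3836}\right)\cdot \mms_i$, where $\mms_i$ is the maximin share of agent $i$. To see that this is a legitimate share in the sense required by \Cref{thm:freq_share}, I would first observe that $\mms_i$ is a function of the frequency multiset $V_i$ and the number of agents $n$ alone: since $v_i$ is additive, the quantity $\max_{\mathcal{X}\in\Pi(G)}\min_{j\in N} v_i(X_j)$ depends only on how the multiset of values $\{v_i(g) : g\in G\} = V_i$ can be split into $n$ parts, and not on the labels or arrival order of the goods. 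Hence $s(V_i,n)$ is well-defined as a function of $(V_i,n)$.

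Next, I would invoke the existence result of \citet{akrami2024mms}, which guarantees that in the single-shot setting every additive instance admits an allocation in which each agent $i$ receives value at least $\left(\frac{3}{4} + \frac{3}{3836}\right)\cdot \mms_i$. This is precisely the statement that the share $s$ defined above is \emph{feasible} in the single-shot setting. The main (and essentially only) conceptual point to verify is exactly this step: that the maximin share depends solely on the value multiset, so that the offline existence guarantee transfers verbatim to the frequency-prediction model, in which only $V_i$ (and not the realized per-good values in order) is revealed. Everything else is a mechanical appeal to the meta-algorithm.

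Finally, I would apply \Cref{thm:freq_share} to this feasible share to obtain an online algorithm using frequency predictions that returns an allocation $\mathcal{A} = (A_1,\dots,A_n)$ with $v_i(A_i) \ge s(V_i, n) = \left(\frac{3}{4} + \frac{3}{3836}\right)\cdot \mms_i$ for every agent $i\in N$. Since the realized instance has value multiset exactly $V_i$ for each agent, the $\mms_i$ appearing here coincides with the maximin share of the true realized instance, so $\mathcal{A}$ is $\left(\frac{3}{4} + \frac{3}{3836}\right)$-MMS for every $n\ge 2$, as claimed.
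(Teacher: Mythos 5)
Your proposal is correct and matches the paper's (implicit) proof exactly: the paper treats this as an immediate instantiation of \Cref{thm:freq_share} with the share $s(V_i,n)=\left(\tfrac{3}{4}+\tfrac{3}{3836}\right)\cdot \mms_i$, whose feasibility is exactly the result of \citet{akrami2024mms} listed among the examples of feasible shares. Your additional observation that $\mms_i$ depends only on the multiset $V_i$ and $n$ (so the share is well-defined and the realized instance's maximin share coincides with the predicted one) is the right thing to check and is implicit in the paper's framework.
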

Interestingly, we can further extend this framework to show that EFX is achievable for two agents with frequency predictions, by leveraging the leximin++ cut-and-choose procedure of \citet[Algorithms 1 and 2]{plaut2018efx}. Our result is as follows.
\begin{restatable}{theorem}{freqefx}\label{freq:efx}
    For $n = 2$, given frequency predictions, there exists an online algorithm that always returns an \emph{EFX} allocation.
\end{restatable}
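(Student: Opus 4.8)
The plan is to instantiate the frequency-prediction framework of \Cref{thm:freq_share} with the offline EFX guarantee; but since EFX is a pairwise (non-share-based) notion, I cannot invoke \Cref{thm:freq_share} as a black box. Instead I will run \Cref{alg:frequency} on a picking sequence derived from an offline EFX allocation and transfer EFX via a strengthening of \Cref{lem:freq_algo}. Concretely, first I would build the \ido{} profile $\mathbf{v}^{1,\ido}$ from the two frequency multisets $V_1,V_2$, and compute on it an EFX allocation $(X_1,X_2)$ using the leximin\texttt{++} cut-and-choose of \citet{plaut2018efx}: agent~$1$ (the \emph{cutter}) forms a leximin\texttt{++} partition with respect to $v_1^{1,\ido}$, which is EFX for agent~$1$ in \emph{both} directions, and agent~$2$ (the \emph{chooser}) takes the bundle $X_2$ maximizing $v_2^{1,\ido}$, leaving $X_1$ to agent~$1$. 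Since the profile is identically ordered, $(X_1,X_2)$ is realizable by a picking sequence $\pi$ (set the $k$-th entry to the agent holding $g_k$), exactly as in the proof of \Cref{thm:freq_share}. I would then run \Cref{alg:frequency} with $\pi$ to obtain the online allocation $(A_1,A_2)$.

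The chooser's guarantee is immediate from sum domination: since $v_2(A_2)\ge v_2^{1,\ido}(X_2)\ge v_2^{1,\ido}(X_1)\ge v_2(A_1)$ --- the middle inequality is the chooser's optimality on the \ido{} instance, and the last uses $v_2(A_1)=v_2(G)-v_2(A_2)$ together with $v_2^{1,\ido}(X_1)=v_2^{1,\ido}(G)-v_2^{1,\ido}(X_2)$ --- agent~$2$ is envy-free, hence EFX. All the work is on the cutter. Here I would strengthen \Cref{lem:freq_algo} to a \emph{rank-wise} (majorization) domination: for each agent~$i$, the multiset of $v_i$-values of $A_i$, sorted in non-increasing order, pointwise dominates the sorted \ido{} values of $X_i$ (and $|A_i|=|X_i|$). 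This is the natural per-turn refinement of \Cref{lem:freqpred_mms_lem} --- agent~$i$'s $j$-th received good is worth at least its $j$-th largest \ido{} value --- lifted through the same inductive argument used for \Cref{lem:freq_algo}.

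Given rank-domination, the cutter's EFX follows cleanly. Because $A_1$ and $X_1$ induce equal-size sub-multisets of the common multiset $V_1$ of agent~$1$'s values, and $A_1$ rank-dominates $X_1$, a threshold-counting argument shows the complements satisfy the reverse domination: $A_2$ is pointwise rank-dominated by $X_2$ under $v_1$, so $\sum\mathrm{top}\text{-}r(A_2)\le\sum\mathrm{top}\text{-}r(X_2)$ for every $r$. Taking $r=|A_2|-1$ and using the cutter's \ido{} EFX,
\[
\max_{g\in A_2} v_1\!\left(A_2\setminus\{g\}\right)
= v_1(A_2)-\min_{g\in A_2}v_1(g)
\le v_1^{1,\ido}(X_2)-\min_{g\in X_2}v_1^{1,\ido}(g)
\le v_1^{1,\ido}(X_1)\le v_1(A_1),
\]
which is exactly agent~$1$'s EFX condition. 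Combining the two sides yields an EFX allocation.

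The hard part will be the rank-wise strengthening of \Cref{lem:freq_algo}: that lemma only certifies the value \emph{sum} $v_i(A_i)\ge v_i^{1,\ido}(X_i)$ together with cardinalities, and sum-plus-cardinality alone is genuinely insufficient for the cutter --- an adversarial pairing of the $v_1,v_2$ values could a priori try to concentrate a large $v_1$-good inside $A_2$, and ruling this out is precisely what majorization buys. Establishing that the adaptive algorithm preserves the full sorted-value profile, by redoing the induction of \Cref{lem:freq_algo} at the level of sorted value vectors rather than their sums, is the crux; the complementation step and the appeal to \citet{plaut2018efx}'s two-agent EFX guarantee are then routine.
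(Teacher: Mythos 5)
Your proposal is correct, and it reaches the result by a genuinely different route from the paper on the one step that matters --- the cutter's guarantee. The paper keeps \Cref{lem:freq_algo} exactly as stated (value \emph{sum} plus cardinality) and handles the cutter by a contradiction/exchange argument: assuming an EFX violation with witness $g^*\in\arg\min_{h\in A_2}v_1(h)$, it uses a value-preserving bijection between $v_1$ and $v_1^{1,\ido}$ (both consistent with $V_1$) to build an alternative partition $(B_1,B_2)$ of the \ido{} instance whose value difference $|v_1^{1,\ido}(B_1)-v_1^{1,\ido}(B_2)|$ is strictly smaller than that of the cutter's partition --- or, in the boundary case, ties it while $|B_1|=|A_1'|+1$ beats the cardinality tie-break --- contradicting the cut's optimality. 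Consequently the paper needs only a difference-minimizing cut with an explicit tie-break (a weakening of leximin\texttt{++}), and never strengthens the transfer lemma. You instead upgrade \Cref{lem:freq_algo} to sorted-vector (majorization) domination, complement-flip it inside the fixed ground multiset $V_1$, and invoke the two-sided EFX property of the leximin\texttt{++} partition from \citet{plaut2018efx} as a black box. I verified that your ``crux'' is in fact provable by the plan you sketch: the per-turn bound in \Cref{lem:freqpred_mms_lem} already gives that agent $i$'s pick at global turn $k_j$ is worth at least $\mathsf{top}_i(k_j)$, which is exactly the \ido{} value at that turn, yielding rank-wise domination in each static comparison; the induction of \Cref{lem:freq_algo} then lifts to sorted vectors using two standard counting facts (rank domination of equal-size multisets is equivalent to threshold-count domination, hence it is preserved under union with a common multiset and reversed under complementation within a fixed ground multiset), with the $i^*$ case handled by the same regrouping of $g_{\ell+1}$ as in the paper. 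Your chooser argument is essentially the paper's (you even get full envy-freeness for agent~$2$, as does the paper), and by not normalizing you legitimately avoid the paper's $T_i=0$ edge case, since EFX and the cut-and-choose step are scale-free. The trade-off: the paper's route is more economical (no new lemma, only sum and cardinality), while yours establishes a strictly stronger and reusable invariant --- the online algorithm preserves the entire sorted value profile of the offline picking-sequence allocation --- at the price of redoing the induction and relying on the leximin\texttt{++} cutter guarantee rather than deriving the needed optimality contradiction by hand.
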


Theorem~\ref{freq:efx} does not settle EF1 for $n\ge 3$.  The theorem applies to share guarantees: for each agent $i$, the target is a number depending only on $(V_i,n)$, and the conclusion is a lower bound on $v_i(A_i)$.  EF1 is different, it compares agent $i$'s bundle with each other agent's final bundle after the removal of one good.  This pairwise comparison is not captured by the share guarantees used in Theorem~\ref{freq:efx}.

Together with the results of \Cref{sec:normalization}, this raises a natural follow-up question: does an EF1 allocation exist for $n \geq 3$ given frequency predictions?
Despite considerable effort, this question remains elusive and unresolved. We conjecture that no online algorithm can guarantee EF1 in this setting, even with frequency predictions, though achieving a positive competitive ratio may still be possible. We leave this as an open problem.
To end this section, we complement the positive result of \Cref{freq:efx} by showing a strong impossibility result with respect to EFX for any higher values of $n$, even with frequency predictions.

\begin{restatable}{theorem}{freqEFXnthree} \label{thm:freq_EFX_n=3}
    For $n \geq 3$, given frequency predictions, there does not exist any online algorithm with a competitive ratio strictly larger than $0$ with respect to \emph{EFX}.
\end{restatable}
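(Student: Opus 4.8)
The plan is to construct, for every $n\ge 3$, a single fixed family of frequency multisets $(V_i)_{i\in N}$ together with an \emph{adaptive} adversary that forces any deterministic online algorithm into a final allocation witnessing an EFX violation of unbounded ratio. The crucial difference from \Cref{efx_propx_normalization_n=2} (the $n=2$ normalization impossibility) is that frequency predictions hand the algorithm each multiset $V_i$ \emph{in advance}; the adversary can therefore no longer branch into instances carrying different future values, and every adaptive branch must realize \emph{exactly} the committed multisets. The adversary's only surviving levers are the \emph{arrival order} and the \emph{cross-agent coupling} of values into concrete profiles $(v_1(g),\dots,v_n(g))$. I would exploit this by making a contested pair of agents (say $1,2$) carry \emph{identical} multisets $V_1=V_2$, so that knowledge of $V_1=V_2$ cannot tell the algorithm which of the two a decisive future good will favor: symmetry is what neutralizes the extra information, and irrevocability is what converts an early commitment into a permanent handicap.

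The core gadget uses one huge value $H$ (and, as explained below, a geometric ladder of such scales). After revealing a good $a$ with profile $(1,\epsilon,0,\dots)$ that the algorithm naturally assigns to agent~$1$, the adversary reveals a \emph{pivot} good $p$ valued $H$ by \emph{both} agents $1$ and $2$ (and $0$ by the rest). Only one agent can receive $p$, so the other becomes permanently poor at this scale; whichever way the algorithm turns, the poor agent $q$ sees the winner holding a two-good bundle $\{a,p\}$ (or, if $p$ is diverted onto one of the symmetric absorber agents $3,\dots,n$, that absorber's own dedicated good together with $p$), whose $q$-smallest good, when removed, still leaves value $\ge H$. Since $v_q(A_q)$ is tiny relative to $H$, the ratio $v_q(A_q)/v_q(A_{\mathrm{win}}\setminus\{g\})\le v_q(A_q)/H$ tends to $0$. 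The subtlety, however, is that the very symmetry $V_1=V_2$ which defeats foreknowledge also hands the poor agent its \emph{own} copy of $H$ in $V_q$, letting it recover. To block recovery I would nest the contest into a \emph{geometric ladder} of scales $H\gg H'\gg H''\gg\cdots$ (mirroring the $2^{-k^3},2^{-k^2}$ gaps used in \Cref{efx_propx_normalization_n=2}): each time the loser's recovery good is about to arrive, it is itself re-revealed as a symmetric contest at the next (smaller) scale and pushed away, with the final scale driving the competitive ratio to $0$ as the ladder parameter $k\to\infty$. The absorber agents $3,\dots,n$ are kept occupied by private goods so that any contested good diverted onto them still arrives alongside an already-held good, again yielding an envied two-good bundle rather than an escape.

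The main obstacle is reconciling the two opposing requirements \emph{within one fixed multiset}: defeating the algorithm's foreknowledge demands symmetry ($V_1=V_2$), while preventing the symmetric loser from recovering its own high value demands the diversion/ladder, and both must be realized by some legal coupling and order in \emph{every} adaptive branch. In particular I must ensure (i) the envied bundle always contains at least two goods the loser values positively—so the witness is a genuine EFX (rather than vacuous single-good) violation—and (ii) the multisets are exactly exhausted no matter how the algorithm routes each good. I would handle this by choosing the ladder scales so that the chain of forced inequalities holds simultaneously in all branches, and then closing with a short case analysis on the placement of each pivot (to the contest-winner, to the current loser, or to an absorber), showing each placement yields an unbounded-ratio EFX violation—exactly the style of the case splits in \Cref{ef1_totalvals_n=3}. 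The remaining checks, that each branch realizes the committed multisets and that the stated value inequalities hold for large $k$, are routine.
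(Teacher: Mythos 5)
Your strategic reading of the model is correct and matches the paper's blueprint for \Cref{thm:freq_EFX_n=3}: commit a single identical multiset to every agent so that the prediction carries no distinguishing information, let the adversary's only levers be arrival order and cross-agent coupling, keep absorber agents in play, and close with a case analysis plus pigeonhole. (The paper uses $n+1$ goods with $V_i=\{K^2,\dots,K^2,K,\varepsilon,\varepsilon\}$ for all $i$, adaptively coupling the unique $K$-good's recipient to be the one agent who values $g_2$ at $K^2$.) However, your central mechanism rests on a misreading of the information model. The multiset $V_q$ records agent $q$'s values for \emph{all arriving goods}, irrespective of who receives them; since your pivot $p$ is valued $H$ by \emph{both} contested agents, the good $p$ itself realizes both agents' copies of $H$, and the loser retains no unrealized $H$ with which to recover. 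The ladder is therefore built to solve a problem your own gadget does not have, and as described it is not executable in this model: committed values cannot be ``re-revealed at the next (smaller) scale'' (only the coupling and order are adaptive, never the values), and the adversary cannot ``push away'' a recovery good, because allocation is entirely the algorithm's choice. The only coherent recovery threat is at the lower scale (the $1$ versus $\epsilon$ mismatch created by $a$), and that recovery is harmless against an $H$-scale witness, so the ladder is also unnecessary: the paper needs only a fixed instance with three scales $\varepsilon\ll K\ll K^2$ and lets $K\to\infty$.

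The checks you defer as ``routine'' are where the proof actually lives, and your gadget as specified fails them. Because your pivot and private absorber goods carry zeros for the non-contested agents, there are branches in which the algorithm parks high goods as singletons and ends up exactly EFX. Concretely, for $n=3$ take $a=(1,\epsilon,0)$, $p=(H,H,0)$, the equalizer $r=(\epsilon,1,0)$ forced by $V_1=V_2$, and the private good $d_3=(0,0,D)$: the algorithm plays $a\to 3$, $p\to 1$, $r\to 2$, $d_3\to 3$, giving $A_1=\{p\}$, $A_2=\{r\}$, $A_3=\{a,d_3\}$, which satisfies EFX (envy toward the singleton $A_1$ is vacuous after removal, and $A_3$ is worth at most $\epsilon$ to agents $1$ and $2$). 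The paper forecloses precisely this escape by design: every one of its $n+1$ goods is positively valued by \emph{every} agent, so any empty-handed agent certifies a $0$-EFX violation against any two-good bundle, and the count $m=n+1$ pigeonholes a two-good bundle into existence; the three scales then force every routing into either an empty agent, a ratio $\varepsilon/K$, or a ratio $K/K^2$. Two smaller points: your opening good $a$ has an asymmetric profile, so ``the algorithm naturally assigns it to agent~$1$'' is not without loss of generality (the paper's first good is valued $K$ by everyone precisely so that relabeling is legitimate, with differentiation introduced only afterward via adaptive coupling); and your requirement (i) over-constrains, since under the paper's definition the removed good in an EFX witness need not be positively valued---the quantifier ranges over all $g\in A_j$, so removing a zero-valued good is a valid witness.
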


These negative results highlight a significant gap that remains between the offline (single-shot) and online settings: in the former, while an exact EFX allocation is guaranteed to exist for $n=3$ \cite{chaudhury2024efxthree} and a $0.618$-EFX allocation exists in general for all $n$ \citep{amanatidis2020efx}, no comparable guarantees hold in the online setting, even with frequency predictions.

\subsection{Learning-Augmented Guarantees under Noisy Frequency Predictions}
\label{subsec:noisy-frequency-predictions}

We now consider a learning-augmented variant of frequency predictions: instead of the true frequency multisets
$(V_i)_{i\in N}$, the algorithm receives predicted multisets $(\widehat V_i)_{i\in N}$.
Since each $\widehat V_i$ is an unordered multiset, an online algorithm must \emph{instantiate} $\widehat V_i$ online
against the realized arrival order, which induces a virtual valuation $\widetilde v_i$ and an agent-specific relative error
$\varepsilon_i$ (formalized in Appendix~\ref{app:noisy-freq-instantiation}).
Our share guarantee degrades gracefully as a function of this error.

\begin{restatable}{theorem}{lafreq}\label{thm:la_freq}
    Let $s$ be a feasible share.
    There exists an online algorithm that, given the predicted multisets $(\widehat V_i)_{i\in N}$, outputs an allocation
    $\mathcal{A}$ such that for every agent $i\in N$, $v_i(A_i)\ \ge\ (1-\varepsilon_i)\cdot s(\widehat V_i,n)$,
    where $\varepsilon_i\in[0,1]$ is the relative instantiation error induced by the algorithm's online instantiation rule.
\end{restatable}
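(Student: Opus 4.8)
The plan is to reduce the noisy case to the exact frequency-prediction result (\Cref{thm:freq_share}) by treating the predicted multisets $(\widehat V_i)_{i\in N}$ as though they were genuine, and then to pay only an additive penalty of $\eta_i$ when passing from the \emph{virtual} valuations back to the true ones. Concretely, the algorithm first commits to an online instantiation rule: upon the arrival of $g_t$ it observes the true values $\{v_i(g_t)\}_{i\in N}$ and selects a value $\widetilde v_i(g_t)\in\widehat V_i^{\,t}$, deleting that occurrence from the pool of unused predictions. By construction the resulting virtual valuation $\widetilde v_i$ is additive and its value multiset is exactly $\widehat V_i$, so $\widetilde v_i$ is \emph{consistent} with $\widehat V_i$ in the sense required by \Cref{lem:freq_algo}. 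The key conceptual point is that the whole pipeline of \Cref{thm:freq_share} never inspected the arrival order---only consistency with the frequency multiset---so it applies verbatim to the virtual instance.

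First I would invoke feasibility of $s$ on the \ido{} instance induced by the predictions. Since $s$ is feasible, there is an offline allocation $\mathcal{X}=(X_1,\dots,X_n)$ on the instance with multisets $(\widehat V_i)_{i\in N}$ satisfying $v_i^{1,\ido}(X_i)\ge s(\widehat V_i,n)$ for every $i$, where $v_i^{1,\ido}$ is the \ido{} valuation consistent with $\widehat V_i$. Exactly as in the proof of \Cref{thm:freq_share}, I convert $\mathcal{X}$ into a picking sequence $\pi$ with $\mathcal{A}^{\pi,\mathbf{v}^{1,\ido}}=\mathcal{X}$, using that under an \ido{} profile the $k$-th pick is deterministically $g_k$ (with lowest-index tie-breaking).

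Next I would run \Cref{alg:frequency} on the true online stream, but feeding it the predicted multisets $(\widehat V_i)_{i\in N}$, the picking sequence $\pi$, and the virtual values $\widetilde v_i(g_t)$ in place of the observed true values. Because $\widetilde v_i$ is consistent with $\widehat V_i$, \Cref{lem:freq_algo} applies unchanged to the virtual instance and yields, for every agent $i$,
\[
\widetilde v_i(A_i)\ \ge\ v_i^{1,\ido}\!\left(A_i^{\pi,\mathbf{v}^{1,\ido}}\right)\ =\ v_i^{1,\ido}(X_i)\ \ge\ s(\widehat V_i,n).
\]
The only remaining step is to transfer this from $\widetilde v_i$ to $v_i$. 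By additivity and the triangle inequality,
\[
\bigl|v_i(A_i)-\widetilde v_i(A_i)\bigr|
\ \le\ \sum_{g\in A_i}\bigl|v_i(g)-\widetilde v_i(g)\bigr|
\ \le\ \sum_{t=1}^m\bigl|v_i(g_t)-\widetilde v_i(g_t)\bigr|
\ =\ \eta_i,
\]
so $v_i(A_i)\ge s(\widehat V_i,n)-\eta_i$.

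Finally I would close with the elementary case analysis matching the definition of $\varepsilon_i$. If $s(\widehat V_i,n)=0$ the claim is vacuous since $v_i(A_i)\ge 0$; if $s(\widehat V_i,n)>0$ and $\eta_i\ge s(\widehat V_i,n)$, then $\varepsilon_i=1$ and the target bound is $0$, again trivial. Otherwise $\varepsilon_i=\eta_i/s(\widehat V_i,n)<1$ and
\[
v_i(A_i)\ \ge\ s(\widehat V_i,n)-\eta_i\ =\ \bigl(1-\varepsilon_i\bigr)\,s(\widehat V_i,n),
\]
as desired. I do not expect a deep obstacle here: the only place needing care is ensuring the instantiation rule is \emph{genuinely online} (drawing $\widetilde v_i(g_t)$ from the still-unused predictions) so that the virtual valuation's multiset equals $\widehat V_i$ exactly---this is precisely what lets \Cref{lem:freq_algo} be applied as a black box, after which the error control is just the one-line additive bound above.
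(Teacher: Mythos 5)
Your proposal is correct and follows essentially the same route as the paper's proof: instantiate virtual valuations $\widetilde v_i$ online so their multisets equal $(\widehat V_i)_{i\in N}$, obtain $\widetilde v_i(A_i)\ge s(\widehat V_i,n)$ on the virtual instance, transfer to $v_i$ via the additive bound $\eta_i$, and close with the same case analysis on $\varepsilon_i$. The only cosmetic difference is that you inline the proof of \Cref{thm:freq_share} (feasibility, \ido{} allocation, picking sequence, \Cref{lem:freq_algo}) where the paper invokes it as a black box on the virtual instance; the argument is otherwise identical.
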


The above result admits the standard learning-augmented algorithm guarantees of consistency, robustness, and smoothness.
If the predictions are perfect, i.e., $\widehat V_i=V_i$ for all $i\in N$, then instantiating online by always choosing $\tilde v_i(g_t)=v_i(g_t)$ gives us $\eta_i=0$ and thus, $\varepsilon_i=0$, recovering the exact share guarantee of \Cref{thm:freq_share}.
More generally, if the instantiation rule incurs $\eta_i\le \varepsilon\cdot s(\widehat V_i,n)$ for all $i\in N$ (for some $\varepsilon\in[0,1]$), then $\varepsilon_i\le \varepsilon$ and we obtain the uniform guarantee $v_i(A_i)\ge (1-\varepsilon)\,s(\widehat V_i,n)$ for every agent.
When $s$ corresponds to an $\alpha$-approximation of a target benchmark (e.g., $s(V_i,n)=\alpha\cdot \mathrm{MMS}_i$), this can be viewed as an additional multiplicative factor $(1-\varepsilon_i)$ on top of $\alpha$.

\section{Identical Valuations} \label{sec:identical}
Motivated by semi-online scheduling on \emph{identical machines}, we also study the structured case where all agents share the same additive valuation function. 
Due to space constraints, we defer the full treatment of this setting to Appendix~\ref{app:identical-valuations}. 
The main takeaway is that identical valuations are strong enough to recover an online EF1 guarantee even without any information about future goods (via the greedy rule that assigns each arriving good to a currently least-valued bundle), but they cannot circumvent the hardness of EFX: without future information no algorithm can achieve any positive approximation to EFX. With normalization information, we obtain a tight $\frac{\sqrt{5}-1}{2}$-EFX guarantee for two agents, while for $n\ge 3$ any positive approximation to EFX remains impossible.

\section{Conclusion} 
We have explored the existence of standard notions of fairness in the online fair division setting with and without additional information. 
In the absence of future information, our results indicate that online algorithms cannot achieve fair outcomes.
However, when the online algorithm is given normalization information or frequency predictions, (relatively) stronger fairness guarantees are possible.
Under normalization information, we propose an algorithm is EF1 and $0.5$-MMS for $n=2$, and PROP1 for any $n$.
Given frequency predictions, we introduce a meta-algorithm that leverages frequency predictions to match the best-known offline guarantees for a broad class of ``share-based'' fairness notions.
Our complementary impossibility results across the multiple settings emphasizes the limitations of these additional information, despite its relative power.

The two-agent restrictions in our envy-based positive results should be read as structural boundaries of the corresponding information models, rather than as artifacts of the algorithms: our lower bounds rule out any positive
approximation to EF1 for $n\ge 3$ under normalization information and to EFX
for $n\ge 3$ even under frequency predictions.  Similarly, our impossibility
results already hold on additive instances and therefore extend to any richer
valuation class containing additive valuations.  The positive results, however,
use additivity in an essential way: Algorithm~\ref{alg:normalization} certifies fairness
through bundle values plus a single outside good, while the frequency-prediction
framework compresses future information into per-agent scalar multisets.
Extending the positive results to submodular or subadditive valuations would
therefore require new ideas.

Beyond the open questions raised in our work, several promising directions remain.
It would be valuable to characterize what minimal forms of future information suffice for satisfying notions such as EF1 for all $n \geq 3$, and to extend the learning-augmented approach to other settings such as weaker adversaries or richer valuation classes.

\section*{Acknowledgments}
Nicholas Teh was supported by the Advanced Research + Invention Agency (ARIA) as part of the ASPAI project.

\section*{Impact Statement}
This paper presents work whose goal is to advance the theory of machine learning. There are many potential societal consequences of our work, none of which we feel must be specifically highlighted here.

\bibliography{abb,algo,icml2026}

\begin{thebibliography}{83}
\providecommand{\natexlab}[1]{#1}
\providecommand{\url}[1]{\texttt{#1}}
\expandafter\ifx\csname urlstyle\endcsname\relax
  \providecommand{\doi}[1]{doi: #1}\else
  \providecommand{\doi}{doi: \begingroup \urlstyle{rm}\Url}\fi

\bibitem[Akrami \& Garg(2024)Akrami and Garg]{akrami2024mms}
Akrami, H. and Garg, J.
\newblock Breaking the 3/4 barrier for approximate maximin share.
\newblock In \emph{Proceedings of the 2024 Annual ACM-SIAM Symposium on Discrete Algorithms (SODA)}, pp.\  74--91, 2024.

\bibitem[Albers \& Hellwig(2012)Albers and Hellwig]{albers2012semi}
Albers, S. and Hellwig, M.
\newblock Semi-online scheduling revisited.
\newblock \emph{Theoretical Computer Science}, 443:\penalty0 1--9, 2012.

\bibitem[Aleksandrov \& Walsh(2020)Aleksandrov and Walsh]{aleksandrov2020online}
Aleksandrov, M. and Walsh, T.
\newblock Online fair division: A survey.
\newblock In \emph{Proceedings of the 34th AAAI Conference on Artificial Intelligence (AAAI)}, pp.\  13557--13562, 2020.

\bibitem[Aleksandrov et~al.(2015)Aleksandrov, Aziz, Gaspers, and Walsh]{aleksandrov2015onlinefoodbank}
Aleksandrov, M., Aziz, H., Gaspers, S., and Walsh, T.
\newblock Online fair division: Analysing a food bank problem.
\newblock In \emph{Proceedings of the 24th International Joint Conference on Artificial Intelligence (IJCAI)}, pp.\  2540--2546, 2015.

\bibitem[Allouah et~al.(2023)Allouah, Kroer, Zhang, Avadhanula, Bohanon, Dania, Gocmen, Pupyrev, Shah, Stier-Moses, and Taarup]{allouah2023fairallocationtime}
Allouah, A., Kroer, C., Zhang, X., Avadhanula, V., Bohanon, N., Dania, A., Gocmen, C., Pupyrev, S., Shah, P., Stier-Moses, N., and Taarup, K.~R.
\newblock Fair allocation over time, with applications to content moderation.
\newblock In \emph{Proceedings of the 29th ACM SIGKDD Conference on Knowledge Discovery and Data Mining (KDD)}, pp.\  25--35, 2023.

\bibitem[Alouf-Heffetz et~al.(2022)Alouf-Heffetz, Bulteau, Elkind, Talmon, and Teh]{alouf2022better}
Alouf-Heffetz, S., Bulteau, L., Elkind, E., Talmon, N., and Teh, N.
\newblock Better collective decisions via uncertainty reduction.
\newblock In \emph{Proceedings of the 31st International Joint Conference on Artificial Intelligence (IJCAI)}, pp.\  24--30, 2022.

\bibitem[Amanatidis et~al.(2018)Amanatidis, Birmpas, and Markakis]{amanatidis2018ef}
Amanatidis, G., Birmpas, G., and Markakis, V.
\newblock Comparing approximate relaxations of envy-freeness.
\newblock In \emph{Proceedings of the 27th International Joint Conference on Artificial Intelligence (IJCAI)}, pp.\  42--48, 2018.

\bibitem[Amanatidis et~al.(2020)Amanatidis, Markakis, and Ntokos]{amanatidis2020efx}
Amanatidis, G., Markakis, E., and Ntokos, A.
\newblock Multiple birds with one stone: Beating 1/2 for {EFX} and {GMMS} via envy cycle elimination.
\newblock \emph{Theoretical Computer Science}, 841:\penalty0 94--109, 2020.

\bibitem[Amanatidis et~al.(2023)Amanatidis, Aziz, Birmpas, Filos-Ratsikas, Li, Moulin, Voudouris, and Wu]{amanatidis2023fairdivisionsurvey}
Amanatidis, G., Aziz, H., Birmpas, G., Filos-Ratsikas, A., Li, B., Moulin, H., Voudouris, A.~A., and Wu, X.
\newblock Fair division of indivisible goods: Recent progress and open questions.
\newblock \emph{Artificial Intelligence}, 322\penalty0 (C), 2023.

\bibitem[Amanatidis et~al.(2024)Amanatidis, Birmpas, Filos-Ratsikas, and Voudouris]{amanatidis2024distortion}
Amanatidis, G., Birmpas, G., Filos-Ratsikas, A., and Voudouris, A.~A.
\newblock Don't roll the dice, ask twice: The two-query distortion of matching problems and beyond.
\newblock \emph{SIAM Journal on Discrete Mathematics}, 38\penalty0 (1):\penalty0 1007--1029, 2024.

\bibitem[Amanatidis et~al.(2025)Amanatidis, Lolos, Markakis, and Turmel]{amanatidis2025personalized2value}
Amanatidis, G., Lolos, A., Markakis, E., and Turmel, V.
\newblock Online fair division for personalized 2-value instances.
\newblock In \emph{Proceedings of the 18th International Symposium on Algorithmic Game Theory (SAGT)}, pp.\  209--227, 2025.

\bibitem[An et~al.(2024)An, Li, Moseley, and Visotsky]{an20204onlineresource}
An, L., Li, A.~A., Moseley, B., and Visotsky, G.
\newblock Best of many in both worlds: Online resource allocation with predictions under unknown arrival model.
\newblock \emph{arXiv preprint arXiv:2402.13530}, 2024.

\bibitem[Angelopoulos et~al.(2023)Angelopoulos, Kamali, and Shadkami]{angelopoulos2023binpacking}
Angelopoulos, S., Kamali, S., and Shadkami, K.
\newblock Online bin packing with predictions.
\newblock \emph{Journal of Artificial Intelligence Research}, 78:\penalty0 1111--1141, 2023.

\bibitem[Aziz et~al.(2020)Aziz, Moulin, and Sandomirskiy]{aziz2020prop1}
Aziz, H., Moulin, H., and Sandomirskiy, F.
\newblock A polynomial-time algorithm for computing a pareto optimal and almost proportional allocation.
\newblock \emph{Operations Research Letters}, 48\penalty0 (5):\penalty0 573--578, 2020.

\bibitem[Aziz et~al.(2023)Aziz, Huang, Mattei, and Segal-Halevi]{aziz2023prop1}
Aziz, H., Huang, X., Mattei, N., and Segal-Halevi, E.
\newblock Computing welfare-maximizing fair allocations of indivisible goods.
\newblock \emph{European Journal of Operational Research}, 307\penalty0 (2):\penalty0 773--784, 2023.

\bibitem[Babaioff \& Feige(2024)Babaioff and Feige]{babaioff2024fair}
Babaioff, M. and Feige, U.
\newblock Fair shares: Feasibility, domination, and incentives.
\newblock \emph{Mathematics of Operations Research}, 49\penalty0 (4):\penalty0 2180--2211, 2024.

\bibitem[Balkanski et~al.(2023)Balkanski, Gkatzelis, and Tan]{balkanski2023SPschedulingpredictions}
Balkanski, E., Gkatzelis, V., and Tan, X.
\newblock Strategyproof scheduling with predictions.
\newblock In \emph{Proceedings of the 14th Innovations in Theoretical Computer Science (ITCS)}, pp.\  11:1–11:22, 2023.

\bibitem[Balseiro et~al.(2023)Balseiro, Kroer, and Kumar]{balseiro2023revenuemgmt}
Balseiro, S., Kroer, C., and Kumar, R.
\newblock Single-leg revenue management with advice.
\newblock In \emph{Proceedings of the 24th ACM Conference on Economics and Computation (EC)}, pp.\  207, 2023.

\bibitem[Banerjee et~al.(2022)Banerjee, Gkatzelis, Gorokh, and Jin]{banerjee2022nash_predictions}
Banerjee, S., Gkatzelis, V., Gorokh, A., and Jin, B.
\newblock Online nash social welfare maximization with predictions.
\newblock In \emph{Proceedings of the 2022 Annual ACM-SIAM Symposium on Discrete Algorithms (SODA)}, pp.\  1--19, 2022.

\bibitem[Banerjee et~al.(2023)Banerjee, Gkatzelis, Hossain, Jin, Micha, and Shah]{banerjee2023proportionally}
Banerjee, S., Gkatzelis, V., Hossain, S., Jin, B., Micha, E., and Shah, N.
\newblock Proportionally fair online allocation of public goods with predictions.
\newblock In \emph{Proceedings of the 32nd International Joint Conference on Artificial Intelligence (IJCAI)}, pp.\  20--28, 2023.

\bibitem[Barman \& Sundaram(2020)Barman and Sundaram]{barman2020identical}
Barman, S. and Sundaram, R.~G.
\newblock Uniform welfare guarantees under identical subadditive valuations.
\newblock In \emph{Proceedings of the 29th International Joint Conference on Artificial Intelligence (IJCAI)}, pp.\  46--52, 2020.

\bibitem[Barman et~al.(2022)Barman, Khan, and Maiti]{barman2022universal}
Barman, S., Khan, A., and Maiti, A.
\newblock Universal and tight online algorithms for generalized-mean welfare.
\newblock In \emph{Proceedings of the 36th AAAI Conference on Artificial Intelligence (AAAI)}, pp.\  4793--4800, 2022.

\bibitem[Bei et~al.(2022)Bei, Li, and Luo]{bei2022cloudcomputingallocation}
Bei, X., Li, Z., and Luo, J.
\newblock Fair and efficient multi-resource allocation for cloud computing.
\newblock In \emph{Proceedings of the 18th Conference on Web and Internet Economics (WINE)}, pp.\  169--186, 2022.

\bibitem[Benad{\`e} et~al.(2018)Benad{\`e}, Kazachkov, Procaccia, and Psomas]{benade2018envyvanish}
Benad{\`e}, G., Kazachkov, A.~M., Procaccia, A.~D., and Psomas, C.-A.
\newblock How to make envy vanish over time.
\newblock In \emph{Proceedings of the 19th ACM Conference on Economics and Computation (EC)}, pp.\  593--610, 2018.

\bibitem[Benad{\`e} et~al.(2022)Benad{\`e}, Halpern, and Psomas]{benade2022dynamic}
Benad{\`e}, G., Halpern, D., and Psomas, A.
\newblock Dynamic fair division with partial information.
\newblock In \emph{Proceedings of the 36th International Conference on Neural Information Processing Systems (NeurIPS)}, pp.\  3703--3715, 2022.

\bibitem[Boyar et~al.(2017)Boyar, Favrholdt, Kudahl, Larsen, and Mikkelsen]{boyar2017onlineadvicesurvey}
Boyar, J., Favrholdt, L.~M., Kudahl, C., Larsen, K.~S., and Mikkelsen, J.~W.
\newblock Online algorithms with advice: A survey.
\newblock \emph{ACM Computing Surveys (CSUR)}, 50\penalty0 (2), 2017.

\bibitem[Brams \& Taylor(1996)Brams and Taylor]{brams1996fairdivision}
Brams, S.~J. and Taylor, A.~D.
\newblock \emph{Fair Division: From Cake-Cutting to Dispute Resolution}.
\newblock Cambridge University Press, 1996.

\bibitem[Bullinger \& Romen(2023)Bullinger and Romen]{bullinger2023onlinerandom}
Bullinger, M. and Romen, R.
\newblock Online coalition formation under random arrival or coalition dissolution.
\newblock In \emph{Proceedings of the 31st European Symposium on Algorithms (ESA)}, pp.\  27:1--27:18, 2023.

\bibitem[Bullinger \& Romen(2024)Bullinger and Romen]{bullinger2024stabilityonline}
Bullinger, M. and Romen, R.
\newblock Stability in online coalition formation.
\newblock In \emph{Proceedings of the 38th AAAI Conference on Artificial Intelligence (AAAI)}, pp.\  9537--9545, 2024.

\bibitem[Caragiannis et~al.(2023)Caragiannis, Garg, Rathi, Sharma, and Varricchio]{caragiannis2025newfairnessconceptsallocating}
Caragiannis, I., Garg, J., Rathi, N., Sharma, E., and Varricchio, G.
\newblock New fairness concepts for allocating indivisible items.
\newblock In \emph{Proceedings of the 32nd International Joint Conference on Artificial Intelligence (IJCAI)}, pp.\  2554--2562, 2023.

\bibitem[Chandak et~al.(2024)Chandak, Goel, and Peters]{chandak24proportional}
Chandak, N., Goel, S., and Peters, D.
\newblock Proportional aggregation of preferences for sequential decision making.
\newblock In \emph{Proceedings of the 38th AAAI Conference on Artificial Intelligence (AAAI)}, pp.\  9573--9581, 2024.

\bibitem[Chaudhury et~al.(2024)Chaudhury, Garg, and Mehlhorn]{chaudhury2024efxthree}
Chaudhury, B.~R., Garg, J., and Mehlhorn, K.
\newblock {EFX} exists for three agents.
\newblock \emph{Journal of the ACM}, 71\penalty0 (1), 2024.

\bibitem[Cheng et~al.(2005)Cheng, Kellerer, and Kotov]{cheng2005multiprocessorscheduling}
Cheng, T.~E., Kellerer, H., and Kotov, V.
\newblock Semi-on-line multiprocessor scheduling with given total processing time.
\newblock \emph{Theoretical Computer Science}, 337\penalty0 (1):\penalty0 134--146, 2005.

\bibitem[Choo et~al.(2026)Choo, Fu, Khu, Neoh, Poon, and Teh]{choo2025approxproponline}
Choo, D., Fu, W., Khu, D., Neoh, T.~Y., Poon, T.-Y., and Teh, N.
\newblock Approximate proportionality in online fair division.
\newblock In \emph{Proceedings of the 43rd International Conference on Machine Learning (ICML)}, 2026.
\newblock Extended version available at arXiv:2508.03253.

\bibitem[Cohen \& Panigrahi(2023)Cohen and Panigrahi]{cohen2023learningaugmented_divisible}
Cohen, I.~R. and Panigrahi, D.
\newblock A general framework for learning-augmented online allocation.
\newblock In \emph{Proceedings of the 50th International Colloquium on Automata, Languages, and Programming (ICALP)}, pp.\  43:1--43:21, 2023.

\bibitem[Cohen et~al.(2024)Cohen, Eden, Eden, and Vasilyan]{cohen2024truthfulpreductions}
Cohen, I.~R., Eden, A., Eden, T., and Vasilyan, A.
\newblock Plant-and-steal: Truthful fair allocations via predictions.
\newblock In \emph{Proceedings of the 37th International Conference in Neural Information Processing Systems (NeurIPS)}, pp.\  110057--110096, 2024.

\bibitem[Conitzer et~al.(2017)Conitzer, Freeman, and Shah]{conitzer2017fairpublic}
Conitzer, V., Freeman, R., and Shah, N.
\newblock Fair public decision making.
\newblock In \emph{Proceedings of the 18th ACM Conference on Economics and Computation (EC)}, pp.\  629--646, 2017.

\bibitem[Correa et~al.(2019)Correa, Foncea, Hoeksma, Oosterwijk, and Vredeveld]{correa2019recent}
Correa, J., Foncea, P., Hoeksma, R., Oosterwijk, T., and Vredeveld, T.
\newblock Recent developments in prophet inequalities.
\newblock \emph{ACM SIGecom Exchanges}, 17\penalty0 (1):\penalty0 61--70, 2019.

\bibitem[Do et~al.(2022)Do, Hervouin, Lang, and Skowron]{do2022onlineapproval}
Do, V., Hervouin, M., Lang, J., and Skowron, P.
\newblock Online approval committee elections.
\newblock In \emph{Proceedings of the 31st International Joint Conference on Artificial Intelligence (IJCAI)}, pp.\  251--257, 2022.

\bibitem[Dwibedy \& Mohanty(2022)Dwibedy and Mohanty]{dwibedy2022semi}
Dwibedy, D. and Mohanty, R.
\newblock Semi-online scheduling: A survey.
\newblock \emph{Computers \& Operations Research}, 139:\penalty0 105646, 2022.

\bibitem[Elkind et~al.(2022)Elkind, Kraiczy, and Teh]{elkind2022temporalslot}
Elkind, E., Kraiczy, S., and Teh, N.
\newblock Fairness in temporal slot assignment.
\newblock In \emph{Proceedings of the 15th International Symposium on Algorithmic Game Theory (SAGT)}, pp.\  490--507, 2022.

\bibitem[Elkind et~al.(2024{\natexlab{a}})Elkind, Neoh, and Teh]{elkind2024temporalelections}
Elkind, E., Neoh, T.~Y., and Teh, N.
\newblock Temporal elections: Welfare, strategyproofness, and proportionality.
\newblock In \emph{Proceedings of the 27th European Conference on Artificial Intelligence (ECAI)}, pp.\  3292--3299, 2024{\natexlab{a}}.

\bibitem[Elkind et~al.(2024{\natexlab{b}})Elkind, Obraztsova, and Teh]{elkind2024temporalfairness}
Elkind, E., Obraztsova, S., and Teh, N.
\newblock Temporal fairness in multiwinner voting.
\newblock In \emph{Proceedings of the 38th AAAI Conference on Artificial Intelligence (AAAI)}, pp.\  22633--22640, 2024{\natexlab{b}}.

\bibitem[Elkind et~al.(2025{\natexlab{a}})Elkind, Lam, Latifian, Neoh, and Teh]{elkind2025temporalfd}
Elkind, E., Lam, A., Latifian, M., Neoh, T.~Y., and Teh, N.
\newblock Temporal fair division of indivisible items.
\newblock In \emph{Proceedings of the 24th International Conference on Autonomous Agents and Multiagent Systems (AAMAS)}, pp.\  676--685, 2025{\natexlab{a}}.

\bibitem[Elkind et~al.(2025{\natexlab{b}})Elkind, Neoh, and Teh]{elkind2025temporalchores}
Elkind, E., Neoh, T.~Y., and Teh, N.
\newblock Not in my backyard! {T}emporal voting over public chores.
\newblock In \emph{Proceedings of the 34th International Joint Conference on Artificial Intelligence (IJCAI)}, 2025{\natexlab{b}}.

\bibitem[Elkind et~al.(2025{\natexlab{c}})Elkind, Obraztsova, Peters, and Teh]{elkind2025verifying}
Elkind, E., Obraztsova, S., Peters, J., and Teh, N.
\newblock Verifying proportionality in temporal voting.
\newblock In \emph{Proceedings of the 39th AAAI Conference on Artificial Intelligence (AAAI)}, pp.\  13805--13813, 2025{\natexlab{c}}.

\bibitem[Esmaeili et~al.(2023)Esmaeili, Duppala, Cheng, Nanda, Srinivasan, and Dickerson]{esmaeili2023rawlsian}
Esmaeili, S.~A., Duppala, S., Cheng, D., Nanda, V., Srinivasan, A., and Dickerson, J.~P.
\newblock Rawlsian fairness in online bipartite matching: Two-sided, group, and individual.
\newblock In \emph{Proceedings of the 37th AAAI Conference on Artificial Intelligence (AAAI)}, pp.\  5624--5632, 2023.

\bibitem[Gkatzelis et~al.(2021)Gkatzelis, Psomas, and Tan]{Gkatzelis_Psomas_Tan_2021}
Gkatzelis, V., Psomas, A., and Tan, X.
\newblock Fair and efficient online allocations with normalized valuations.
\newblock In \emph{Proceedings of the 35th AAAI Conference on Artificial Intelligence (AAAI)}, pp.\  5440--5447, 2021.

\bibitem[Golrezaei et~al.(2023)Golrezaei, Jaillet, and Zhou]{golrezaei2023onlineadvice_divisible}
Golrezaei, N., Jaillet, P., and Zhou, Z.
\newblock Online resource allocation with convex-set machine-learned advice.
\newblock \emph{arXiv preprint arXiv:2306.12282}, 2023.

\bibitem[Gourvès et~al.(2021)Gourvès, Lesca, and Wilczynski]{gourves2021pickingsequenceshare}
Gourvès, L., Lesca, J., and Wilczynski, A.
\newblock On fairness via picking sequences in allocation of indivisible goods.
\newblock In \emph{Proceedings of the 7th International Conference on Algorithmic Decision Theory (ADT)}, pp.\  258--272, 2021.

\bibitem[He et~al.(2019)He, Procaccia, Psomas, and Zeng]{he2019fairerfuturepast}
He, J., Procaccia, A., Psomas, A., and Zeng, D.
\newblock Achieving a fairer future by changing the past.
\newblock In \emph{Proceedings of the 28th International Joint Conference on Artificial Intelligence (IJCAI)}, pp.\  343--349, 2019.

\bibitem[He \& Zhang(1999)He and Zhang]{he1999semi}
He, Y. and Zhang, G.
\newblock Semi on-line scheduling on two identical machines.
\newblock \emph{Computing}, 62\penalty0 (3):\penalty0 179--187, 1999.

\bibitem[Hosseini et~al.(2021)Hosseini, Menon, Shah, and Sikdar]{HMSS21a}
Hosseini, H., Menon, V., Shah, N., and Sikdar, S.
\newblock Necessarily optimal one-sided matchings.
\newblock In \emph{Proceedings of the 35th AAAI Conference on Artificial Intelligence (AAAI)}, pp.\  5481--5488, 2021.

\bibitem[Hosseini et~al.(2024)Hosseini, Huang, Igarashi, and Shah]{hosseini2024class}
Hosseini, H., Huang, Z., Igarashi, A., and Shah, N.
\newblock Class fairness in online matching.
\newblock \emph{Artificial Intelligence}, 335:\penalty0 104177, 2024.

\bibitem[Im et~al.(2021)Im, Kumar, Montazer~Qaem, and Purohit]{im2021onlineknapsack}
Im, S., Kumar, R., Montazer~Qaem, M., and Purohit, M.
\newblock Online knapsack with frequency predictions.
\newblock In \emph{Proceedings of the 35th International Conference on Neural Information Processing Systems (NeurIPS)}, pp.\  2733--2743, 2021.

\bibitem[Kahana \& Hazon(2023)Kahana and Hazon]{KahanaHa2023}
Kahana, I. and Hazon, N.
\newblock The leximin approach for a sequence of collective decisions.
\newblock In \emph{Proceedings of the 26th European Conference on Artificial Intelligence (ECAI)}, pp.\  1198--1206, 2023.

\bibitem[Kellerer et~al.(1997)Kellerer, Kotov, Speranza, and Tuza]{kellerer1998onlinepartition}
Kellerer, H., Kotov, V., Speranza, M.~G., and Tuza, Z.
\newblock Semi on-line algorithms for the partition problem.
\newblock \emph{Operations Research Letters}, 21\penalty0 (5):\penalty0 235--242, 1997.

\bibitem[Kellerer et~al.(2015)Kellerer, Kotov, and Gabay]{kellerer2015efficient}
Kellerer, H., Kotov, V., and Gabay, M.
\newblock An efficient algorithm for semi-online multiprocessor scheduling with given total processing time.
\newblock \emph{Journal of Scheduling}, 18:\penalty0 623--630, 2015.

\bibitem[Lackner(2020)]{lackner2020perpetual}
Lackner, M.
\newblock Perpetual voting: {F}airness in long-term decision making.
\newblock In \emph{Proceedings of the 34th AAAI Conference on Artificial Intelligence (AAAI)}, pp.\  2103--2110, 2020.

\bibitem[Lackner \& Maly(2023)Lackner and Maly]{Lackner2023proportional}
Lackner, M. and Maly, J.
\newblock Proportional decisions in perpetual voting.
\newblock In \emph{Proceedings of the 37th AAAI Conference on Artificial Intelligence (AAAI)}, pp.\  5722--5729, 2023.

\bibitem[Lipton et~al.(2004)Lipton, Markakis, Mossel, and Saberi]{lipton2004ece}
Lipton, R.~J., Markakis, E., Mossel, E., and Saberi, A.
\newblock On approximately fair allocations of indivisible goods.
\newblock In \emph{Proceedings of the 5th ACM Conference on Electronic Commerce (EC)}, pp.\  125--131, 2004.

\bibitem[Mehta et~al.(2007)Mehta, Saberi, Vazirani, and Vazirani]{mehta2007adwords}
Mehta, A., Saberi, A., Vazirani, U., and Vazirani, V.
\newblock Adwords and generalized online matching.
\newblock \emph{Journal of the ACM}, 54\penalty0 (5):\penalty0 579--603, 2007.

\bibitem[Moulin(2003)]{moulin2003fairdivision}
Moulin, H.
\newblock \emph{Fair Division and Collective Welfare}.
\newblock MIT Press, 2003.

\bibitem[Murray et~al.(2020)Murray, Kroer, Peysakhovich, and Shah]{murray2020robustuncertain}
Murray, R., Kroer, C., Peysakhovich, A., and Shah, P.
\newblock Robust market equilibria with uncertain preferences.
\newblock In \emph{Proceedings of the 34th AAAI Conference on Artificial Intelligence (AAAI)}, pp.\  2192--2199, 2020.

\bibitem[Mutzari et~al.(2023)Mutzari, Aumann, and Kraus]{mutzari2023resilient}
Mutzari, D., Aumann, Y., and Kraus, S.
\newblock Resilient fair allocation of indivisible goods.
\newblock In \emph{Proceedings of the 22nd International Conference on Autonomous Agents and Multiagent Systems (AAMAS)}, pp.\  2688--2690, 2023.

\bibitem[Peters(2022)]{Pete22a}
Peters, J.
\newblock Online elicitation of necessarily optimal matchings.
\newblock In \emph{Proceedings of the 36th AAAI Conference on Artificial Intelligence (AAAI)}, pp.\  5164--5172. AAAI Press, 2022.

\bibitem[Phillips et~al.(2026)Phillips, Elkind, Teh, and W\k{a}s]{phillips2025strengthening}
Phillips, B., Elkind, E., Teh, N., and W\k{a}s, T.
\newblock Strengthening proportionality in temporal voting.
\newblock In \emph{Proceedings of the 25th International Conference on Autonomous Agents and Multiagent Systems (AAMAS)}, pp.\  3823--3832, 2026.

\bibitem[Plaut \& Roughgarden(2018)Plaut and Roughgarden]{plaut2018efx}
Plaut, B. and Roughgarden, T.
\newblock Almost envy-freeness with general valuations.
\newblock In \emph{Proceedings of the 2018 Annual ACM-SIAM Symposium on Discrete Algorithms (SODA)}, pp.\  2584--2603, 2018.

\bibitem[Procaccia(2020)]{procaccia2020enigmatic}
Procaccia, A.
\newblock Technical perspective: {A}n answer to fair division's most enigmatic question.
\newblock \emph{Communications of the ACM}, 63\penalty0 (4):\penalty0 118, 2020.

\bibitem[Procaccia et~al.(2024)Procaccia, Schiffer, and Zhang]{procaccia2024honorbandits}
Procaccia, A., Schiffer, B., and Zhang, S.
\newblock Honor among bandits: No-regret learning for online fair division.
\newblock In \emph{Proceedings of the 37th International Conference on Neural Information Processing Systems (NeurIPS)}, pp.\  13183--13227, 2024.

\bibitem[Schiffer \& Zhang(2025)Schiffer and Zhang]{Schiffer2025onlinefdbandits}
Schiffer, B. and Zhang, S.
\newblock Improved regret bounds for online fair division with bandit learning.
\newblock In \emph{Proceedings of the 39th AAAI Conference on Artificial Intelligence (AAAI)}, pp.\  14079--14086, 2025.

\bibitem[Segal-Halevi \& Suksompong(2019)Segal-Halevi and Suksompong]{segal2019democratic}
Segal-Halevi, E. and Suksompong, W.
\newblock Democratic fair allocation of indivisible goods.
\newblock \emph{Artificial Intelligence}, 277:\penalty0 103167, 2019.

\bibitem[Seiden et~al.(2000)Seiden, Sgall, and Woeginger]{seiden2000semi}
Seiden, S., Sgall, J., and Woeginger, G.
\newblock Semi-online scheduling with decreasing job sizes.
\newblock \emph{Operations Research Letters}, 27\penalty0 (5):\penalty0 215--221, 2000.

\bibitem[Spaeh \& Ene(2023)Spaeh and Ene]{spaeh2023onlineadspredictions}
Spaeh, F. and Ene, A.
\newblock Online ad allocation with predictions.
\newblock In \emph{Proceedings of the 37th International Conference on Neural Information Processing Systems (NeurIPS)}, pp.\  17265--17295, 2023.

\bibitem[Steinhaus(1948)]{Steinhaus48}
Steinhaus, H.
\newblock The problem of fair division.
\newblock \emph{Econometrica}, 16\penalty0 (1):\penalty0 101--104, 1948.

\bibitem[Tan \& Wu(2007)Tan and Wu]{machine}
Tan, Z. and Wu, Y.
\newblock Optimal semi-online algorithms for machine covering.
\newblock \emph{Theoretical Computer Science}, 372\penalty0 (1):\penalty0 69--90, 2007.

\bibitem[Teh(2026)]{teh2026price}
Teh, N.
\newblock The price of proportional representation in temporal voting.
\newblock In \emph{Proceedings of the 35th International Joint Conference on Artificial Intelligence (IJCAI)}, 2026.
\newblock Extended version available as arXiv:2605.11157.

\bibitem[Verma et~al.(2025)Verma, Saha, Yokoo, and Low]{verma2024onlineFDcontextualbandits}
Verma, A., Saha, I., Yokoo, M., and Low, B. K.~H.
\newblock Keep everyone happy: Online fair division of numerous items with few copies.
\newblock \emph{arXiv preprint arXiv:2408.12845}, 2025.

\bibitem[Wang \& Wei(2026)Wang and Wei]{wang2026onlinefairBinary}
Wang, Y. and Wei, T.
\newblock Online fair allocations with binary valuations and beyond.
\newblock In \emph{Proceedings of the 40th AAAI Conference on Artificial Intelligence (AAAI)}, 2026.

\bibitem[Yamada et~al.(2024)Yamada, Komiyama, Abe, and Iwasaki]{yamada2024learningFDbandit}
Yamada, H., Komiyama, J., Abe, K., and Iwasaki, A.
\newblock Learning fair division from bandit feedback.
\newblock In \emph{Proceedings of the 27th International Conference on Artificial Intelligence and Statistics (AISTATS)}, pp.\  3106--3114, 2024.

\bibitem[Zech et~al.(2024)Zech, Boehmer, Elkind, and Teh]{zech2024multiwinnerchange}
Zech, V., Boehmer, N., Elkind, E., and Teh, N.
\newblock Multiwinner temporal voting with aversion to change.
\newblock In \emph{Proceedings of the 27th European Conference on Artificial Intelligence (ECAI)}, pp.\  3236--3243, 2024.

\bibitem[Zeng \& Psomas(2020)Zeng and Psomas]{zeng2020fairness_efficiency_dynamicFD}
Zeng, D. and Psomas, A.
\newblock Fairness-efficiency tradeoffs in dynamic fair division.
\newblock In \emph{Proceedings of the 21st ACM Conference on Economics and Computation (EC)}, pp.\  911--912, 2020.

\bibitem[Zhou et~al.(2023)Zhou, Bai, and Wu]{zhou2023icml_mms_chores}
Zhou, S., Bai, R., and Wu, X.
\newblock Multi-agent online scheduling: {MMS} allocations for indivisible items.
\newblock In \emph{Proceedings of the 40th International Conference on Machine Learning (ICML)}, pp.\  42506--42516, 2023.

\end{thebibliography}
\bibliographystyle{icml2026}

\newpage
\appendix
\onecolumn

\begin{center}
    \Large\bf
    Appendix
\end{center}
\vspace{2mm}

\section{Limits of Randomization Under Adaptive Adversaries and Ex Post Guarantees}\label{app:randomized_algorithms}

Throughout the paper we focus on deterministic online allocation algorithms and allow the adversary to be adaptive,
i.e., to choose each future arrival after observing the algorithm's past allocation decisions. This is a standard and
deliberately strong worst-case model in online fair division and semi-online scheduling.

It is natural to ask whether randomization can circumvent our impossibility results. In our setting, the fairness
objective is evaluated \emph{ex post} on the \emph{realized} final allocation. Accordingly, when we discuss randomized
online algorithms and say that such an algorithm \emph{guarantees} a competitive ratio $\alpha$, we mean the following
strong (pathwise) requirement: for every adaptive adversary strategy, \emph{every} realization of the algorithm's
internal randomness must yield a final allocation satisfying the $\alpha$-approximate fairness property.

Under this notion of ex post guarantees, deterministic lower bounds extend immediately to randomized algorithms, as we show with the following lemma.

\begin{lemma}
\label{lem:deterministic-lb-extends-randomized}
Consider any online fair division objective defined ex post on the final allocation (e.g., $\alpha$-EF1,
$\alpha$-PROP1, or $\alpha$-MMS). If no deterministic online algorithm can guarantee a strictly positive competitive
ratio against an adaptive adversary, then neither can any randomized online algorithm (with internal randomness)
against the same adversary model.
\end{lemma}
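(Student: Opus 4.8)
The plan is to fix the internal randomness and thereby reduce any randomized algorithm to a deterministic one that is already ruled out by the hypothesized deterministic impossibility. The key feature to exploit is that each of our adaptive lower bounds is witnessed by an adversary \emph{strategy} $\sigma$ --- a mapping from the observable history of (arrived good, receiving agent) pairs to the next good's valuation profile --- that depends only on the allocation \emph{decisions} made so far, and not on any internal state or tie-breaking rationale of the algorithm. In each deterministic construction (e.g., the branching on ``the agent to whom the algorithm allocates $g_1$''), the next item is determined purely by which agent received each previous good, so a single such $\sigma$ simultaneously defeats \emph{every} deterministic algorithm. I would first make this decision-only dependence explicit, phrasing $\sigma$ as a universal adaptive strategy.

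The second step is the reduction. Suppose for contradiction that a randomized online algorithm $\mathcal{R}$ guarantees a strictly positive competitive ratio $\alpha$ ex post; by definition this means that for every adaptive adversary and \emph{every} realization of $\mathcal{R}$'s internal randomness, the realized final allocation is $\alpha$-fair. I would run $\mathcal{R}$ against the universal adversary $\sigma$ from the previous step and then fix an arbitrary realization $\omega$ of the randomness. Conditioned on $\omega$, the algorithm collapses to a deterministic algorithm $\mathcal{R}_\omega$ that maps each history to a fixed decision. Because $\sigma$ reacts only to decisions, the play of $\sigma$ against $\mathcal{R}_\omega$ is exactly an adaptive-adversary instance against a deterministic algorithm, so the deterministic lower bound applies verbatim to $\mathcal{R}_\omega$ and forces its final allocation on this play to violate $\alpha$-fairness.

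Since $\omega$ was arbitrary, every realization of $\mathcal{R}$ against $\sigma$ yields a non-$\alpha$-fair allocation, contradicting the pathwise (ex post) guarantee. Note that the only property of the fairness objective used is that it is evaluated on the realized final allocation, so the argument is uniform across $\alpha$-EF1, $\alpha$-PROP1, and $\alpha$-MMS.

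The main obstacle --- and really the only delicate point --- is the interaction between adaptivity and randomness: one must verify that the adversary is permitted to, and does, respond to the realized decisions rather than to the random seed, and that fixing $\omega$ genuinely produces a deterministic decision rule to which the earlier theorems apply without modification. This is exactly what the pathwise quantifier (``every realization'') together with the decision-only dependence of $\sigma$ buys us; in particular, no measure-theoretic or distributional reasoning over $\omega$ is needed, since exhibiting a single bad realization already refutes the guarantee. The one thing I would double-check is that every deterministic impossibility in the paper is indeed stated against an adaptive adversary whose construction branches solely on allocation decisions, so that the same $\sigma$ is valid against the entire family $\{\mathcal{R}_\omega\}_\omega$ at once.
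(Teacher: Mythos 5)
Your core reduction---hardwiring the random tape $\omega$ to obtain a deterministic algorithm $\mathcal{R}_\omega$ whose play coincides with that of $\mathcal{R}$ run on tape $\omega$, and then exploiting the pathwise (ex post) quantifier---is exactly the mechanism of the paper's proof. Where you diverge is the order of quantifiers: you first fix a \emph{universal} adaptive strategy $\sigma$ that defeats every deterministic algorithm simultaneously, and only then range over realizations $\omega$, concluding that every realization of $\mathcal{R}$ against $\sigma$ is unfair. The lemma's hypothesis does not supply such a $\sigma$: it says only that for each deterministic algorithm $D$ there exists \emph{some} adversary $\sigma_D$, possibly depending on $D$, that defeats it. The existence of a single $\sigma$ valid against the whole family $\{\mathcal{R}_\omega\}_\omega$ is a property of this paper's particular constructions (whose branching depends only on allocation decisions, and even then only per fixed approximation level $\gamma$), not a consequence of the abstract hypothesis; turning per-algorithm adversaries into a universal one in general would require an extra game-theoretic (determinacy-style) argument you do not give. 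You flagged this yourself as the point to double-check, and it is indeed where your proof of the lemma \emph{as stated} has a gap.

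The repair is simply the paper's quantifier order, and you already possess the needed ingredient, namely your observation that a single bad realization refutes the pathwise guarantee: choose the adversary \emph{after} fixing the tape. Concretely, fix any $\omega$; by your coincidence-of-plays argument, $\mathcal{R}_\omega$ inherits the $\alpha$-fairness guarantee of $\mathcal{R}$ against \emph{every} adaptive adversary, so $\mathcal{R}_\omega$ is a deterministic online algorithm with a strictly positive competitive ratio, contradicting the hypothesis directly---here the adversary defeating $\mathcal{R}_\omega$ is allowed to depend on $\omega$, which is all the hypothesis grants. This proves the lemma at its stated level of generality with no universality assumption. What your route buys, in settings where a universal $\sigma$ does exist (as in this paper's explicit constructions), is the strictly stronger conclusion that $\mathcal{R}$ fails on \emph{every} seed against one fixed adversary, rather than on some seed against a seed-dependent adversary; but that strength is not needed for the lemma, and relying on it narrows the result to instances where the deterministic lower bounds happen to be witnessed by decision-only universal strategies.
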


\begin{proof}
Assume for contradiction that there exist a constant $\alpha>0$ and a randomized online algorithm $\mathcal{R}$ that
guarantees competitive ratio at least $\alpha$ against an adaptive adversary, under the ex post (pathwise) guarantee
described above. That is, for every adaptive adversary strategy $\mathcal{A}$ and every realization of
$\mathcal{R}$'s internal randomness, the final allocation produced by the interaction satisfies the
$\alpha$-approximate objective.

Model $\mathcal{R}$'s internal randomness by a (possibly infinite) random tape $\omega$.
For each fixed tape $\omega$, define $\mathcal{R}^{\omega}$ to be the deterministic online algorithm obtained by
running $\mathcal{R}$ with its random choices hardwired according to~$\omega$.

Fix any adaptive adversary strategy $\mathcal{A}$. Consider running $\mathcal{A}$ against the randomized algorithm
$\mathcal{R}$ with random tape $\omega$. By construction of $\mathcal{R}^{\omega}$, the sequence of allocation
decisions made by $\mathcal{R}$ on tape $\omega$ is exactly the sequence of allocation decisions made by the
deterministic algorithm $\mathcal{R}^{\omega}$ when interacting with $\mathcal{A}$: in both executions the adversary
observes the same history, and the algorithm makes the same choice at each round. Thus, the realized final allocation
is identical in the two executions.

By the assumed guarantee of $\mathcal{R}$, this realized final allocation satisfies the $\alpha$-approximate fairness
objective. Since $\mathcal{A}$ was arbitrary, it follows that the deterministic algorithm $\mathcal{R}^{\omega}$
guarantees competitive ratio at least $\alpha$ against an adaptive adversary.
This contradicts the premise that no deterministic online algorithm can guarantee any strictly positive competitive
ratio against an adaptive adversary. Therefore no randomized online algorithm can guarantee a strictly positive
competitive ratio in this model.
\end{proof}

\noindent\textbf{Remark.}
Randomization can become meaningful under weaker adversary models (e.g., an \emph{oblivious} adversary that commits to
the entire sequence in advance) and/or when fairness is evaluated \emph{in expectation} (ex ante); we view such
variants as an interesting direction for future work.

\section{Omitted Proofs from \Cref{sec:impossibilities_wo_info}}
\noinfoEFoneunbounded*
\begin{proof}
    Suppose for contradiction that there exist $\gamma \in (0,1]$ and an online allocation algorithm $\mathcal{F}$ such that, on every online instance, the final allocation $\mathcal{A}=(A_1,\ldots,A_n)$ returned by $\mathcal{F}$ is $\gamma$-EF1. Fix a parameter $K>1/\gamma$.

    Let good $g_1$ have $v_i(g_1)=1$ for all $i\in N$; and let $a\in N$ be the agent to whom $\mathcal{F}$ allocates $g_1$.
    Next, let good $g_2$ have
    \[
    v_a(g_2)=K
    \quad\text{and}\quad
    v_j(g_2)=\frac{1}{K}\ \text{for all } j\in N\setminus\{a\}.
    \]
    
    If $\mathcal{F}$ allocates $g_2$ to agent $a$, then let one last good $g_3$ arrive with
    $v_i(g_3)=0$ for all $i\in N$ and terminates.
    Pick any $b\in N\setminus\{a\}$.
    Then $v_b(A_b)=0$, while $A_a$ contains both $g_1$ and $g_2$, so for every $g\in A_a$,
    \[
    v_b(A_a\setminus\{g\}) \ge \min\{v_b(g_1),v_b(g_2)\} = \frac{1}{K} > 0.
    \]
    Thus, there is no $g \in A_a$ such that $v_b(A_b) \ge \gamma \cdot v_b(A_a \setminus \{g\})$, contradicting that $\mathcal{A}$ is $\gamma$-EF1. 
    Therefore $g_2$ must be allocated to some agent $b\in N\setminus\{a\}$.
    
    Now let $n-1$ further goods $g_3,\dots,g_{n+1}$ arrive.
    For each $t\in\{3,\dots,n+1\}$, define
    \[
    v_a(g_t)=v_b(g_t)=K
    \quad\text{and}\quad
    v_j(g_t)=1 \text{ for all } j\in N\setminus\{a,b\}.
    \]
    
    We consider two cases.

    \begin{description}
        \item[Case 1: Agent $b$ receives at least one good from $\{g_3,\dots,g_{n+1}\}$.] 
            If agent $a$ receives none of these goods, then $A_a=\{g_1\}$ and $A_b$ contains $g_2$ and at least one good from $\{g_3,\dots,g_{n+1}\}$.
            Thus for every $g\in A_b$ we have $v_a(A_b\setminus\{g\})\ge K$, so $\gamma$-EF1 for the pair $(a,b)$ implies
            $1=v_a(A_a)\ge \gamma K$, i.e., $\gamma\le 1/K$.
            This is a contradiction.
            
            Thus, agent $a$ also receives at least one good from $\{g_3,\dots,g_{n+1}\}$.
            If $n\ge 3$, then $a$ and $b$ together receive at least four goods
            ($g_1$, $g_2$, and one additional good each), leaving at most $n-3$ goods for the $n-2$ agents in $N\setminus\{a,b\}$.
            Therefore some agent $c\in N\setminus\{a,b\}$ receives no good.
            But $A_a$ contains $g_1$ and at least one good from $\{g_3,\dots,g_{n+1}\}$, each valued $1$ by agent $c$,
            so for every $g\in A_a$ we have $v_c(A_a\setminus\{g\})\ge 1$ while $v_c(A_c)=0$.
            Thus $\mathcal{A}$ is not $\gamma$-EF1, a contradiction.
            (If $n=2$, this subcase cannot occur because $\{g_3,\dots,g_{n+1}\}$ contains only one good.)

        \item[Case 2: Agent $b$ receives no good from $\{g_3,\dots,g_{n+1}\}$.]
        Then $A_b=\{g_2\}$, so $v_b(A_b)=1/K$.
        The remaining $n$ goods are allocated among the $n-1$ agents in $N\setminus\{b\}$, so some agent $i\in N\setminus\{b\}$
        satisfies $|A_i|\ge 2$.
        Let $g^* \in \arg\max_{g\in A_i} v_b(g)$. Since valuations are additive,
        for any $g\in A_i$ we have $v_b(A_i\setminus\{g\}) = v_b(A_i) - v_b(g)$.
        Thus, $v_b(A_i\setminus\{g^*\}) = \min_{g\in A_i} v_b(A_i\setminus\{g\})$.
        Thus, if $\mathcal{A}$ is $\gamma$-EF1 for the ordered pair $(b,i)$, then removing $g^*$ from $A_i$ gives the weakest  equirement among goods in $A_i$, so
        \[
            v_b(A_b) \ge \gamma\cdot v_b(A_i\setminus\{g^*\}).
        \]
        Now, $g_2\notin A_i$ and every good $g\neq g_2$ satisfies $v_b(g)\ge 1$, so
        $A_i\setminus\{g^*\}\neq \varnothing$ and $v_b(A_i\setminus\{g^*\})\ge 1$.
        Therefore,
        \[
        \frac{1}{K}=v_b(A_b) \ge \gamma\cdot v_b(A_i\setminus\{g^*\}) \ge \gamma,
        \]
        i.e., $\gamma \le 1/K$, contradicting $K>1/\gamma$.
        
    \end{description}
    In both cases we obtain a contradiction; thus, no online algorithm can guarantee any positive approximation
    factor with respect to EF1.
\end{proof}

\section{Omitted Proofs from \Cref{sec:normalization}}
\normalizationtwoagents*
\begin{proof}
    Let $A_i^t$ denote the bundle of agent~$i$ after Algorithm~\ref{alg:normalization} allocates the first $t$ goods, and let $A_i := A_i^m$ be the final bundle.
    For $n=2$, write $j$ for the other agent.
    For each $t\in\{0,1,\dots,m\}$ define
    \[
        M_i^t := \max_{g\in A_{j}^t} v_i(g),
    \]
    with the convention that the maximum of an empty set is $0$.
    Note that $M_i^t$ is nondecreasing in $t$.

    In iteration $t$ (when good $g_t$ arrives), Algorithm~\ref{alg:normalization} computes
    \[
        x_i^t = v_i(A_i^{t-1}) + \tfrac{1}{2}\max\{M_i^{t-
        1},\,v_i(g_t)\},
    \]
    and removes agent~$i$ from the active set whenever $x_i^t\ge \tfrac{1}{2}$.

    We first show that for every agent~$i$,
    \begin{equation}
        \label{eq:ef1-strong}
        v_i(A_i) + \tfrac{1}{2} M_i^m \ge \tfrac{1}{2}.
    \end{equation}

    After the removal step in iteration $t$, either the algorithm continues (so $g_t$ is allocated
    to the other agent $j$, hence $g_t\in A_j$ and therefore
    $M_i^m \ge \max\{M_i^{t-1}, v_i(g_t)\}$), or the early-stopping condition $N'=\varnothing$
    triggers and all remaining goods (including $g_t$) are assigned to agent $1$.
    In this early-stopping case, if $i\neq 1$ then $g_t\in A_1\subseteq \bigcup_{\ell\neq i}A_\ell$,
    so again $M_i^m \ge \max\{M_i^{t-1}, v_i(g_t)\}$, whereas if $i=1$ then $M_1^m\ge M_1^{t-1}$
    and also $v_1(A_1)\ge v_1(A_1^{t-1})+v_1(g_t)$.
    In all cases,
    \[
    v_i(A_i)+\tfrac12 M_i^m \ge v_i(A_i^{t-1})+\tfrac12\max\{M_i^{t-1},v_i(g_t)\} = x_i^t \ge \tfrac12,
    \]
    which proves~(\ref{eq:ef1-strong}) for removed agents.

    It remains to consider an agent~$i$ that is never removed (i.e., remains active throughout).
    In this case Algorithm~\ref{alg:normalization} never triggers the early stopping condition ``$N'=\varnothing$'', so all $m$ goods are allocated within the main loop.
    If $A_{j}=\varnothing$, then $M_i^m=0$ and $v_i(A_i)=v_i(G)=1$, so \eqref{eq:ef1-strong} holds trivially.
    Thus, assume $A_{j}\neq\varnothing$ and let $h$ be the last good allocated to agent~$j$; denote by $t_h$ the iteration in which $h$ arrives.
    Because agent~$j$ receives $h$ at iteration $t_h$, it must be active after the removal step in that iteration, and therefore $x_{j}(t_h)<\tfrac{1}{2}$.
    Dropping the maximum term in the definition of $x_{j}^{t_h}$ gives
    \begin{equation}
        \label{eq:ef1-last-good}
        v_{j}(A_{j}\setminus\{h\}) + \tfrac{1}{2}v_{j}(h) < \tfrac{1}{2}.
    \end{equation}
    Since agent~$i$ is active throughout, and every good in $A_{j}$ is allocated to $j$ while $i$ is active, Algorithm~\ref{alg:normalization} must have had
    $v_{j}(g)\ge v_i(g)$ for all $g\in A_{j}$.
    Applying this to \eqref{eq:ef1-last-good} gives us
    \[
        v_i(A_{j}\setminus\{h\}) + \tfrac{1}{2}v_i(h)
        \le v_{j}(A_{j}\setminus\{h\}) + \tfrac{1}{2}v_{j}(h)
        < \tfrac{1}{2}.
    \]
    Using the fact that $v_i(G)=1$ we have
    \[
        1 = v_i(A_i) + v_i(A_{j}\setminus\{h\}) + v_i(h)
        < v_i(A_i) + (\tfrac{1}{2} - \tfrac{1}{2}v_i(h)) + v_i(h)
        = v_i(A_i) + \tfrac{1}{2} + \tfrac{1}{2}v_i(h).
    \]
    Thus $v_i(A_i) + \tfrac{1}{2}v_i(h) > \tfrac{1}{2}$.
    Since $h\in A_{j}$, we have $v_i(h)\le M_i^m$, implying \eqref{eq:ef1-strong}.

    If $A_{j}=\varnothing$, then agent~$i$ trivially satisfies EF1 towards agent~$j$.
    Otherwise, let $g^*\in\arg\max_{g\in A_{j}} v_i(g)$, so that $v_i(g^*)=M_i^m$.
    Inequality \eqref{eq:ef1-strong} is equivalent to $2v_i(A_i)+v_i(g^*)\ge 1$.
    Using $v_i(A_{j})=v_i(G)-v_i(A_i)=1-v_i(A_i)$, this rearranges to
    \[
        v_i(A_i) \ge v_i(A_{j}) - v_i(g^*) = v_i(A_{j}\setminus\{g^*\}),
    \]
    which is exactly the EF1 requirement (from agent~$i$ towards agent~$j$).
    Since this holds for both agents, the allocation returned by Algorithm~\ref{alg:normalization} is EF1.
\end{proof}

\normalizationPROP*
\begin{proof}
    Let $A_i^t$ denote the bundle of agent~$i$ after Algorithm~\ref{alg:normalization} allocates the first $t$ goods, and let $A_i := A_i^m$ be the final bundle.
    Fix $\alpha := \frac{n-1}{n}$.
    For each $t\in\{0,1,\dots,m\}$ define
    \[
        M_i^t := \max_{g\in \cup_{j\neq i} A_j^t} v_i(g),
    \]
    with the convention that the maximum of an empty set is $0$.
    Note that $M_i^t$ is nondecreasing in $t$.

    In iteration $t$ (when good $g_t$ arrives), Algorithm~\ref{alg:normalization} computes
    \[
        x_i^t = v_i(A_i^{t-1}) + \alpha\max\{M_i^{t-1}, v_i(g_t)\},
    \]
    and removes agent~$i$ from the active set whenever $x_i^t\ge \tfrac{1}{n}$.

    We prove a stronger guarantee: for every agent~$i$,
    \begin{equation}
        \label{eq:prop-strong}
        v_i(A_i) + \alpha M_i^m \ge \tfrac{1}{n}.
    \end{equation}
    Since $\alpha\le 1$, \eqref{eq:prop-strong} implies PROP1: if $A_i=G$ we are done; otherwise take a good $g^*\in\arg\max_{g\in\cup_{j\neq i}A_j} v_i(g)$ so that $v_i(g^*)=M_i^m$, and note $v_i(A_i)+v_i(g^*)\ge v_i(A_i)+\alpha M_i^m\ge 1/n$.

    \smallskip
    \noindent\textbf{Case 1: agent $i$ becomes inactive during the algorithm.}
    Suppose agent $i$ is removed at iteration $t$, i.e.,
    \[
    x_i^t=v_i(A_i^{t-1})+\alpha\max\{M_i^{t-1},v_i(g_t)\} \ge \frac1n.
    \]
    After the removal step in iteration $t$, either the algorithm continues, in which case $g_t$ is allocated
    to some agent $k\neq i$ and therefore $g_t\in \bigcup_{j\neq i}A_j$, implying
    $M_i^m\ge \max\{M_i^{t-1},v_i(g_t)\}$; or the early-stopping condition $N'=\varnothing$ triggers and all remaining goods (including $g_t$) are assigned to agent $1$.
    In this case, if $i\neq 1$ then $g_t\in A_1\subseteq \bigcup_{j\neq i}A_j$, so
    $M_i^m \ge \max\{M_i^{t-1}, v_i(g_t)\}$, whereas if $i = 1$ then $M_1^m \ge M_1^{t-1}$
    and also $v_1(A_1) \ge v_1(A_1^{t-1}) + v_1(g_t)$.
    In all cases,
    \[
        v_i(A_i)+\alpha M_i^m \ge v_i(A_i^{t-1})+\alpha\max\{M_i^{t-1},v_i(g_t)\} = x_i^t \ge \frac1n,
    \]
    proving~(\ref{eq:prop-strong}) for inactive agents.

    \smallskip
    \noindent\textbf{Case 2: agent $i$ remains active throughout.}
    Now suppose agent~$i$ is never removed, i.e., it is active in every iteration.
    In particular, Algorithm~\ref{alg:normalization} never triggers the early stopping condition ``$N'=\varnothing$'', so all $m$ goods are allocated within the main loop.

    For each agent $j\neq i$, define $h_j$ as follows.
    If $A_j=\varnothing$, let $h_j$ be a dummy good satisfying $v_\ell(h_j)=0$ for every agent~$\ell$; then \eqref{eq:prop-last-good-j} holds trivially.
    Otherwise, let $h_j$ denote the last good allocated to agent~$j$, and denote by $t_j$ the iteration in which $h_j$ arrives.
    Because agent~$j$ receives $h_j$ at iteration $t_j$, it must be active after the removal step in that iteration, and therefore $x_j^{t_j}<\tfrac{1}{n}$.
    Dropping the maximum term in the definition of $x_j^{t_j}$ gives
    \begin{equation}
        \label{eq:prop-last-good-j}
        v_j(A_j\setminus\{h_j\}) + \alpha v_j(h_j) < \tfrac{1}{n}.
    \end{equation}
    Since agent~$i$ is active throughout, every good in $A_j$ is allocated to $j$ while $i$ is active.
    Therefore, for all $g\in A_j$ we must have $v_j(g)\ge v_i(g)$ (otherwise Algorithm~\ref{alg:normalization} would have allocated $g$ to $i$ instead of $j$).
    Applying this to \eqref{eq:prop-last-good-j} yields
    \begin{equation}
        \label{eq:prop-last-good-i}
        v_i(A_j\setminus\{h_j\}) + \alpha v_i(h_j) < \tfrac{1}{n} \qquad \text{for every } j\neq i.
    \end{equation}

    Summing \eqref{eq:prop-last-good-i} over all $j\neq i$ gives
    \begin{equation}
        \label{eq:prop-sum}
        \sum_{j\neq i} v_i(A_j\setminus\{h_j\}) + \alpha\sum_{j\neq i} v_i(h_j)
        < \tfrac{n-1}{n}.
    \end{equation}

    On the other hand, since $v_i(G)=1$,
    \begin{align}
        1
        &= v_i(A_i) + \sum_{j\neq i} v_i(A_j\setminus\{h_j\}) + \sum_{j\neq i} v_i(h_j).
        \label{eq:prop-decompose}
    \end{align}
    Combining \eqref{eq:prop-sum} and \eqref{eq:prop-decompose} gives us
    \[
        1
        < v_i(A_i) + \tfrac{n-1}{n} + (1-\alpha)\sum_{j\neq i} v_i(h_j)
        = v_i(A_i) + \tfrac{n-1}{n} + \tfrac{1}{n}\sum_{j\neq i} v_i(h_j),
    \]
    and hence
    \begin{equation}
        \label{eq:prop-average}
        v_i(A_i) + \tfrac{1}{n}\sum_{j\neq i} v_i(h_j) > \tfrac{1}{n}.
    \end{equation}

    Since there are $n-1$ terms in the sum, we have $\sum_{j\neq i} v_i(h_j) \le (n-1) \cdot \max_{j\neq i} v_i(h_j)$, so \eqref{eq:prop-average} implies
    \[
        v_i(A_i) + \alpha \cdot \max_{j\neq i} v_i(h_j) > \tfrac{1}{n}.
    \]
    Finally, for every $j\neq i$ with $A_j\neq\varnothing$ we have $h_j\in\cup_{k\neq i}A_k$, while for $j$ with $A_j=\varnothing$ we have $v_i(h_j)=0$ by construction.
    Thus, $\max_{j\neq i} v_i(h_j) \le M_i^m$.
    Therefore $v_i(A_i) + \alpha M_i^m \ge v_i(A_i)+\alpha \cdot \max_{j\neq i} v_i(h_j) > 1/n$, which proves \eqref{eq:prop-strong}.

    Since \eqref{eq:prop-strong} holds for every agent~$i$, Algorithm~\ref{alg:normalization} returns a PROP1 allocation.
\end{proof}
\efxtwoagentsnorm*
\begin{proof}    
    Suppose for a contradiction that there exists a $\gamma$-competitive algorithm for approximating EFX given normalization information, for some $\gamma \in (0,1]$.
    Let $k \in \mathbb{N}$ be large.
    Consider the case of $n=2$.
    Let the first two goods be $g_1$ and $g_2$, with $v_1(g_1) = v_2(g_1) = 2^{-k^3}, v_1(g_2) = 2^{-k^2}$, and $v_2(g_2) = 2^{-k^3}$.
    Assume w.l.o.g. that $g_1$ is allocated to agent~$2$.
    If $g_2$ is allocated to agent~$1$, let the next (and final) good be $g_3$ with valuations as follows:
    \begin{center}
        \begin{tabular}{ c | c c c c c c c c}
            $\mathbf{v}$ & $g_1$ & $g_2$ & $g_3$\\
            \hline
            $1$ & $2^{-k^3}$ & \circled{$2^{-k^2}$} & $1 - 2^{-k^3} - 2^{-k^2}$ \\ 
            $2$ & \circled{$2^{-k^3}$} & $2^{-k^3}$ & $1 - 2^{{-k^3}+1}$
        \end{tabular}
    \end{center}
    Then, if $g_3$ is allocated to agent~$1$, we have that $v_2(A_2) = 2^{-k^3}$ and $v_2(A_1 \setminus \{g_2\}) = 1 - 2^{-k^3+ 1}$ (observe that $v_2(g_2) < v_2(g_3)$ for sufficiently large $k$). 
    This gives us
    \begin{equation*}
        \gamma \leq \frac{2^{-k^3}}{1 - 2^{-k^3+ 1}} = \frac{1}{2^{k^3} - 2} \rightarrow 0
    \end{equation*}
    as $k \rightarrow \infty$, giving us a contradiction.
    On the other hand, if $g_3$ is allocated to agent~$2$, then $v_1(A_1) = 2^{-k^2}$ and $v_1(A_2 \setminus \{g_1\}) = v_1(g_3) =  1 - 2^{-k^3} - 2^{-k^2}$ (observe that $v_1(g_1) < v_1(g_3)$ for some sufficiently large $k$).
    This gives us
    \begin{equation*}
        \gamma \leq \frac{2^{-k^2}}{1 - 2^{-k^3} - 2^{-k^2}} = \frac{1}{2^{k^2} - 2^{-k^3 + k^2} - 1} \rightarrow 0
    \end{equation*}
    as $k \rightarrow \infty$, giving us a contradiction.
    Thus, $g_2$ cannot be allocated to agent~$1$. Let it be allocated to agent~$2$.
    Then, let each successive good $g$ that arrives have value $v_1(g) = 2^{-k^2}$ and $v_2(g) = 2^{-k^3}$ (up to a maximum of $\ell - 2 = 2^{k^2}-2$ such goods). Consider the earliest point in time where the algorithm allocates a good that arrives to agent~$1$. If it exists, let such a good be $g_j$ (i.e., the $j$-th good that arrives overall).
    We split our analysis into two cases.
    \begin{description}
        \item[Case 1: $\{g_3,\dots,g_\ell\} \subset A_2$.] This means that none of the goods that arrived as described above was allocated to agent~$1$. 
        Let the $(\ell+1)$-th and final good be $g_{\ell+1}$, with valuations as follows:
        \begin{center}
        \begin{tabular}{ c | c c c c c c c c}
            $\mathbf{v}$ & $g_1$ & $g_2$  & \dots & $g_\ell$ & $g_{\ell + 1}$ \\
            \hline
            $1$ & $2^{-k^3}$ & $2^{-k^2}$ & \dots & $2^{-k^2}$ & $2^{-k^2} - 2^{-k^3}$\\ 
            $2$ & \circled{$2^{-k^3}$} & \circled{$2^{-k^3}$} & \dots & \circled{$2^{-k^3}$} &$1 - 2^{-k^3+k^2}$
        \end{tabular}
    \end{center}
    If $g_{\ell+1}$ is also allocated to agent~$2$, then $v_1(A_1) = 0$ and the allocation violates $\gamma$-EFX for every $\gamma > 0$, and thus the competitive ratio is $0$, a contradiction.
        Thus, $g_{\ell+1}$ must be allocated to agent~$1$. We have that $v_1(A_1) = 2^{-k^2} - 2^{-k^3}$ and $v_1(A_2) \setminus \{g_1\}) =(2^{k^2} - 1) \cdot 2^{-k^2}= 1 - 2^{-k^2}$ (observe that $v_1(g_1) = \min_{g \in A_2} v_1(g)$ for some sufficiently large $k$).
        This gives us
        \begin{equation*}
            \gamma \leq \frac{2^{-k^2} - 2^{-k^3}}{1 - 2^{-k^2}} = \frac{1 - 2^{-k^3 + k^2}}{2^{k^2} - 1} \rightarrow 0
        \end{equation*}
        as $k \rightarrow \infty$, giving us a contradiction.
    \item[Case 2: $\exists j : j \leq \ell$ such that $g_j \in A_1$.] This means that $\{g_3,\dots,g_{j-1}\} \subset A_2$.
    Let the $(j+1)$-th (and final) good be $g_{j+1}$, with valuations as follows:
    \begin{center}
        \begin{tabular}{ c | c c c c c c c c c}
            $\mathbf{v}$ & $g_1$ & $g_2$  & \dots & $g_{j-1}$ & $g_j$ & $g_{j + 1}$ \\
            \hline
            $1$ & $2^{-k^3}$ & $2^{-k^2}$ & \dots & $2^{-k^2}$ & \circled{$2^{-k^2}$} & $1 - 2^{-k^3} - (j-1) \cdot 2^{-k^2}$\\ 
            $2$ & \circled{$2^{-k^3}$} & \circled{$2^{-k^3}$} & \dots & \circled{$2^{-k^3}$} & $2^{-k^3}$ &$1 - j 2^{-k^3}$
        \end{tabular}
    \end{center}
    Note that $k^3-1 > k^2$ for some sufficiently large $k > 1$. Equivalently, we get (via algebraic manipulation) that
    \begin{equation} \label{eqn:norm_n=2efx_1}
        2^{-k^3+1} < 2^{-k^2}.
    \end{equation}
    Then, since $j \leq \ell = 2^{k^2}$, algebraic manipulation gives us $1 - (j-1) \cdot 2^{-k^2} \geq 2^{-k^2}$.
    Combining this with (\ref{eqn:norm_n=2efx_1}) gives us 
    \begin{equation} \label{eqn:norm_n=2efx_2}
        1 - 2^{-k^3} - (j-1) \cdot 2^{-k^2} > 2^{-k^3}.
    \end{equation}
    Now, if $g_{j+1}$ is allocated to agent~$2$, then $v_1(A_1) = 2^{-k^2}$ and $v_1(A_2 \setminus \{g_1\}) = 1 - 2^{-k^3} - 2^{-k^2}$ (note that by (\ref{eqn:norm_n=2efx_2}), we have that $v_1(g_1) < v_1(g_{j+1})$). This gives us
    \begin{equation*}
        \gamma \leq \frac{2^{-k^2}}{1 - 2^{-k^3} - 2^{-k^2}} = \frac{1}{2^{k^2} - 2^{-k^3 + k^2} - 1} \rightarrow 0
    \end{equation*}
    as $k \rightarrow \infty$, giving us a contradiction.
    
    If instead $g_{j+1}$ is allocated to agent $1$, then $A_1=\{g_j,g_{j+1}\}$ and
    $A_2=\{g_1,g_2,\ldots,g_{j-1}\}$. Hence
    \[
    v_2(A_2)=(j-1)\cdot 2^{-k^3}
    \quad\text{and}\quad
    v_2(A_1\setminus\{g_j\})=v_2(g_{j+1})=1-j\cdot 2^{-k^3}.
    \]
    By $\gamma$-EFX applied to the ordered pair $(2,1)$ and the good $g_j\in A_1$,
    \[
    \gamma \le \frac{v_2(A_2)}{v_2(A_1\setminus\{g_j\})}
    = \frac{(j-1)\cdot 2^{-k^3}}{1-j\cdot 2^{-k^3}}
    \le \frac{2^{-k^3+k^2}}{1-2^{-k^3+k^2}}
    \to 0 \qquad \text{as } k\to\infty,
    \]
    where we used $j\le \ell = 2^{k^2}$. This is again a contradiction.
    \end{description}
    
    In all cases, we arrive at a contradiction, thereby proving our result.

    For the case of $n \ge 3$, the claim follows from \Cref{thm:EFX_n3_norm_identical}, which proves $0$-competitiveness for EFX even under identical valuations (a special case of the present model).
\end{proof}

\efonetotalvalsnthree*
\begin{proof}
    Suppose for a contradiction that there exists a $\gamma$-competitive algorithm for approximating EF1 given normalization information, for some $\gamma \in (0,1]$.
    Let $\varepsilon > 0$ be a sufficiently small constant and $\delta > 1$ be a large constant.
    Let the first two goods be $g_1,g_2$ with valuations as follows:
    \begin{center}
        \begin{tabular}{ c | c c c c c c c c}
            $\mathbf{v}$ & $g_1$ & $g_2$\\
            \hline
            $1$ & $\varepsilon$ & $\frac{1}{n}$\\ 
            $2$ & $\varepsilon$ & $\frac{\varepsilon}{\delta}$\\
            $3$ & $\varepsilon$ & $\frac{\varepsilon}{\delta}$ \\
            $\vdots$ & $\vdots$ & $\vdots$ \\
            $n$ & $\varepsilon$ & $\frac{\varepsilon}{\delta}$
        \end{tabular}
    \end{center}
    Without loss of generality, we assume that $g_1$ is allocated to agent~$1$.
    If $g_2$ is allocated to agent~$1$, let the next $n-1$ (and final) goods be $g_3, \dots, g_{n+1}$ with the following valuations: 
    \begin{center}
        \begin{tabular}{ c | c c c c c c c c c c c }
            $\mathbf{v}$ & $g_1$ & $g_2$  & $g_3$ & $\dots$ & $g_n$ & $g_{n+1}$\\
            \hline
            $1$ & \circled{$\varepsilon$} & \circled{$\frac{1}{n}$} & $\frac{1}{n} + \varepsilon$ & $\dots$ & $\frac{1}{n} + \varepsilon$ & $\frac{1}{n} - (n-1) \varepsilon$\\ 
            $2$ & $\varepsilon$ & $\frac{\varepsilon}{\delta}$ & $\frac{\varepsilon}{\delta^2}$ & $\dots$ & $\frac{\varepsilon}{\delta^2}$ & $1 - \varepsilon - \frac{\varepsilon}{\delta} - (n-2) \frac{\varepsilon}{\delta^2}$\\
            $3$ & $\varepsilon$ & $\frac{\varepsilon}{\delta}$ & $\frac{\varepsilon}{\delta^2}$  & $\dots$ & $\frac{\varepsilon}{\delta^2}$ & $1 - \varepsilon - \frac{\varepsilon}{\delta} - (n-2) \frac{\varepsilon}{\delta^2}$\\
            $\vdots$ & $\vdots$ & $\vdots$ & $\vdots$ & & $\vdots$ & $\vdots$ \\
            $n$ & $\varepsilon$ & $\frac{\varepsilon}{\delta}$ & $\frac{\varepsilon}{\delta^2}$  & $\dots$ & $\frac{\varepsilon}{\delta^2}$ & $1 - \varepsilon - \frac{\varepsilon}{\delta} - (n-2) \frac{\varepsilon}{\delta^2}$\\
        \end{tabular}
    \end{center}
    Suppose each good in $\{g_3,\dots,g_{n+1}\}$ is allocated to a unique agent in $N \setminus \{1\}$ (i.e., each agent in $N \setminus \{1\}$ receives exactly one good).
    Then, consider the agent that received $g_3$, let it be agent $i$.
    However, $v_i(A_i) = v_i(g_3) = \frac{\varepsilon}{\delta^2}$ and $v_i(A_1 \setminus \{g_1\}) = \frac{\varepsilon}{\delta}$, giving us
    \begin{equation*}
        \gamma \leq \frac{\frac{\varepsilon}{\delta^2}}{\frac{\varepsilon}{\delta}} = \frac{1}{\delta} \rightarrow 0
    \end{equation*}
    as $\delta \rightarrow \infty$, a contradiction.
    Thus, we must have that either some good in $\{g_3,\dots,g_{n+1}\}$ is allocated to agent~$1$ and/or some agent in $N \setminus \{1\}$ receives strictly more than one good in the set. However, in either of these cases, by the pigeonhole principle, there must exist an agent in $N \setminus \{1\}$ that receives no good, leading to an allocation whose EF1-approximation factor is $0$.
    Hence, $g_2$ cannot be allocated to agent~$1$.
    
    W.l.o.g., let $g_2$ be allocated to agent~$2$, and the next $n-1$ (and final) goods be $g_3, \dots, g_{n+1}$ with the following valuations:
    \begin{center}
        \begin{tabular}{ c | c c c c c c c c c c c }
            $\mathbf{v}$ & $g_1$ & $g_2$  & $g_3$ & $\dots$ & $g_n$ & $g_{n+1}$\\
            \hline
            $1$ & \circled{$\varepsilon$} & $\frac{1}{n}$ & $\frac{1}{n} + \varepsilon$ & $\dots$ & $\frac{1}{n} + \varepsilon$ & $\frac{1}{n} - (n-1) \varepsilon$\\ 
            $2$ & $\varepsilon$ & \circled{$\frac{\varepsilon}{\delta}$} & $\varepsilon$ & $\dots$ & $\varepsilon$ & $1 - (n-1) \varepsilon - \frac{\varepsilon}{\delta}$\\
            $3$ & $\varepsilon$ & $\frac{\varepsilon}{\delta}$ & $\varepsilon$  & $\dots$ & $\varepsilon$ & $1 - (n-1) \varepsilon - \frac{\varepsilon}{\delta}$\\
            $\vdots$ & $\vdots$ & $\vdots$ & $\vdots$ & & $\vdots$ & $\vdots$ \\
            $n$ & $\varepsilon$ & $\frac{\varepsilon}{\delta}$ & $\varepsilon$  & $\dots$ & $\varepsilon$ & $1 - (n-1) \varepsilon - \frac{\varepsilon}{\delta}$
        \end{tabular}
    \end{center}
    If agent~$1$ and $2$ collectively is allocated at least two goods from $\{g_3,\dots,g_{n+1}\}$, then by the pigeonhole principle, there exists some agent in $N \setminus \{1,2\}$ that will receive no good, and we get a $0$-EF1 allocation, a contradiction.
    Thus, we assume that agent~$1$ and $2$ collectively is allocated at most one good from $\{g_3,\dots,g_{n+1}\}$. We split our analysis into two cases.
    \begin{description}
        \item[Case 1: $|(A_1 \cup A_2) \cap \{g_3,\dots,g_{n+1}\}| = 1$.] 
        Let such a good in the intersection be $g^*$. 
        If $g^*$ was allocated to agent~$1$, then $v_2(A_2)=\varepsilon/\delta$ and
        $v_2\left(A_1\setminus\{g^{*}\}\right)=v_2(g_1)=\varepsilon$ (for sufficiently small $\varepsilon$).
        Consequently we get $\gamma \le (\varepsilon/\delta)/\varepsilon = 1/\delta \to 0$ as $\delta \to \infty$, a contradiction.
        
        Thus, we must have that $g^*$ is allocated to agent~$2$. 
        Then, $v_1(A_1)=\varepsilon$ and for every $g\in A_2$ we have $v_1(A_2 \setminus \{g\}) \ge \frac{1}{n}-(n-1)\varepsilon$, giving us
        \begin{equation*}
            \gamma \leq \frac{\varepsilon}{\frac{1}{n} - (n-1)\varepsilon} \rightarrow 0
        \end{equation*}
        as $\varepsilon \rightarrow 0$, a contradiction.
        \item[Case 2: $|(A_1 \cup A_2) \cap \{g_3,\dots,g_{n+1}\}| = 0$.] Then, by the pigeonhole principle, there exists some agent $i \in N \setminus \{1,2\}$ that receives at least two goods from $\{g_3,\dots,g_{n+1}\}$. 
        Then, $v_2(A_2)=v_2(g_2)=\frac{\varepsilon}{\delta}$ and, since $|A_i|\ge 2$ and every good in $A_i$
        has $v_2$-value at least $\varepsilon$ (for sufficiently small $\varepsilon$), we have
        $v_2(A_i \setminus \{g\}) \ge \varepsilon$ for every $g\in A_i$. Consequently, we get that 
        \begin{equation*}
            \gamma \leq \frac{\frac{\varepsilon}{\delta}}{\varepsilon} = \frac{1}{\delta} \rightarrow 0
        \end{equation*}
        as $\delta \rightarrow \infty$, a contradiction.
    \end{description}
    In both cases, we arrive at a contradiction, and our result follows.
\end{proof}

\lanormalization*
\begin{proof}
    Fix an instance and intervals $(\,[\underline{T}_i,\overline{T}_i]\,)_{i\in N}$.
    Run \Cref{alg:normalization} on $(\widetilde{v}_i)_{i\in N}$ and let $\mathcal{A}$ be the output allocation.
    For each $t \in \{0,1,\dots,m\}$, let $A_i^t$ denote agent $i$'s bundle after the first $t$ goods have been
    allocated (so $A_i^0=\varnothing$ and $A_i^m=A_i$).
    Define the outside-max under $\widetilde{v}_i$ after $t$ allocations as
    \[
    \widetilde{M}_i^t
    := \max\Bigl\{\widetilde{v}_i(g) : g \in \bigcup_{j\neq i} A_j^t \Bigr\}, \text{ and } \max \varnothing := 0.
    \]
    Let $\alpha := \frac{n-1}{n}$.
    
    We first prove that for every agent $i\in N$,
    \begin{equation}
    \label{eq:certificate-tilde}
    \widetilde{v}_i(A_i) + \alpha \cdot \widetilde{M}_i^m \ge \frac{1}{n}.
    \end{equation}

    We split our analysis into two cases.
    
    \textbf{Case 1: agent $i$ is removed from the active set at some iteration.}
    Let $t$ be the first iteration in which \Cref{alg:normalization} removes $i$, i.e.,
    \[
    \widetilde{x}^{\,t}_i
    := \widetilde{v}_i(A_i^{t-1}) + \alpha \cdot \max\{\widetilde{M}_i^{t-1},\widetilde{v}_i(g_t)\}
    \ge \frac{1}{n}.
    \]
    After the removal step in iteration $t$, either the algorithm continues, in which case $g_t$ is allocated to
    some agent $\neq i$ and thus $\widetilde{M}_i^m \ge \max\{\widetilde{M}_i^{t-1},\widetilde{v}_i(g_t)\}$; or the
    early-stopping condition triggers and all remaining goods $\{g_t,\dots,g_m\}$ are assigned to agent~$1$.
    In the early stopping branch, if $i\neq 1$ then again $g_t \notin A_i$ and thus $\widetilde{M}_i^m \ge \max\{\widetilde{M}_i^{t-1},\widetilde{v}_i(g_t)\}$.
    If instead $i=1$, then $g_t\in A_1$ and we have that
    \begin{align*}
        \widetilde{v}_1(A_1) + \alpha \widetilde{M}_1^m
        & \ge
        \widetilde{v}_1(A_1^{t-1}) + \widetilde{v}_1(g_t) + \alpha \widetilde{M}_1^{t-1} \\
        & \ge \widetilde{v}_1(A_1^{t-1}) + \alpha \max\{\widetilde{M}_1^{t-1},\widetilde{v}_1(g_t)\} \\
        & =
        \widetilde{x}^{\,t}_1.
    \end{align*}
    In all sub-cases,
    \begin{align*}
        \widetilde{v}_i(A_i) + \alpha \widetilde{M}_i^m
        & \ge
        \widetilde{v}_i(A_i^{t-1}) + \alpha \max\{\widetilde{M}_i^{t-1},\widetilde{v}_i(g_t)\} \\
        & =
        \widetilde{x}^{\,t}_i
        \ge \frac{1}{n},
    \end{align*}
    establishing~\eqref{eq:certificate-tilde}.
    
    \textbf{Case 2: agent $i$ is never removed (i.e., stays active throughout).}
    Then \Cref{alg:normalization} never triggers early stopping, so all $m$ goods are allocated inside the main loop.
    For each $j\neq i$ with $A_j \neq \varnothing$, let $h_j$ be the last good allocated to agent $j$ and let $t_j$ be its arrival
    time. Since $j$ receives $h_j$ at iteration $t_j$, it must still be active after the removal step in that iteration, and thus
    \begin{equation*}
        \widetilde{x}^{\,t_j}_j =
        \widetilde{v}_j(A_j \setminus \{h_j\}) +
        \alpha \cdot \max\{\widetilde{M}_j^{t_j-1},\widetilde{v}_j(h_j)\} <
        \frac{1}{n}.
    \end{equation*}
    Thus, we have that
    \begin{equation} \label{eq:last-good-bound-j}
        \widetilde{v}_j(A_j \setminus \{h_j\}) + \alpha \widetilde{v}_j(h_j) < \frac{1}{n}.
    \end{equation}
    For agents $j\neq i$ with $A_j=\varnothing$, interpret $h_j$ as a dummy good with $\widetilde{v}_\ell(h_j)=0$ for all $\ell$,
    so~\eqref{eq:last-good-bound-j} holds trivially.
    
    Because $i$ is active throughout, every good in $A_j$ is allocated to $j$ at a time when $i\in N'$; since \Cref{alg:normalization} allocates each arriving good to an agent in $\arg\max_{k\in N'} \tilde v_k(g)$, it follows that for every $g\in A_j$ we have $\tilde v_j(g)\ge \tilde v_i(g)$.
    By additivity, replacing $\widetilde{v}_j$ by $\widetilde{v}_i$ can only decrease the left-hand side of
    \eqref{eq:last-good-bound-j}, giving us
    \begin{equation}
    \label{eq:last-good-bound-i}
    \widetilde{v}_i(A_j \setminus \{h_j\}) + \alpha \widetilde{v}_i(h_j) < \frac{1}{n}.
    \end{equation}
    for every $j\neq i$. 
    Summing~\eqref{eq:last-good-bound-i} over all $j\neq i$ gives us
    \begin{equation} \label{eq:sum-bound}
        \sum_{j\neq i}\widetilde{v}_i(A_j \setminus \{h_j\}) +
        \alpha \sum_{j\neq i}\widetilde{v}_i(h_j) <
        \frac{n-1}{n}
        =
        \alpha.
    \end{equation}
    On the other hand, by additivity and completeness,
    \begin{equation} \label{eq:decompose-total}
        \widetilde{v}_i(G)
        =
        \widetilde{v}_i(A_i)
        +
        \sum_{j\neq i}\widetilde{v}_i(A_j \setminus \{h_j\})
        +
        \sum_{j\neq i}\widetilde{v}_i(h_j).
    \end{equation}
    Combining~\eqref{eq:sum-bound} and~\eqref{eq:decompose-total} gives us
    \begin{align*}
        \widetilde{v}_i(G) & < 
        \widetilde{v}_i(A_i) + \alpha + (1-\alpha)\sum_{j\neq i}\widetilde{v}_i(h_j) \\
        & = \widetilde{v}_i(A_i) + \alpha + \frac{1}{n}\sum_{j\neq i}\widetilde{v}_i(h_j),
    \end{align*}
    and therefore
    \begin{equation}
    \label{eq:key-ineq}
    \widetilde{v}_i(A_i) + \frac{1}{n}\sum_{j\neq i}\widetilde{v}_i(h_j) > \widetilde{v}_i(G)-\alpha.
    \end{equation}
    Since $\widetilde{v}_i(G) = T_i/\underline{T}_i \ge 1$ and $\alpha = \frac{n-1}{n}$, we have
    $\widetilde{v}_i(G)-\alpha \ge 1-\alpha = \frac{1}{n}$, so~\eqref{eq:key-ineq} implies
    \[
    \widetilde{v}_i(A_i) + \frac{1}{n}\sum_{j\neq i}\widetilde{v}_i(h_j) > \frac{1}{n}.
    \]
    Finally, $\sum_{j\neq i}\widetilde{v}_i(h_j) \le (n-1)\max_{j\neq i}\widetilde{v}_i(h_j) \le (n-1)\widetilde{M}_i^m$,
    so the left-hand side is at most $\widetilde{v}_i(A_i)+\alpha \widetilde{M}_i^m$.
    This proves~\eqref{eq:certificate-tilde}.
    
    Next, we prove (ii).
    Fix any agent $i\in N$. If $A_i = G$, we are done.
    Otherwise, let $g^* \in G\setminus A_i$ maximize $\widetilde{v}_i(\cdot)$ among outside goods, so
    $\widetilde{v}_i(g^*) = \widetilde{M}_i^m$.
    Using~\eqref{eq:certificate-tilde} and $\alpha \le 1$,
    \begin{align*}
        \widetilde{v}_i(A_i\cup\{g^*\})
        & =
        \widetilde{v}_i(A_i) + \widetilde{v}_i(g^*) \\
        & =
        \widetilde{v}_i(A_i) + \widetilde{M}_i^m
        \ge
        \widetilde{v}_i(A_i) + \alpha \widetilde{M}_i^m
        \ge
        \frac{1}{n}.
    \end{align*}
    Multiplying by $\underline{T}_i$ gives
    \[
    v_i\left(A_i \cup \{g^{*}\}\right) \ge \frac{\underline{T}_i}{n}.
    \]
    Since $T_i = v_i(G) \le \overline{T}_i$, we have
    \[
    \frac{\underline{T}_i}{n} \ge \frac{\underline{T}_i}{\overline{T}_i}\cdot \frac{T_i}{n}
    = \kappa_i \cdot \frac{v_i(G)}{n}.
    \]

    Finally, we prove (i). 
    Assume $n=2$. Fix $i\in\{1,2\}$ and let $j\neq i$.
    If $A_j=\varnothing$, then the EF1 requirement from $i$ to $j$ is vacuous.
    Otherwise, let $g^* \in A_j$ maximize $\widetilde{v}_i(\cdot)$ over $A_j$; then
    $\widetilde{v}_i(g^*)=\widetilde{M}_i^m$.
    For $n=2$ we have $\alpha=\frac12$, and~\eqref{eq:certificate-tilde} gives
    $\widetilde{v}_i(A_i) + \frac12 \widetilde{M}_i^m \ge \frac12$, i.e.,
    \[
    2\widetilde{v}_i(A_i) + \widetilde{v}_i(g^*) \ge 1.
    \]
    Using additivity and $\widetilde{v}_i(G)=T_i/\underline{T}_i$, we have
    $\widetilde{v}_i(A_j\setminus\{g^*\}) = \widetilde{v}_i(G)-\widetilde{v}_i(A_i)-\widetilde{v}_i(g^*)$, so
    \begin{align*}
        \widetilde{v}_i(A_i) - \widetilde{v}_i(A_j\setminus\{g^*\})
        & =
        2\widetilde{v}_i(A_i) + \widetilde{v}_i(g^*) - \widetilde{v}_i(G) \\
        & \ge 1 - \widetilde{v}_i(G).
    \end{align*}
    Rearranging gives us
    \begin{equation*}
        \widetilde{v}_i(A_i) \ge
        \widetilde{v}_i(A_j\setminus\{g^*\}) - (\widetilde{v}_i(G)-1).
    \end{equation*}
    Multiplying by $\underline{T}_i$ gives
    \[
    v_i(A_i) \ge v_i\!\left(A_j\setminus\{g^*\}\right) - (\tilde v_i(G)-1)\underline{T}_i
    = v_i\left(A_j\setminus\{g^*\}\right) - (T_i-\underline{T}_i).
    \]
    Since $T_i \le \overline{T}_i = \underline{T}_i + \rho_i$, we have $T_i-\underline{T}_i \le \rho_i$, proving the result.
\end{proof}

\subsection{Remark on the Additive EF1 Guarantee for $n=2$ in \Cref{thm:noisy-normalization}} \label{app:sec_remark_additiveEF1}
    The additive EF1 guarantee in Theorem~\ref{thm:noisy-normalization}(i) cannot, in general, be restated as
    an $\alpha$-EF1 guarantee for any $\alpha>0$ (even for $n=2$), because under noisy advice an agent may
    end up with \emph{zero} value while still satisfying the additive inequality. In contrast, PROP1 admits
    the clean multiplicative guarantee in Theorem~\ref{thm:noisy-normalization}(ii).
    We illustrate this gap in Example~\ref{ex:no-mult-ef1-noisy-norm} below.
    
    \begin{example}
    \label{ex:no-mult-ef1-noisy-norm}
    Fix any $\varepsilon \in (0,1)$ and consider $n=2$ agents and two goods $g_1,g_2$ arriving in this order.
    Let the true valuations be
    \[
    v_1(g_1)=1, v_1(g_2)=0 \qquad\text{and}\qquad
    v_2(g_1)=1-\varepsilon, v_2(g_2)=\varepsilon,
    \]
    so $T_1=T_2=1$.
    Give the algorithm the certified intervals
    \[
    [\underline{T}_1,\overline{T}_1]=[1,1]
    \qquad\text{and}\qquad
    [\underline{T}_2,\overline{T}_2]=[1-\varepsilon,1],
    \]
    so $\rho_2=\varepsilon$.
    Run \Cref{alg:normalization} on the scaled valuations $\widetilde{v}_i=v_i/\underline{T}_i$.
    At time $t=1$, we have $\widetilde{v}_1(g_1)=1$ and $\widetilde{v}_2(g_1)=(1-\varepsilon)/(1-\varepsilon)=1$.
    Since $n=2$, \Cref{alg:normalization} computes for both agents
    \[
    \widetilde{x}^{\,1}_i
    =
    \widetilde{v}_i(\varnothing)
    +
    \frac{1}{2}\max\{0,\widetilde{v}_i(g_1)\}
    =
    \frac12,
    \]
    and thus removes both agents from the active set, triggering the early-stopping rule and assigning
    \emph{all remaining goods} $\{g_1,g_2\}$ to agent~$1$.
    Hence $A_1=\{g_1,g_2\}$ and $A_2=\varnothing$.
    
    This output is \emph{not} $\alpha$-EF1 for any $\alpha>0$:
    indeed $v_2(A_2)=0$, while for either $g\in A_1$ we have
    $v_2(A_1\setminus\{g\})>0$, so $0 \ge \alpha\cdot v_2(A_1\setminus\{g\})$ fails.
    Nevertheless, the additive EF1 guarantee of
    Theorem~\ref{thm:noisy-normalization}(i) holds for agent~$2$:
    choosing $g=g_1$ gives us
    \[
    v_2(A_2)=0 \ge v_2(A_1\setminus\{g_1\})-\rho_2
    = v_2(\{g_2\})-\varepsilon
    = \varepsilon-\varepsilon
    =0.
    \]
    Thus, even for $n=2$, robustness for EF1 in this advice model is naturally \emph{additive}, whereas PROP1
    admits the multiplicative factor $\kappa_i=\underline{T}_i/\overline{T}_i$.
    \end{example}
\section{Omitted Proofs from \Cref{sec:freq}}

\feasibleshare*

\begin{proof}
    Let  $g_1,\dots,g_m$ be the order of goods that arrive in the online setting.
    Fix some $t \in [m]$.
    Given an agent $i \in N$, let $V_i^t$ denote the frequency multiset for goods in $\{g_t,\dots,g_m\}$. We note that $V^1_i = V_i$ is agents $i$'s frequency multiset for all  goods that will arrive.
    We say that a valuation function $v_i$ of agent $i$ is \emph{consistent} with $V_i^t$ if $V_i^t = \{v_i(g_t),\dots,v_i(g_m)\}$.
    Let $v_i^{t,\ido}$ be a valuation function of agent $i$ that is consistent with $V_i^t$ and where for each $t \le j \le j'$, $v^{t,\ido}_i(g_j) \ge v^{t,\ido}_i(g_{j'})$, with the valuation profile $\mathbf{v}^{t,\ido} = (v_1^{t,\ido},\dots,v_n^{t,\ido})$.\footnote{\ido{} is shorthand for \emph{identical ordering}, as is standard in the literature.} 
    We fix $v_i^{t,\ido}$ deterministically as follows: sort the multiset $V_i^t$
    in non-increasing order (breaking ties arbitrarily but consistently), and assign the $k$-th largest value to good $g_{t+k-1}$ for each $k\in[m-t+1]$.

    Now, consider a single-shot fair division instance with goods $\{g_t,\dots,g_m\}$.
    Let $\pi^t = \left(a_t,\dots,a_m\right)$ be a \emph{picking sequence} over goods $\{g_t,\dots,g_m\}$, whereby $a_i \in N$ for each $i \in \{t,\dots, m\}$, and agents take turns (according to $\pi^t$) picking their highest-valued good (according to their valuation function) from among the remaining goods available, breaking ties in a consistent manner (say, in favor of lower-indexed goods).
    Then, given a valuation profile $\mathbf{v} = (v_1,\dots,v_n)$ and picking sequence $\pi^t$, we denote $\mathcal{A}^{\pi^t,\mathbf{v}} = \left(A_1^{\pi^t,\mathbf{v}}, \dots, A_n^{\pi^t,\mathbf{v}}\right)$ as the allocation returned in the single-shot instance over the set of goods $\{g_t,\dots,g_m\}$ induced by the picking sequence $\pi^t$ and valuation profile $\mathbf{v}$.

    We then obtain the following result, which will be useful later on.

    \begin{lemma} \label{lem:freqpred_mms_lem}
        Fix $t \in [m]$. For any picking sequence $\pi^t$ and any valuation profiles $\mathbf{v} = (v_1,\dots,v_n)$ and $\mathbf{v}^{t,\emph{\ido}} = (v^{t,\emph{\ido}}_1,\dots,v^{t,\emph{\ido}}_n)$ such that for each $i \in N$, $v_i$ and $v_i^{t,\emph{\ido}}$ are both consistent with $V_i^t$, we have that
        \begin{equation*}
            v_i\left(A^{\pi^t,\mathbf{v}}_i\right) \geq v_i^{t,\emph{\ido}}\left(A^{\pi^t,\mathbf{v}^{t,\emph{\ido}}}_i\right).
        \end{equation*}
    \end{lemma}
    \begin{proof}
        Consider any $t \in [m]$. Suppose we have a picking sequence $\pi^t$ and valuation profiles $\mathbf{v} = (v_1,\dots,v_n)$ and $\mathbf{v}^{t,\ido} = (v^{t,\ido}_1,\dots,v^{t,\ido}_n)$ such that for each $i \in N$, $v_i$ and $v_i^{t,\ido}$ are both consistent with $V_i^t$.
    
        An equivalent way to express a picking sequence $\pi^t$ is by looking at the \emph{turns} each agent gets to pick a good. 
        Note that for goods $\{g_t, \dots, g_m\}$, there will be a total of exactly $m-t+1$ turns (where at each turn, an agent gets to pick his favorite remaining good).
        
        We introduce several notation:
        \begin{itemize}
            \item Let $\mathsf{turns}_i \subseteq [m-t+1]$ be the set of turns whereby agents $i$ gets to pick the good under $\pi^t$.
            Note that $\bigcup_{i \in N} \mathsf{turns}_i = [m-t+1]$ and for any $i \neq j$, $\mathsf{turns}_i\cap \mathsf{turns}_j = \varnothing$. 
            \item Let $\mathsf{top}_i(k)$ be the $k$-th largest element (counting multiplicity) of the multiset $V^t_i$, for $k \in [m-t+1]$.
            \item Let $h_{k,\mathbf{v}}$ be the $k$-th good that was picked according to $\pi^t$ under $\mathbf{v}$.
        \end{itemize}
        Consider any agent $i \in N$. Note that for each $k \in \mathsf{turns}_i$ (i.e., in the $k$-th entry of $\pi^t$, it is agent~$i$'s turn to pick), under $\mathbf{v}^{t,\ido}$, since goods in $\{g_t,\dots, g_m\}$ are identically ordered according to all agents' valuations, agent~$i$ will get to (and must) pick her $k$-th favorite good, i.e., $v^{t,\ido}_i(h_{k, \mathbf{v}^{t,\ido}}) = \mathsf{top}_i(k)$.
        Summing over all such $k \in \mathsf{turns}_i$, we get
        \begin{equation}\label{eqn:share_lem_IDO1}
            v_i^{t,\ido}\left(A^{\pi^t,\mathbf{v}^{t,\ido}}_i\right) = \sum_{k \in \mathsf{turns}_i}  v_i^{t,\ido}(h_{k,\mathbf{v}^{t,\ido}}) = \sum_{k \in \mathsf{turns}_i}  \mathsf{top}_i(k).
        \end{equation}
        
        Now, for each $k \in \mathsf{turns}_i$, under $\mathbf{v}$, we note that agent $i$'s value for the good picked will be at least her value for her $k$-th favorite good, i.e., $v_i(h_{k,\mathbf{v}})\geq \mathsf{top}_i(k)$ as only $k-1$ goods have been selected before this turn.
        Summing over all $k \in \mathsf{turns}_i$, we get
        \begin{equation}\label{eqn:share_lem_IDO2}
            v_i\left(A^{\pi^t,\mathbf{v}}_i\right) =  \sum_{k \in \mathsf{turns}_i}  v_i(h_{k,\mathbf{v}}) \geq \sum_{k \in \mathsf{turns}_i}  \mathsf{top}_i(k).
        \end{equation}

        Combining (\ref{eqn:share_lem_IDO1}) and (\ref{eqn:share_lem_IDO2}), we get $v_i\left(A^{\pi^t,\mathbf{v}}_i\right) \geq v_i^{t,\ido}\left(A^{\pi^t,\mathbf{v}^{t,\ido}}_i\right)$, as desired.
    \end{proof}

    Next, we introduce the following algorithm (\Cref{alg:frequency}). Note that given frequency predictions, we know the exact \emph{number} of goods that will arrive (let it be $m$). This allows us to use a \textbf{for} loop rather than a \textbf{while} loop.
    
    \begin{algorithm}[h]
    \caption{Algorithm given frequency predictions and a picking sequence $\pi^1$}
    \label{alg:frequency}
\begin{algorithmic}[1]
    \STATE Initialize $A_i \leftarrow  \varnothing$ and $V_i^1 \leftarrow V_i$ for all $i \in N$
    \FOR{$t \in [m]$}
        \FOR{$i \in N$}
            \STATE $V_i^{t+1} \leftarrow V_i^{t} \setminus \{v_i(g_t)\}$
            \STATE Define $v_i^t$ as follows
            \[
        v_i^t(g_j) := 
        \begin{cases}
            v_i(g_t) & \text{if } j = t \\
            v_i^{t+1, \ido}(g_j) & \text{otherwise}
        \end{cases}
        \] \label{line:alg_freq_piecewise}
        \ENDFOR  
        \STATE Let $\mathbf{v}^t := (v^t_1, \dots, v_n^t)$
        \STATE Let $i^* \in N$ be the agent such that $g_t \in A_{i^*}^{\pi^t, \mathbf{v}^t}$
        \STATE Let $\pi^{t+1}$ be equivalent to $\pi^{t}$, except the entry of agent $i^*$ that picked $g_t$ is removed \label{line:alg_freq_newpi}
        \STATE $A_{i^*} \leftarrow A_{i*} \cup \{g_t\}$
    \ENDFOR
    \STATE \textbf{return} $\mathcal{A} = \left(A_1, \dots, A_n\right)$
    \end{algorithmic}
    \end{algorithm}

    Note that simulating the picking sequence on the remaining goods gives a $\mathcal{O}(mn)$ time implementation per round and
    $\mathcal{O}(m^2 n)$ total time.

    Next, we prove the following property about \Cref{alg:frequency}.
    \begin{lemma}\label{lem:freq_algo}
        Given the picking sequence $\pi^{1}$, Algorithm~\ref{alg:frequency} returns an allocation $\mathcal{A}=(A_1,\ldots,A_n)$ such that for every \ido{} valuation profile $\mathbf{v}^{1,\ido}=(v_1^{1,\ido},\ldots,v_n^{1,\ido})$ with $v_i^{1,\ido}$ consistent with $V_i$ for all $i\in N$, we have for all $i\in N$,
        \emph{\begin{equation*}
            v_i(A_i) \ge v_i^{1,\ido}\left(A_i^{\pi^{1}, \mathbf{v}^{1,\ido}}\right)
            \quad \text{and} \quad 
            |A_i| = \left|A_i^{\pi^{1}, \mathbf{v}^{1,\ido}}\right|.
        \end{equation*}}
    \end{lemma}
    \begin{proof}
        For each $i \in N$ and $t \in [m]$, let $A_i^t$ denote the partial allocation of agent $i$ \textit{after} $g_t$ is allocated to an agent and let $A_i^0 = \varnothing$ be the initial empty allocation. We will prove that for every $i \in N$ and $k \in \{0, \dots, m\}$,

        \begin{equation} \label{eqn:freq_induction_1}
            v_i(A_i^k) + v_i^{k+1, \ido}\left(A^{\pi^{k+1}, \mathbf{v}^{k+1,\ido}}_i\right)\geq v_i^{1,\ido}\left(A^{\pi^{1}, \mathbf{v}^{1,\ido}_i}\right), \text{ and}
        \end{equation}
        \begin{equation} \label{eqn:freq_induction_2}
             \quad |A_i^k| + \left| A^{\pi^{k+1}, \mathbf{v}^{k+1,\ido}}_i \right| = \left| A^{\pi^1, \mathbf{v}^{1,\ido}}_i \right|
        \end{equation}
        We proceed by induction. First, consider the base case: we are given that $A_i^0 = \varnothing$ for all $i \in N$.
        Thus, for all $i \in N$, we have that
        \begin{enumerate}[(i)] 
            \item $v_i(A_i^0) + v_i^{1,\ido}\left(A^{\pi^{1}, \mathbf{v}^{1,\ido}}_i\right) = v_i(\varnothing) + v_i^{1,\ido}\left(A^{\pi^{1}, \mathbf{v}^{1,\ido}}_i\right) \geq v_i^{1,\ido}\left(A^{\pi^{1}, \mathbf{v}^{1,\ido}}_i\right)$, and
            \item $|A_i^0| + \left| A^{\pi^{1}, \mathbf{v}^{1,\ido}}_i \right| = 0 + \left| A^{\pi^{1}, \mathbf{v}^{1,\ido}}_i \right| = \left| A^{\pi^{1}, \mathbf{v}^{1,\ido}}_i \right|$,
        \end{enumerate}
        thus proving the base case where $k = 0$.

        Next, suppose that for each $i \in N$ and $\ell \in \{0,\dots,m-1\}$, we have
        \begin{equation}\label{eqn:IH_1}
            v_i(A_i^\ell) + v_i^{\ell+1, \ido}\left(A^{\pi^{\ell+1}, \mathbf{v}^{\ell+1,\ido}}_i\right)\geq v_i^{1,\ido}\left(A^{\pi^{1}, \mathbf{v}^{1,\ido}_i}\right), \text{ and}
        \end{equation}
        \begin{equation}\label{eqn:IH_2}
             \quad |A_i^\ell| + \left| A^{\pi^{\ell+1}, \mathbf{v}^{\ell+1,\ido}}_i \right| = \left| A^{\pi^1, \mathbf{v}^{1,\ido}}_i \right|.
        \end{equation}
        We will show that
        \begin{equation}\label{eqn:IH_1_2}
            v_i(A_i^{\ell+1}) + v_i^{\ell+2, \ido}\left(A^{\pi^{\ell+2}, \mathbf{v}^{\ell+2,\ido}}_i\right)\geq v_i^{1,\ido}\left(A^{\pi^{1}, \mathbf{v}^{1,\ido}_i}\right), \text{ and}
        \end{equation}
        \begin{equation}\label{eqn:IH_2_2}
             \quad |A_i^{\ell+1}| + \left| A^{\pi^{\ell+2}, \mathbf{v}^{\ell+2,\ido}}_i \right| = \left| A^{\pi^1, \mathbf{v}^{1,\ido}}_i \right|.
        \end{equation}

        Now, we have that for all $i \in N$ and any $v^{\ell+1,\ido}_i$, it is easy to observe that
        \begin{equation} \label{eqn:size_eq_picking}
            \left| A_i^{\pi^{\ell+1},\mathbf{v}^{\ell+1}} \right| = \left| A_i^{\pi^{\ell+1},\mathbf{v}^{\ell+1,\ido}} \right|,
        \end{equation}
        since under the same picking sequence $\pi^{\ell+1}$, any agent $i$ should have the same number of goods, regardless of her valuation function.

        Let agent $i^* \in N$ be the agent that is allocated $g_{\ell+1}$ by the algorithm. This gives us
        \begin{equation} \label{eqn:freq_alg_proof1}
            A_i^{\ell+1} = A_i^\ell \text{ for all } i \in N \setminus \{i^*\}, \text{ and } A_{i^*}^{\ell+1} = A_{i^*}^\ell \cup \{g_{\ell+1}\}.
        \end{equation}
        
        Now, since $\pi^{\ell+2}$ is equivalent to $\pi^{\ell+1}$ but with the entry of agent $i^*$ that picked $g_{\ell+1}$ removed (see \Cref{line:alg_freq_newpi} of the algorithm), we have that
        \begin{equation} \label{eqn:freq_alg_proof2}
            A_i^{\pi^{\ell+2},\mathbf{v}^{\ell+2,\ido}} = A_i^{\pi^{\ell+1},\mathbf{v}^{\ell+1}} \text{ for all } i \in N \setminus \{i^*\}, \text{ and } A_{i^*}^{\pi^{\ell+2},\mathbf{v}^{\ell+2,\ido}} = A_{i^*}^{\pi^{\ell+1},\mathbf{v}^{\ell+1}} \setminus \{g_{\ell+1}\},
        \end{equation}
        This holds because in the run of $\pi^{\ell+1}$ under $\mathbf{v}^{\ell+1}$, the good $g_{\ell+1}$ is selected exactly once (by agent $i^*$ at a unique turn); deleting that turn from the picking sequence and deleting $g_{\ell+1}$ from the goods set gives us the same remaining pick process on $\{g_{\ell+2},\ldots,g_m\}$ (deterministic tie-breaking), and $\mathbf{v}^{\ell+1}$ coincides with $\mathbf{v}^{\ell+2,\ido}$ on $\{g_{\ell+2},\ldots,g_m\}$.
        
        Lastly, we note that by Line~$5$ of the algorithm, for each $i \in N$ and $j \in \{\ell+2,\dots,m\}$,
        \begin{equation} \label{eqn:freq_alg_proof3}
            v_i^{\ell+1}(g_j) = v_i^{\ell+2,\ido}(g_j).
        \end{equation}
        Then, we have that for all agents $i \in N \setminus\{i^*\}$, 
        \begin{align*}
            v_i(A_i^{\ell+1}) + v_i^{\ell+2, \ido}\left(A_i^{\pi^{\ell+2}, \mathbf{v}^{\ell+2, \ido}} \right)
            & = v_i(A_i^{\ell}) + v_i^{\ell+2, \ido}\left(A_i^{\pi^{\ell+2}, \mathbf{v}^{\ell+2, \ido}}\right) \quad \text{(by (\ref{eqn:freq_alg_proof1}))}\\
            & = v_i(A_i^\ell)+ v^{\ell+1}_i\!\left( A^{\pi^{\ell+1}, \mathbf{v}^{\ell+1}}_i \right)
            \quad \text{(by (\ref{eqn:freq_alg_proof2}) and (\ref{eqn:freq_alg_proof3}))} \\
            & \geq v_i(A_i^{\ell}) + v_i^{\ell+1, \ido}\left(A^{\pi^{\ell+1}, \mathbf{v}^{\ell+1,\ido}}_i\right) \quad \text{(by Lemma \ref{lem:freqpred_mms_lem})}\\
            &\geq v_i^{1,\ido}\left(A^{\pi^{1}, \mathbf{v}^{1,\ido}}_i\right) \quad \text{(by (\ref{eqn:IH_1})})
        \end{align*}
        and
        \begin{align*}
         |A_i^{\ell+1}| + \left| A^{\pi^{\ell+2}, \mathbf{v}^{\ell+2,\ido}}_i \right|  & =  |A_i^{\ell}| + \left| A_i^{\pi^{\ell+1},\mathbf{v}^{\ell+1}} \right| \quad \text{(by (\ref{eqn:freq_alg_proof1}) and  (\ref{eqn:freq_alg_proof2}))} \\
            &= |A_i^{\ell}| + \left| A^{\pi^{\ell+1}, \mathbf{v}^{\ell+1,\ido}}_i \right| \quad \text{(by (\ref{eqn:size_eq_picking}))}\\
            &= \left| A^{\pi^1, \mathbf{v}^{1,\ido}}_i \right| \quad \text{(by (\ref{eqn:IH_2}))}.
        \end{align*}
    Since $A^{\pi^{\ell+1},\mathbf{v}^{\ell+1}}_{i^*}\setminus\{g_{\ell+1}\}\subseteq\{g_{\ell+2},\ldots,g_m\}$,
    (\ref{eqn:freq_alg_proof3}) implies
    \[
    v^{\ell+2,\ido}_{i^*}\left(A^{\pi_{\ell+1},\,v^{\ell+1}}_{i^*}\setminus\{g_{\ell+1}\}\right)
    =
    v^{\ell+1}_{i^*}\left(A^{\pi^{\ell+1}, \mathbf{v}^{\ell+1}}_{i^*}\setminus\{g_{\ell+1}\}\right).
    \]
    Also, by Line~$5$ of the algorithm, $v^{\ell+1}_{i^*}(g_{\ell+1})=v_{i^*}(g_{\ell+1})$.
    By additivity, the sum becomes
    \[
    v_{i^*}(A_{i^*}^{\ell}) + v^{\ell+1}_{i^*}\left(A^{\pi^{\ell+1}, \mathbf{v}^{\ell+1}}_{i^*}\right).
    \]
    Now, we have that
    \begin{align*}
        & v_{i^*}(A_{i^*}^{\ell+1}) + v_{i^*}^{\ell+2, \ido}\left(A_{i^*}^{\pi^{\ell+2}, \mathbf{v}^{\ell+2, \ido}} \right) \\
        &= v_{i^*}\left(A_{i^*}^{\ell} \cup \{g_{\ell+1}\}\right) 
        + v_{i^*}^{\ell+2, \ido}\left( A_{i^*}^{\pi^{\ell+1},\mathbf{v}^{\ell+1}} \setminus \{g_{\ell+1}\} \right)
        \quad \text{(by (\ref{eqn:freq_alg_proof1}) and (\ref{eqn:freq_alg_proof2}))} \\
        & \ge v_{i^*}(A_{i^*}^{\ell}) + v^{\ell+1,\ido}_{i^*}\left(A^{\pi^{\ell+1},\mathbf{v}^{\ell+1,\ido}}_{i^*}\right) \quad \text{(by Lemma \ref{lem:freqpred_mms_lem})}.
    \end{align*}
    and
    \begin{align*}
        |A_{i^*}^{\ell+1}| + \left|A_{i^*}^{\pi^{\ell+2}, \mathbf{v}^{\ell+2,\ido}}\right| 
        & = \left|A_{i^*}^{\ell} \right| + \left| A_{i^*}^{\pi^{\ell+1},\mathbf{v}^{\ell+1}} \right| \quad \text{(by (\ref{eqn:freq_alg_proof1}) and (\ref{eqn:freq_alg_proof2}))} \\
        & = \left|A_{i^*}^{\ell}| + |A_{i^*}^{\pi^{\ell+1},\mathbf{v}^{\ell+1,\ido}} \right| \\
        & = \left|A_{i^*}^{\pi^1, \mathbf{v}^{1,\ido}}\right| \quad \text{(by (\ref{eqn:IH_2}))}.
    \end{align*}
    Thus, by induction, (\ref{eqn:freq_induction_1}) and (\ref{eqn:freq_induction_2}) holds.
    
    Now, when \Cref{alg:frequency} ends (i.e., when $t = m$), we have that for all $i \in N$,
    \begin{enumerate}[(i)]
        \item $v_i\left(A_i^m\right) + v_i^{m+1, \ido}\left(A^{\pi^{m+1}, \mathbf{v}^{m+1,\ido}}_i\right)\geq v_i^{1,\ido}\left(A^{\pi^{1}, \mathbf{v}^{1,\ido}_i}\right)$, and
        \item $|A_i^m| + \left| A^{\pi^{m+1}, \mathbf{v}^{m+1,\ido}}_i \right| = \left| A^{\pi^1, \mathbf{v}^{1,\ido}}_i \right|$,
    \end{enumerate}
    where we let $V_i^{m+1} = \varnothing$.
    Consequently, we get that $A^{\pi^{m+1}, \mathbf{v}^{m+1,\ido}}_i = \varnothing$ for all $i \in N$. 
    This gives us, for all $i \in N$,
    \begin{equation*}
        v_i(A_i) = v_i\left(A_i^m\right) \geq v_i^{1,\ido}\left(A^{\pi^{1}, \mathbf{v}^{1,\ido}_i}\right) \quad \text{and} \quad |A_i| = |A_i^m| = |A^{\pi^1, \mathbf{v}^{1,\ido}}_i|,
    \end{equation*}
    as desired, and concludes our proof for this lemma.
    \end{proof}
    Finally, let $s$ be any share that is feasible in the single-shot fair division setting. This means that for any IDO valuation profile
    $\mathbf{v}^{1,\ido}=(v_1^{1,\ido},\ldots,v_n^{1,\ido})$ such that $v_i^{1,\ido}$ is consistent with $V_i=V_i^{1}$ for every $i\in N$, there exists an allocation $\mathcal{X}=(X_1,\ldots,X_n)$ such that for all $i\in N$,
    \begin{equation*}
        v_i^{1,\ido}(X_i)\ge s(V_i,n).
    \end{equation*}
    Let $\pi$ be the picking sequence such that $\mathcal{X} = \mathcal{A}^{\pi,\mathbf{v}^{1,\ido}}$.
    We claim there exists a picking sequence $\pi$ such that $\mathcal{X}=\mathcal{A}^{\pi, \mathbf{v}^{1,\ido}}$.
    Since $\mathbf{v}^{1,\ido}$ is identically ordered, the good chosen at the $k$-th pick
    is deterministically $g_k$ (ties are broken toward lower-indexed goods).
    Therefore, defining $\pi$ so that its $k$-th entry equals the agent who receives $g_k$
    under $\mathcal{X}$ gives us $\mathcal{A}^{\pi, \mathbf{v}^{1,\ido}}=\mathcal{X}$.

    Then, using $\pi$ and $(V_i)_{i \in N}$ as inputs to \Cref{alg:frequency}, by \Cref{lem:freq_algo}, the algorithm will return an allocation $\mathcal{A} = (A_1,\dots, A_n)$ such that for all $i \in N$,
    \begin{equation*}
        v_i(A_i) \geq v_i^{1,\ido}(A^{\pi,\mathbf{v}^{1,\ido}}_i) = v_i^{1,\ido}(X_i) \geq s(V_i, n),
    \end{equation*}
    giving us our result as desired.
\end{proof}

\freqefx*
\begin{proof}
    For $i\in\{1,2\}$ let $T_i := v_i(G) = \sum_{x\in V_i} x$, which is known from the frequency multisets.
    If $T_i = 0$ for some $i$, then output the allocation that assigns all goods to the other agent.
    This allocation is EFX since valuations are nonnegative.
    Hence assume $T_1,T_2>0$.
    
    Now rescale each agent's valuation by its total value: define
    $\widehat v_i(\cdot):=v_i(\cdot)/T_i$ for $i\in\{1,2\}$.
    Let $\widehat V_i$ denote the corresponding (normalized) frequency multiset,
    i.e., $\widehat V_i := \{x/T_i : x\in V_i\}$ (counting multiplicity).
    Since EFX is invariant under multiplying $v_i$ by a positive
    constant (the inequality for agent $i$ is scaled on both sides), it suffices to
    prove the claim for the normalized instance $(\widehat v_1,\widehat v_2)$ with
    predictions $(\widehat V_1,\widehat V_2)$.
    For readability, we drop hats and assume $v_1(G)=v_2(G)=1$.
    
    Let $v^{1,\ido}=(v^{1,\ido}_1,v^{1,\ido}_2)$ be the \ido{} valuation profile consistent with $(V_1,V_2)$.
    Run the \emph{leximin++ cut-and-choose} procedure of \citet{plaut2018efx} on $v^{1,\ido}$ as follows:
    agent $1$ chooses an (unordered) partition $\{X,Y\}$ of $G$ that minimizes
    $|v^{1,\ido}_1(X)-v^{1,\ido}_1(Y)|$; among minimizers, choose one that maximizes the cardinality of the bundle
    with smaller $v^{1,\ido}_1$-value. Agent $2$ then chooses her preferred bundle under $v^{1,\ido}_2$.
    Let $A'=(A'_1,A'_2)$ denote the resulting allocation (so $A'_2$ is the bundle chosen by agent $2$).
    
    Define a picking sequence $\pi^1$ on $G$ by setting $\pi^1_k=i$ if and only if $g_k\in A'_i$.
    Fix the deterministic tie-breaking rule for picking sequences to select the
    lowest-index good among an agent's most-preferred remaining goods.
    Since $\mathbf{v}^{1,\ido}$ is identically ordered, the good selected at the
    $k$-th pick is $g_k$, and thus defining $\pi^{1,k}=i$ iff $g_k\in A_i'$ gives us
    $\mathcal{A}_{\pi^1,\mathbf{v}^{1,\ido}}=\mathcal{A}'$.
    
    Run Algorithm~\ref{alg:frequency} on the true online instance using inputs $(V_1,V_2)$ and $\pi^1$, and let $\mathcal{A}=(A_1,A_2)$ be its output.
    By \Cref{lem:freq_algo}, for $i\in\{1,2\}$ we have
    \begin{equation} \label{freq:n=2_efx_1}
        v_i(A_i)\ge v^{1,\ido}_i(A'_i) \quad\text{and}\quad |A_i|=|A'_i|.
    \end{equation}
    
    We first show agent $2$ is EFX. Since agent $2$ chooses her preferred bundle in cut-and-choose,
    \begin{equation} \label{freq:n=2_efx_2}
        v^{1,\ido}_2(A'_2)\ \ge\ \frac{1}{2}\,v^{1,\ido}_2(G)\ =\ \frac{1}{2}.
    \end{equation}
    Combining \eqref{freq:n=2_efx_1} with \eqref{freq:n=2_efx_2} gives $v_2(A_2)\ge 1/2$, hence $v_2(A_1)=1-v_2(A_2)\le 1/2$.
    Therefore, for every $g\in A_1$, we have $v_2(A_2)\ge v_2(A_1)\ge v_2(A_1\setminus\{g\})$, so agent~$2$ satisfies EFX.
    
    It remains to show agent~$1$ is EFX. If $v^{1,\ido}_1(A'_1)\ge v^{1,\ido}_1(A'_2)$, then
    $v^{1,\ido}_1(A'_1)\ge 1/2$ and by \eqref{freq:n=2_efx_1} we get $v_1(A_1)\ge 1/2$, which implies
    $v_1(A_1)\ge v_1(A_2)\ge v_1(A_2\setminus\{g\})$ for all $g\in A_2$. Thus, assume
    \begin{equation} \label{freq:n=2_efx_3}
        v^{1,\ido}_1(A'_1) < v^{1,\ido}_1(A'_2).
    \end{equation}
    Let $a:=v^{1,\ido}_1(A'_1)$, so $a<1/2$ and $v^{1,\ido}_1(A'_2)=1-a$.
    
    Suppose for contradiction that agent~$1$ violates EFX in $\mathcal{A}$. Then there exists some $g\in A_2$ with
    $v_1(A_1)<v_1(A_2\setminus\{g\})$. 
    Let $g^*\in\argmin_{h\in A_2} v_1(h)$; since removing the least-valued good leaves the
    largest remainder, we also have
    \begin{equation} \label{freq:n=2_efx_4}
    v_1(A_1) < v_1(A_2\setminus\{g^*\}).
    \end{equation}
    Because $v_1$ and $v^{1,\ido}_1$ are both consistent with $V_1$, there exists a bijection $f:G\to G$ such that
    $v_1(g)=v^{1,\ido}_1(f(g))$ for all $g\in G$.
    Define
    \[
    B_2 := \{f(g)\mid g\in A_2\setminus\{g^*\}\},
    \quad
    B_1 := G\setminus B_2 .
    \]
    Then $\mathcal{B}=(B_1,B_2)$ is a partition of $G$ and, by construction,
    \[
    v^{1,\ido}_1(B_2)=v_1(A_2\setminus\{g^*\}).
    \]
    Let $x:=v_1(A_1)$ and $\delta:=v_1(g^*)\ge 0$. 
    Since $v_1(G)=1$ and $A_1\cup A_2=G$, we have
    $v_1(A_2\setminus\{g^*\})=1-x-\delta$. By \eqref{freq:n=2_efx_4}, $1-x-\delta>x$, and by \eqref{freq:n=2_efx_1}, $x\ge a$.
    Thus,
    \[
        v^{1,\ido}_1(B_2)=1-x-\delta > x \ge a.
    \]
    Also,
    \[
        v^{1,\ido}_1(B_2)=1-x-\delta \le 1-a,
    \]
    with equality only if $x=a$ and $\delta=0$.
    If the inequality is strict, then $v^{1,\ido}_1(B_2)\in(a,1-a)$.
    Since $a<1/2$ and $v_1^{1,\ido}(B_2)\in(a,1-a)$, we have $|1-2v_1^{1,\ido}(B_2)|<|1-2a|$.
    Thus,
    \[
    |v^{1,\ido}_1(B_1)-v^{1,\ido}_1(B_2)|
    =|1-2v^{1,\ido}_1(B_2)|
    <|1-2a|
    =|v^{1,\ido}_1(A'_1)-v^{1,\ido}_1(A'_2)|,
    \]
    contradicting the choice of $\{A'_1,A'_2\}$ as a minimizer of $|v^{1,\ido}_1(X)-v^{1,\ido}_1(Y)|$.
    
    In the remaining case, we have $v^{1,\ido}_1(B_2)=1-a$ and $v^{1,\ido}_1(B_1)=a$, so $\mathcal{B}$ is also
    a minimizer of the absolute difference. 
    Moreover,
    $|B_1|=|A_1|+1=|A'_1|+1$ (using \eqref{freq:n=2_efx_1}), and $B_1$ is the smaller-valued bundle for agent~$1$
    (because $a<1-a$). 
    This contradicts the tie-breaking rule that, among difference-minimizing partitions, maximizes the
    cardinality of the smaller-valued bundle.
    
    Therefore, \eqref{freq:n=2_efx_4} is impossible and agent $1$ satisfies EFX. Thus, $\mathcal{A}$ is EFX.
\end{proof}

\freqEFXnthree*
\begin{proof}
    Suppose for a contradiction that there exists a $\gamma$-competitive algorithm for approximating EFX when $n \geq 3$ and with frequency predictions, for some $\gamma \in (0,1]$.
    Let $\varepsilon > 0$ be a small constant and $K > 0$ be  sufficiently large.
    Let $V_i = \{K^2,\dots,K^2, K,\varepsilon,\varepsilon\}$ be the multiset with $n+1$ elements for each $i \in N$.
    Let the $n+1$ goods be $g_1,\dots,g_{n+1}$ with valuations as follows:
    \begin{center}
        \begin{tabular}{ c | c c c c c c c c}
            $\mathbf{v}$ & $g_1$ & $g_2$  & $g_3$ & $g_4$ & $\dots$ & $g_n$ & $g_{n+1}$\\
            \hline
            $1$ & $K$ & $K^2$ & $\varepsilon$ & $K^2$ & $\dots$ & $K^2$ &  $\varepsilon$\\ 
            $2$ & $K$ & $\varepsilon$ & $\varepsilon$ & $K^2$  & $\dots$ & $K^2$ & $K^2$\\
            $\vdots$ & $\vdots$ & $\vdots$ & $\vdots$ & $\vdots$ & $\dots$ & $\vdots$ & $\vdots$\\
            $n$ & $K$ & $\varepsilon$ & $\varepsilon$ & $K^2$  & $\dots$ & $K^2$ & $K^2$
        \end{tabular}
    \end{center}

    Suppose w.l.o.g. that $g_1$ is allocated to agent~$1$. 
    We split our analysis into two cases.
    \begin{description}
        \item[Case 1: $g_2$ is allocated to agent~$1$.]
        Then no goods in $\{g_3,\dots,g_{n+1}\}$ can be allocated to agent~$1$ (otherwise there must exist some agent in $N \setminus \{1\}$ that receives no goods, leading to a $0$-EFX allocation).
        Moreover, if some agent in $N \setminus \{1\}$ receives two goods from $\{g_3,\dots,g_{n+1}\}$, then there must exist some agent other agent from $N \setminus \{1\}$ that receives no goods, leading to a $0$-EFX allocation.
        Thus, we must have that each good in $\{g_3,\dots,g_{n+1}\}$ is allocated to a different agent in $N \setminus \{1\}$.
        Let the agent that is allocated $g_3$ be $i$.
        Then, $v_i(A_i) = \varepsilon$ and $v_i(A_1 \setminus \{g_2\}) = K$.
        This gives us $\gamma \leq \frac{\varepsilon}{K} \rightarrow 0$ as $K \rightarrow \infty$, a contradiction.

        \item[Case 2: $g_2$ is not allocated to agent~$1$.]
        Suppose w.l.o.g. that $g_2$ is allocated to agent~$2$ instead.
        If at least one of $\{g_3,\dots,g_{n+1}\}$ is allocated to agent~$2$ and agent~$1$ is not allocated any good from $\{g_4,\dots,g_n\}$, then $v_1(A_1) \leq K + 2\varepsilon$ and $v_1(A_2 \setminus \{g\}) \geq K^2 $ for $g \in \argmin_{g' \in A_2} v_1(g')$. This gives us
        \begin{equation*}
            \gamma \leq \frac{K + 2\varepsilon}{K^2} = \frac{1}{K} + \frac{2\varepsilon}{K^2} \rightarrow 0
        \end{equation*}
        as $K \rightarrow \infty$, a contradiction.
        Therefore, if at least one of $\{g_3,\dots,g_{n+1}\}$ is allocated to agent~$2$, agent~$1$ must be allocated some good from $\{g_4,\dots,g_n\}$.
        However, this means that there are at most $n-3$ remaining goods and $n-2$ agents, leaving some agent in $N \setminus \{1,2\}$ with an empty bundle, leading to a $0$-EFX allocation.
        Thus, none of the goods in $\{g_3,\dots,g_{n+1}\}$ can be allocated to agent~$2$.
        This implies that there must be some agent $i \in N \setminus \{2\}$ that receives two goods eventually. However, $v_2(A_2) = \varepsilon$ and $v_2(A_i \setminus \{g\}) \geq K$ for $g \in \argmin_{g' \in A_i} v_2(g')$.
        This gives us $\gamma \leq \frac{\varepsilon}{K} \rightarrow 0$ as $K \rightarrow \infty$, a contradiction. 
    \end{description}
    We arrived at a contradiction in both cases, thereby proving our claim. 
\end{proof}

\subsection{Noisy Frequency Predictions: Instantiation Model and Proof of Theorem~\ref{thm:la_freq}}
 \label{app:noisy-freq-instantiation}
We fix predicted frequency multisets $(\widehat V_i)_{i\in N}$.
As in the frequency prediction model, we assume the number of arriving goods is $m$ and that
$|\widehat V_i|=m$ for every $i\in N$.

\paragraph{Online instantiation and virtual valuations.}
Fix an agent $i\in N$ and a predicted multiset $\widehat V_i$.
An \emph{online instantiation rule} maintains a multiset of unused predicted values
$(\widehat V_i^{\,t})_{t\in[m+1]}$, initialized as $\widehat V_i^{\,1}:=\widehat V_i$.
Upon the arrival of good $g_t$ (for $t\in[m]$), it selects a value
$
\widetilde v_i(g_t)\in \widehat V_i^{\,t}
$
and updates
$
\widehat V_i^{\,t+1}:=\widehat V_i^{\,t}\setminus\{\widetilde v_i(g_t)\},
$
where $\setminus$ denotes \emph{multiset difference} (i.e., removing a single occurrence).
This defines a virtual additive valuation $\widetilde v_i$ on $G$; in particular,
\[
\{\widetilde v_i(g_1),\ldots,\widetilde v_i(g_m)\}=\widehat V_i
\quad\text{as multisets.}
\]

\paragraph{Error measures.}
The (agent specific) absolute instantiation error is
\[
\eta_i \ :=\ \sum_{t=1}^m \bigl|v_i(g_t)-\widetilde v_i(g_t)\bigr|.
\]
To align with multiplicative guarantees, define the relative error
\[
\varepsilon_i \ :=\
\begin{cases}
0, & \text{if } s(\widehat V_i,n)=0,\\
\min\{1,\ \eta_i/s(\widehat V_i,n)\}, & \text{if } s(\widehat V_i,n)>0.
\end{cases}
\]

\paragraph{Interpreting $\eta_i$ via multiset distances.}
For two multisets $V$ and $\widehat V$ of equal cardinality, define the optimal matching (earthmover) cost
\[
\mathrm{EMD}_1(V,\widehat V)
:= \min_{\sigma:\widehat V \to V\ \text{bijection}} \sum_{x\in \widehat V} \bigl|x-\sigma(x)\bigr|.
\]
This quantity is the $1$-Wasserstein distance between the corresponding empirical distributions (up to normalization).
In our instantiation model, any online instantiation rule induces a bijection between predicted values and realized values,
and hence its absolute instantiation error satisfies $\eta_i \ge \mathrm{EMD}_1(V_i,\widehat V_i)$.
Thus, $\eta_i$ should be viewed as an \emph{end-to-end} error parameter capturing both (i) prediction quality and (ii) the
online instantiation procedure's ability to match predictions to the realized stream order.

\paragraph{Algorithm.}
Maintain an online instantiation rule for each agent $i$.
On each arrival $g_t$, first instantiate the virtual values $(\widetilde v_i(g_t))_{i\in N}$ and update the unused multisets
$(\widehat V_i^{\,t})_{i\in N}$.
Then allocate $g_t$ by running the online algorithm guaranteed by Theorem~5.1 on the \emph{virtual} instance with valuations
$(\widetilde v_i)_{i\in N}$ and frequency predictions $(\widehat V_i)_{i\in N}$.

\lafreq*

\begin{proof}[Proof of Theorem~\ref{thm:la_freq}]
Fix the predicted multisets $(\widehat V_i)_{i\in N}$.
The algorithm maintains, for each agent $i$, an online instantiation rule, inducing a virtual valuation $\widetilde v_i$ and an
error $\eta_i$ as defined above.
Let $\mathcal{A}=(A_1,\ldots,A_n)$ be the final allocation returned by running the online algorithm from \Cref{thm:freq_share} on the virtual
instance $(N,G,(\widetilde v_i)_{i\in N})$ with frequency predictions $(\widehat V_i)_{i\in N}$.

By construction, for each agent $i$ the frequency multiset of $\widetilde v_i$ equals $\widehat V_i$.
Therefore, applying \Cref{thm:freq_share} to the virtual instance gives us
\begin{equation} \label{eqn:virtualvals_share}
    \widetilde v_i(A_i)\ \ge\ s(\widehat V_i,n)\qquad\text{for all }i\in N.
\end{equation}

Now fix any agent $i\in N$.
Using the inequality $x\ge y-|x-y|$ (valid for all real $x,y$), we have for each good $g$
\[
v_i(g)\ \ge\ \widetilde v_i(g)-\bigl|v_i(g)-\widetilde v_i(g)\bigr|.
\]
Summing over the goods in $A_i$ and using additivity gives us
\[
v_i(A_i)
=\sum_{g\in A_i} v_i(g)
\ge
\sum_{g\in A_i}\widetilde v_i(g)
-
\sum_{g\in A_i}\bigl|v_i(g)-\widetilde v_i(g)\bigr|
=
\widetilde v_i(A_i)-
\sum_{g\in A_i}\bigl|v_i(g)-\widetilde v_i(g)\bigr|.
\]
Since $A_i\subseteq G$, we have
$
\sum_{g\in A_i}|v_i(g)-\widetilde v_i(g)|
\le
\sum_{t=1}^m |v_i(g_t)-\widetilde v_i(g_t)|
=\eta_i,
$
and thus, together with (\ref{eqn:virtualvals_share}),
\begin{equation} \label{eqn:share_eta_1}
    v_i(A_i)\ \ge\ s(\widehat V_i,n)-\eta_i.
\end{equation}

If $s(\widehat V_i,n)=0$, then $\varepsilon_i=0$ and the desired inequality
$v_i(A_i)\ge (1-\varepsilon_i)s(\widehat V_i,n)$ holds trivially.
Assume $s(\widehat V_i,n)>0$.
If $\eta_i\ge s(\widehat V_i,n)$, then $\varepsilon_i=1$ and the right-hand side is $0$; since valuations are nonnegative,
$v_i(A_i)\ge 0$ and the claim follows.
Otherwise $\eta_i<s(\widehat V_i,n)$, so $\varepsilon_i=\eta_i/s(\widehat V_i,n)$ and (\ref{eqn:share_eta_1}) implies
\[
v_i(A_i)\ \ge\ s(\widehat V_i,n)-\eta_i
=
\left(1-\frac{\eta_i}{s(\widehat V_i,n)}\right)s(\widehat V_i,n)
=(1-\varepsilon_i)\,s(\widehat V_i,n).
\]
This holds for every agent $i\in N$, completing the proof.
\end{proof}

\paragraph{A simple instantiation rule.}
A concrete way to instantiate predicted multisets online is: for each agent $i$, maintain the multiset of unused predicted values, and upon arrival of good $g_t$ match $v_i(g_t)$ to the closest unused predicted value (breaking ties arbitrarily), setting $\tilde v_i(g_t)$ to that value.
This rule can be implemented in $\mathcal{O}(\log m)$ time per agent per round using a balanced binary search tree.
Our guarantees in \Cref{thm:la_freq} apply to any instantiation rule; analyzing instantiation rules that bound $\eta_i$ in terms of a distance between multisets is an interesting direction.

\section{\Cref{sec:identical}: Identical Valuations} \label{app:identical-valuations}

Finally, we also study the special case where all agents have \emph{identical valuation functions}. This setting is motivated by both theoretical and practical considerations.
From a theoretical perspective, this case serves as a natural benchmark, and aligns with a rich line of work in fair division that has focused on this restricted but structurally appealing setting \cite{barman2020identical,elkind2025temporalfd,mutzari2023resilient,plaut2018efx}.
From an applied perspective, this setting captures scenarios where agents value goods based on uniform or shared criteria (e.g., identical machines processing uniform jobs), and is especially relevant in the context of online multiprocessor scheduling. In particular, a large body of work on semi-online scheduling \cite{cheng2005multiprocessorscheduling,kellerer1998onlinepartition} considers the goal of minimizing the makespan---the maximum load assigned to any processor---when tasks arrive sequentially (analogous to MMS in our setting). 
The majority of these models (and other variants in the online scheduling literature) assume identical valuations (i.e., identical machines) (refer to the survey by \citet{dwibedy2022semi}). In particular, it is known that in this setting, one can achieve $\frac{2}{3}$-MMS for $n = 2$  \citep{kellerer1998onlinepartition} and $\frac{1}{n-1}$-MMS for $n \geq 3$ \citep{machine}.

We further study the fairness guarantees achievable in this setting beyond MMS. 
Observe that, under identical valuations with frequency predictions, the problem collapses to the classic single-shot offline fair-division setting—so any offline guarantee (for example, the existence of EFX allocations for any number of agents) immediately applies here.
Consequently, we focus on the two remaining settings: no information and normalization information.
Importantly, even though each agent’s valuation for any given good is the same, the precise value of each future arrival remains unknown.

In the single-shot setting, it is known that EFX allocations always exist (and can be computed in polynomial time) under identical valuations for any number of agents \cite{plaut2018efx}.
On the positive side, the algorithm of \citet[Theorem 3.7]{elkind2025temporalfd} (which allocates the next arriving good to an agent with the least bundle value so far) shows that EF1 allocations can be found for identical valuations in the online setting no future information, which we reproduce here for completeness.
\begin{proposition}[\citet{elkind2025temporalfd}, Theorem 3.7]
    For $n \geq 2$, under identical valuations and without future information, there exists an online algorithm that always returns an \emph{EF1} allocation.
\end{proposition}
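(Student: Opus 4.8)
The plan is to analyze the greedy ``least-value-first'' rule directly, exploiting the fact that with identical valuations all bundle values grow monotonically over time. Write $v$ for the common valuation function, and for each round $t$ let $A_i^t$ denote agent~$i$'s bundle immediately after the first $t$ goods have been allocated (so $A_i^0=\varnothing$ and $A_i^m=A_i$). Since every item has nonnegative value and the rule only ever adds goods to bundles, each sequence $v(A_i^0), v(A_i^1),\dots$ is nondecreasing in $t$. This monotonicity is the main workhorse, and the only structural feature of identical valuations I would use is that the single function $v$ governs both the allocation decisions and the fairness comparisons.

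The key step is to fix an ordered pair $(i,j)$ of agents with $A_j\neq\varnothing$ (the EF1 condition is vacuous when $A_j=\varnothing$) and focus on the \emph{last} good $g^*$ that the algorithm assigns to agent~$j$, say arriving in round $t^*$. Because the rule hands $g_{t^*}$ to an agent attaining the minimum current bundle value, and $j$ is (one such) minimizer, we get $v(A_j^{t^*-1}) \le v(A_i^{t^*-1})$. Since $g^*$ is the \emph{last} good $j$ ever receives, $j$'s bundle is unchanged after round $t^*$, so its bundle just before receiving $g^*$ is exactly $A_j\setminus\{g^*\}$; that is, $v(A_j^{t^*-1}) = v(A_j\setminus\{g^*\})$.

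I would then close the argument by chaining these facts with monotonicity: $v(A_i^{t^*-1}) \le v(A_i^m) = v(A_i)$, which combined with the previous two displays gives $v(A_j\setminus\{g^*\}) = v(A_j^{t^*-1}) \le v(A_i^{t^*-1}) \le v(A_i)$. Thus $v(A_i) \ge v(A_j\setminus\{g^*\})$, which is precisely the EF1 requirement for $i$ toward $j$, witnessed by $g^*$. As the pair $(i,j)$ was arbitrary, the final allocation is EF1.

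There is no deep obstacle here; the proof is short and clean. The two points that merely require care are (i) the identification $v(A_j^{t^*-1}) = v(A_j\setminus\{g^*\})$, which hinges on $g^*$ being the \emph{last} good added to $j$ so that the bundle does not change afterward, and (ii) the correct reading of the tie-breaking rule, namely that a chosen minimizer's bundle value is at most that of \emph{every} other agent (in particular agent~$i$) at that round, so the inequality $v(A_j^{t^*-1})\le v(A_i^{t^*-1})$ holds regardless of how ties are resolved.
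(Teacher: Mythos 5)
Your proof is correct, and it is precisely the standard argument for the rule the paper invokes: the paper itself offers no proof (it cites Theorem~3.7 of \citet{elkind2025temporalfd} and merely names the ``allocate to an agent with least bundle value'' algorithm), and your last-good-plus-monotonicity analysis is exactly how that result is established. The two points you flag --- that $v(A_j^{t^*-1}) = v(A_j\setminus\{g^*\})$ because $g^*$ is the last good $j$ receives, and that the minimizer inequality holds under any tie-breaking --- are indeed the only places requiring care, and you handle both correctly.
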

We complement these known positive results with an impossibility result for EFX. Thus, even in this setting with identical valuations, the lack of future information can still limit fairness guarantees. 
\begin{restatable}{proposition}{efxidenticalwithout} \label{prop:identical_efx}
    For $n\geq 2$, under identical valuations and without future information, there does not exist any online algorithm with a competitive ratio strictly larger than $0$ with respect to \emph{EFX}.
\end{restatable}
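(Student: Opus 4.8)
The plan is to construct, for any fixed purported competitive ratio $\gamma \in (0,1]$, an adaptive adversary that forces every deterministic online algorithm to output an allocation that is not $\gamma$-EFX. Since valuations are identical I write $v$ for the common valuation and exploit that the algorithm's only freedom is \emph{which} agent receives each good. Fix a large value $K > \max\{1,\,1/\gamma\}$; the construction uses exactly $n+1$ goods.

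First I would feed the algorithm a sequence of unit-value goods (i.e.\ $v(g_t)=1$) one at a time and argue that the algorithm is \emph{forced} to give each of the first $n$ goods to a distinct agent. Suppose instead that at some round $t\le n$ the algorithm places a second good in some agent's bundle. Then $t$ unit goods have been allocated with one agent holding at least two of them, and since $t\le n$ we cannot also have all of the remaining $n-1$ agents holding at least one good (that would require $t\ge n+1$); hence some agent still holds the empty bundle. Stopping here, the empty agent has value $0$, while the doubled agent's bundle has value at least $2$ and, after deleting any single unit good, still has value at least $1$, so $0\ge \gamma\cdot 1$ fails, contradicting $\gamma$-EFX. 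Therefore, to survive the first $n$ rounds, the algorithm must allocate a perfect matching: after round $n$ every agent holds exactly one unit-value good.

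Next I would release the final good $g_{n+1}$ with $v(g_{n+1})=K$. It must go to some agent $j$, who now holds one unit good together with the value-$K$ good, while every other agent still holds a single unit good. Picking any $i\neq j$ (which exists because $n\ge 2$), agent $i$'s bundle has value $1$, whereas deleting the \emph{smallest} good of $A_j$—namely the unit good—leaves a remainder of value $K$. The $\gamma$-EFX condition for the ordered pair $(i,j)$ with this witness good would require $1\ge \gamma K$, i.e.\ $\gamma\le 1/K$, contradicting $K>1/\gamma$. This gives the desired contradiction for every $\gamma>0$, establishing $0$-competitiveness.

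The only delicate point—and the step I would take the most care over—is the forcing argument in the first phase: one must verify that the pigeonhole bound is tight enough that any deviation from the one-good-per-agent matching within the first $n$ rounds genuinely leaves an agent empty, and that the resulting empty-versus-doubled comparison is an \emph{exact} EFX violation (not merely approximate), so that it rules out \emph{every} positive $\gamma$. Everything afterwards reduces to the single inequality $1\ge \gamma K$ once the imbalanced bundle is created. Note that the construction specializes cleanly to $n=2$ (three goods: two unit goods split one each, then a large good) and appeals only to identical valuations, with no use of normalization or future information.
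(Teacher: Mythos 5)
Your proof is correct. For $n=2$ it is essentially the paper's argument: the paper likewise feeds two unit-value goods, rules out stacking them on one agent via an empty-agent violation (using a final zero-value good rather than your early termination, a cosmetic difference since the adversary controls the stream length in this model), and then releases a good of value $K>1/\gamma$ to force $\gamma\le 1/K$. Where you genuinely diverge is at $n\ge 3$: the paper does not construct anything there, but instead reduces to its Theorem~\ref{thm:EFX_n3_norm_identical} (impossibility for EFX under identical valuations \emph{even with normalization information}), observing that an algorithm for the no-information model could ignore the advice in the stronger model. Your uniform construction---$n$ unit goods whose forced one-per-agent matching is certified by the pigeonhole count $2+(n-1)=n+1>t$, followed by a single good of value $K$---is self-contained, handles all $n\ge 2$ at once, and is arguably simpler. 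What the paper's reduction buys instead is economy (the hard work is already done in the normalization setting) and the stronger fact that the impossibility survives normalization advice; note your construction could not be reused for that stronger setting, since your adversary's total value ($t$ or $n+K$) depends on the algorithm's behavior and so cannot be announced in advance. Two small points worth making explicit in a final write-up: first, early termination at round $t\le n$ is legitimate because the no-information adversary chooses $m$ adaptively (alternatively, pad with zero-value goods to reach a fixed length, which changes nothing since the empty agent gains only zero-value items); second, your empty-versus-doubled comparison is indeed an exact violation---$v_i(A_i)=0$ while $v(A_j\setminus\{g\})\ge 1$ for \emph{every} $g\in A_j$---so it rules out all $\gamma>0$, exactly as you claim.
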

\begin{proof}
    Suppose for a contradiction that there exists a $\gamma$-competitive algorithm for approximating EFX under identical valuations, for some $\gamma \in (0,1]$.
    Fix $K>1/\gamma$.
    Consider the case of $n=2$, and let the first two goods be $g_1$ and $g_2$ with $v(g_1) = v(g_2) = 1$. 
    Let agent $1$ denote the recipient of $g_1$ (renaming agents if necessary).
    If $g_2$ is also allocated to agent~$1$, let the third (and final) good be $g_3$ with $v(g_3) = 0$.
    We have $v(A_2)=0$ and $g_2\in A_1$, so $v(A_1\setminus\{g_1\})\ge v(g_2)=1$.
    Thus, the allocation is not $\gamma$-EFX for any $\gamma>0$ (consider the pair $(2,1)$ and the good $g_1\in A_1$).
    
    Thus, $g_2$ has to be allocated to agent~$2$.
    Then, let the third (and final) good be $g_3$ with $v(g_3) = K$.
    If $g_3$ is allocated to agent $1$, then for the pair $(2,1)$ and the good $g_1\in A_1$ we must have
    \[
    v(A_2)=1 \ \ge\ \gamma\cdot v(A_1\setminus\{g_1\}) \ =\ \gamma K,
    \]
    so $\gamma\le 1/K$.
    If instead $g_3$ is allocated to agent $2$, then for the pair $(1,2)$ and the good $g_2\in A_2$ we must have
    \[
    v(A_1)=1 \ \ge\ \gamma\cdot v(A_2\setminus\{g_2\}) \ =\ \gamma K,
    \]
    so again $\gamma\le 1/K$.
    This implies $\gamma \le 1/K < \gamma$, a contradiction.
    
    For $n\ge 3$, this follows immediately from \Cref{thm:EFX_n3_norm_identical}: if a $\gamma$-competitive algorithm existed without future
    information, then the same algorithm (ignoring any advice) would also be a $\gamma$-competitive algorithm in the stronger
    model where normalization information is provided, contradicting \Cref{thm:EFX_n3_norm_identical}.
\end{proof}

\Cref{efx_propx_normalization_n=2} and \Cref{prop:identical_efx} essentially give strong impossibility results for achieving approximate EFX even in the case of two agents, even if we assume either normalization information or identical valuations, respectively.
In what follows, we consider the setting where we have \emph{both} normalization information and identical valuations.
For two agents, we show that while an EFX allocation still may fail to exist, we can obtain a $\frac{\sqrt{5}-1}{2}$-EFX allocation, and also prove that this bound is tight (i.e., no online algorithm can do better than this given the assumptions).

\begin{restatable}{theorem}{efxtwoagentsidenticalnormalized} \label{thm:efx_n=2}
    For $n=2$, under identical valuations and given normalization information, there exists an online algorithm that returns a \emph{$\frac{\sqrt{5} - 1}{2}$-EFX} allocation. 
    Moreover, in this setting, there does not exist any online algorithm with a competitive ratio strictly larger than $\frac{\sqrt{5} - 1}{2}$ with respect to \emph{EFX}.
\end{restatable}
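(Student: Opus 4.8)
Write $\phi := \frac{\sqrt5-1}{2}$, and recall the golden identities $\phi^2 = 1-\phi$ and $1+\phi = 1/\phi$. Since valuations are identical, write $v$ for the common valuation and normalize $v(G)=1$. My first step is to reduce $\phi$-EFX to a single scalar inequality. Fix a final allocation $(A_1,A_2)$, let $s := \min\{v(A_1),v(A_2)\}$ be the value of the smaller bundle (so the larger has value $1-s$), and let $m$ denote the least value of a good in the larger bundle (the condition below is vacuous when the larger bundle has at most one good). The EFX constraint of the larger-valued agent toward the smaller one is automatic, since $1-s \ge s \ge v(A_{\mathrm{small}}\setminus\{g\})$ and $\phi<1$. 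Hence the allocation is $\phi$-EFX if and only if the smaller-valued agent is satisfied, i.e.
\[
s \;\ge\; \phi\,(1-s-m),
\]
which, using $1+\phi=1/\phi$, rearranges to the equivalent form $s \ge \phi^2(1-m)$. In particular, it suffices to reach a final state in which \emph{either} the smaller bundle has value at least $\phi^2$, \emph{or} the larger bundle's cheapest good is large enough that $\phi^2(1-m)\le s$. This reformulation is the backbone of both directions.

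\textbf{The algorithm.} Because the total value is known to be $1$, after allocating the first $t$ goods the algorithm knows the exact residual mass $R_t := 1 - v(A_1^t) - v(A_2^t)$ still to arrive (though not how the adversary will split it). I would call a state (bundle values, per-bundle minimum goods, and $R$) \emph{$\phi$-safe} if the algorithm has a strategy guaranteeing a $\phi$-EFX final allocation against \emph{every} adversarial partition of the residual mass into future goods, and design the algorithm to maintain $\phi$-safety as an invariant, placing each arriving good into whichever bundle keeps the resulting state $\phi$-safe. Correctness then follows by induction, provided (i) the empty initial state is $\phi$-safe, and (ii) from any $\phi$-safe state at least one of the two placements yields a $\phi$-safe child. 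The engine of the decision rule is the dichotomy exposed by the reformulation: the algorithm balances the two bundles while this keeps both reachable to value $\ge\phi^2$, but it deliberately \emph{imbalances} (piling onto one agent and keeping the other as a reservoir to be fed later) precisely when balancing would create two moderate, nearly-equal bundles with a large residual mass still to come---exactly the configuration an adversary can later exploit with a ``tiny good, then large good'' pair. Knowing $R$ is what lets the algorithm distinguish these cases in advance (plain greedy least-loaded does not, and can be driven to ratio well below $\phi$).

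\textbf{The matching lower bound.} For tightness I would exhibit, for every $\gamma>\phi$, an adaptive adversary holding any deterministic algorithm to ratio $\le\gamma$. The seed gadget is golden-ratio-calibrated: present two goods of value $\phi^2$; if the algorithm ever places both on one agent, a single forced good of value $1-2\phi^2=\phi^3$ must go to the other agent, producing bundles $\{\phi^2,\phi^2\}$ and $\{\phi^3\}$ whose ratio is exactly $\phi^3/(2\phi^2-\phi^2)=\phi$ (here I use $\tfrac{1}{2+\phi}=\phi^2$). The difficulty is that EFX is lenient toward singletons: if the algorithm instead balances to $(\phi^2,\phi^2)$ it can escape by later keeping one agent as a clean high-value singleton. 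To defeat this, the adversary must (a) use \emph{asymmetric} and progressively smaller goods so that neither balancing nor imbalancing reaches a safe state, and (b) force both bundles to carry at least two goods (or drive the larger bundle's cheapest good toward $0$), removing any singleton escape. The resulting construction is naturally \emph{self-similar}: the residual state after each forced exchange is a scaled copy of the previous one, and the achievable ratio converges to $\phi$ from above as the number of rounds grows, matching the algorithm.

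\textbf{Main obstacle.} The crux of the whole argument---and the source of the constant $\tfrac{\sqrt5-1}{2}$---is the exact characterization of the $\phi$-safe region, equivalently the value of the online placement game with known total mass. Proving invariant preservation (step (ii)) and, dually, proving that the adversary can keep the state on the losing boundary both require pinning down this region precisely, including the effect of the per-bundle minimum goods, which evolve monotonically as goods are added. I expect the cleanest route is to first solve the game tracking only the two bundle values and the residual mass, locate the golden-ratio boundary of the safe region there, and only then verify that the minimum-good bookkeeping never helps the adversary beyond this boundary; the reformulation and the inductive wrapper around it are then routine.
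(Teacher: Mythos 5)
There is a genuine gap: your proposal is a roadmap, not a proof, and you say so yourself. In the positive direction you define ``$\phi$-safe'' states (with $\phi=\frac{\sqrt5-1}{2}$) and an algorithm that maintains safety by induction, but the existence of such an algorithm is precisely what needs proving: step (ii) of your induction (from every safe state, some placement stays safe) hinges on an exact characterization of the safe region that you explicitly defer as ``the crux of the whole argument.'' Until that region is pinned down, no algorithm has been exhibited and no invariant has been verified, so the lower bound is unproven. (Your reformulation $s \ge \phi^2(1-m)$ is correct and clean, but it is scaffolding.) In the negative direction, your seed gadget---two goods of value $\phi^2$, then a forced good of value $\phi^3$---only handles the branch where the algorithm stacks both $\phi^2$-goods on one agent; against the balancing response $(\phi^2,\phi^2)$ you appeal to an unspecified ``self-similar'' cascade of asymmetric, shrinking goods whose convergence to $\phi$ is asserted rather than constructed. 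So neither half of the theorem is actually established.

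It is worth seeing how the paper sidesteps the machinery you anticipated, because it shows the safe-region game is unnecessary. The algorithm is a one-line fixed-threshold rule: give $g$ to agent~$1$ iff $v(A_1\cup\{g\})\le \phi$, else to agent~$2$; normalization enters only through this threshold---no residual mass $R_t$, no per-bundle minimum tracking. The analysis uses exactly the observation that every $g\in A_2$ satisfies $v(g) > \phi - v(A_1)$, split into two cases: if $v(A_1)<\sqrt5-2$ then $v(A_2) < 2\min_{g\in A_2} v(g)$ forces $|A_2|=1$ (the singleton escape you identified, now working \emph{for} the algorithm), while if $v(A_1)\ge\sqrt5-2$ then $v(A_2\setminus\{g\}) < 1-\phi$ for all $g\in A_2$, giving ratio at least $\frac{\sqrt5-2}{1-\phi}=\phi$; agent~$2$'s side follows from $v(A_2)\ge 1-\phi=\phi^2\ge \phi\, v(A_1)$. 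For the impossibility, the paper's adversary inverts your move order: it first streams $k=\lceil(\sqrt5-2)/\varepsilon\rceil$ goods of value $\varepsilon$---the fine granularity is exactly what kills the singleton escape you were fighting---punishing any split with one large final good of value $1-k\varepsilon$ (ratio at most $\frac{(k-1)\varepsilon}{1-k\varepsilon}\to\frac{\sqrt5-1}{4}$), and once all $\varepsilon$-goods sit with a single agent it sends two goods of value $\frac{1-k\varepsilon}{2}$ each; both placements then drive the ratio to $\phi$ as $\varepsilon\to 0$. This three-move construction accomplishes what your recursive cascade was meant to do, with no self-similarity or boundary analysis required. Your instincts about the binding configurations (near-balanced moderate bundles, golden-ratio calibration, removing singleton escapes) are sound, but to complete the proof along your lines you would still have to solve the placement game you posed, which is strictly harder than the paper's direct argument.
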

\begin{proof}
    We first show the upper bound. 
    Suppose for a contradiction that there exists a $\gamma$-competitive algorithm for approximating EFX when $n=2$, for some $\gamma > \frac{\sqrt{5}-1}{2}$.
    Let $k = \lceil \frac{\sqrt{5}-2}{\varepsilon} \rceil$, and let the first $k$ goods be $g_1,\dots,g_k$ with values $v(g_1) = \dots = v(g_k) = \varepsilon$.
    If each agent is allocated at least one of these $k$ goods, let the $(k+1)$-th (and final) good be $g_{k+1}$ with $v(g_{k+1}) = 1 - k\varepsilon$.
    Suppose $g_{k+1}$ is allocated to agent~$i \in \{1,2\}$, and let the other agent be $j$. 
    We have $v(A_j)\le (k-1)\varepsilon < \sqrt{5}-2$.
    Moreover, for $\varepsilon$ sufficiently small we have $v(g_{k+1})=1-k\varepsilon>\varepsilon$,
    so any $g^*\in\arg\min_{g'\in A_i} v(g')$ satisfies $v(g^*)=\varepsilon$ and thus
    \[
        v\left(A_i\setminus\{g^*\}\right)\ge v(g_{k+1}) = 1-k\varepsilon.
    \]
    
    Consequently, since the allocation must be $\gamma$-EFX, for $g^*\in\arg\min_{g'\in A_i} v(g')$ we have
    \[
    v(A_j) \ge \gamma\cdot v(A_i\setminus\{g^*\}),
    \]
    and thus
    \[
    \gamma \le \frac{v(A_j)}{v(A_i\setminus\{g^*\})}
     \le \frac{(k-1)\varepsilon}{1-k\varepsilon}.
    \]
    Moreover, $k=\left\lceil\frac{\sqrt{5}-2}{\varepsilon}\right\rceil$ implies $(k-1)\varepsilon<\sqrt{5}-2$ and $k\varepsilon\to\sqrt{5}-2$ as $\varepsilon\to 0$, so the right-hand side converges to
    \[
    \frac{\sqrt{5}-2}{3-\sqrt{5}}=\frac{\sqrt{5}-1}{4}<\frac{\sqrt{5}-1}{2}.
    \]
    Thus, for $\varepsilon>0$ sufficiently small we obtain $\gamma<\frac{\sqrt{5}-1}{2}$, contradicting the assumption that $\gamma>\frac{\sqrt{5}-1}{2}$.
    Thus, all $k$ goods must be allocated to a single agent.
    W.l.o.g., let it be agent~$1$.
    Then, let the next two (and final) goods be $g_{k+1}$ and $g_{k+2}$ with $v(g_{k+1}) = v(g_{k+2}) = \frac{1-k\varepsilon}{2}$.
    If $g_{k+1}$ is allocated to agent~$1$, then $v(A_2) \leq \frac{1-k\varepsilon}{2}$ and $v(A_1 \setminus \{g\}) \geq (k-1) \cdot \varepsilon + \frac{1-k\varepsilon}{2}$ where $g \in \argmin_{g' \in A_1} v(g')$.
    We get that
    \begin{equation} \label{eqn:efx_n=2_upper_1}
        \gamma \leq \frac{\frac{1-k\varepsilon}{2}}{(k-1) \cdot \varepsilon + \frac{1-k\varepsilon}{2}} = \frac{1-k\varepsilon}{k\varepsilon - 2\varepsilon + 1}.
    \end{equation}
    Since $ \frac{\sqrt{5}-2}{\varepsilon} \leq k  < \frac{\sqrt{5}-2}{\varepsilon} + 1$, multiplying by $\varepsilon$ throughout gives us
    \begin{equation*}
        \sqrt{5}-2 \leq k \varepsilon \leq \sqrt{5}-2 + \varepsilon.
    \end{equation*}
    By the squeeze theorem, we get $\lim_{\varepsilon \rightarrow 0} k\varepsilon = \sqrt{5} - 2$.
    Let
    \[
    f(\varepsilon):=\frac{1-k\varepsilon}{k\varepsilon-2\varepsilon+1}.
    \]
    The $\gamma$-EFX constraint in this subcase implies $\gamma \le f(\varepsilon)$. Since $k\varepsilon\to\sqrt{5}-2$ as $\varepsilon\to 0$, we have
    \[
    \lim_{\varepsilon\to 0} f(\varepsilon)=\frac{\sqrt{5}-1}{2}.
    \]
    Fix $\varepsilon>0$ sufficiently small so that
    \[
    f(\varepsilon) < \gamma
    \qquad\text{and}\qquad
    \frac{2k\varepsilon}{1-k\varepsilon} < \gamma.
    \]
    Such an $\varepsilon$ exists since both $f(\varepsilon)$ and $\frac{2k\varepsilon}{1-k\varepsilon}$ converge to
    $\frac{\sqrt{5}-1}{2}<\gamma$ as $\varepsilon\to 0$.
    Then $\gamma \le f(\varepsilon) < \gamma$, a contradiction.

    Thus, suppose $g_{k+1}$ is allocated to agent~$2$. Then, if both final goods go to agent~$2$, $v(A_1)=k\varepsilon$ and $A_2$ contains two goods each of value $(1-k\varepsilon)/2$.
    For $\gamma$-EFX, considering agent~$1$ vs. agent~$2$ and removing one of those equal goods from $A_2$,
    \begin{equation*}
        v(A_1) \ge \gamma v(A_2\setminus\{g\}) = \gamma \cdot \frac{1-k\varepsilon}{2},
    \end{equation*}
    so $\gamma \le \frac{2k\varepsilon}{1-k\varepsilon}$.
    In this subcase, $\gamma$-EFX implies
    \[
    \gamma \le \frac{2k\varepsilon}{1-k\varepsilon}.
    \]
    As $k\varepsilon\to \sqrt{5}-2$ when $\varepsilon\to 0$, the right-hand side converges to $\frac{\sqrt{5}-1}{2}$.
    By our choice of $\varepsilon$, we have $\frac{2k\varepsilon}{1-k\varepsilon} < \gamma$, which contradicts
    $\gamma \le \frac{2k\varepsilon}{1-k\varepsilon}$.
    
    Now, suppose agent~$2$ gets exactly one of the two final goods.
    Then $v(A_2)=\frac{1-k\varepsilon}{2}$ and agent~1's bundle contains $k$ goods of value $\varepsilon$ and one good of value $\frac{1-k\varepsilon}{2}$. Let $g\in\arg\min_{g'\in A_1} v(g')$. Then
    \[
    v(A_1\setminus\{g\}) \ge (k-1)\varepsilon+\frac{1-k\varepsilon}{2}.
    \]
    Since the allocation must be $\gamma$-EFX, for the ordered pair $(2,1)$ we have
    $v(A_2)\ge \gamma\cdot v(A_1\setminus\{g\})$, and hence
    \[
    \gamma \le \frac{v(A_2)}{v(A_1\setminus\{g\})}
    \le \frac{\frac{1-k\varepsilon}{2}}{(k-1)\varepsilon+\frac{1-k\varepsilon}{2}}
    = \frac{1-k\varepsilon}{k\varepsilon-2\varepsilon+1},
    \]
    contradicting the choice of $\varepsilon$.

    Next, we prove the lower bound.
    Consider the allocation $\mathcal{A}$ returned by the following algorithm (\Cref{alg:EFX}). 
    
    \begin{algorithm}[tb]
    \caption{Returns a $\frac{(\sqrt{5} - 1)}{2}$-EFX allocation for $n=2$, given normalization information, and agents have identical valuations}
    \label{alg:EFX}
    \begin{algorithmic}[1]
        \STATE Initialize $\mathcal{A} = (A_1,A_2) = (\varnothing,\varnothing)$
        \WHILE{there exists a good $g$ that arrives}
        \IF{$v(A_1 \cup \{g\}) \leq \frac{(\sqrt{5} - 1)}{2}$ \label{algline:EFX_1}} 
        \STATE $A_1 \leftarrow A_1 \cup \{g\}$
        \ELSE{}
        \STATE  $A_2 \leftarrow A_2 \cup \{g\}$ \label{algline:EFX_2}
        \ENDIF
        \ENDWHILE
        \STATE \textbf{return} $\mathcal{A} = (A_1,A_2)$
    \end{algorithmic}
\end{algorithm}
    
    We have that
    \begin{equation*}
        v(A_2) \geq 1 - \frac{\sqrt{5}-1}{2} = \frac{3-\sqrt{5}}{2} \quad \text{and} \quad v(A_1) \leq \frac{\sqrt{5}-1}{2}.
     \end{equation*}
    Then, we get that
    \begin{equation} \label{eqn:efxn=2normid_1}
        v(A_2) \geq \frac{3-\sqrt{5}}{2} = \left( \frac{\sqrt{5}-1}{2} \right)^2 \geq \frac{\sqrt{5}-1}{2} \cdot v(A_1).
    \end{equation}
    Fix any $g\in A_2$, and let $A_1^{\mathrm{pre}}$ denote agent~$1$'s bundle immediately before $g$ arrived. 
    Then, we have that
    \[
    v(A_1^{\mathrm{pre}}\cup\{g\})>\frac{\sqrt{5}-1}{2},
    \]
    so
    \[
    v(g)>\frac{\sqrt{5}-1}{2}-v(A_1^{\mathrm{pre}}).
    \]
    Because $v(A_1)$ is monotone nondecreasing over the run of the algorithm, we have $v(A_1^{\mathrm{pre}})\le v(A_1)$ at termination, and therefore
    \[
    v(g)>\frac{\sqrt{5}-1}{2}-v(A_1).
    \]
    We split our analysis into two cases.
    \begin{description}
        \item[Case 1: $v(A_1) < \sqrt{5}-2$.]
        Then, algebraic manipulation gives us
        \begin{equation*}
            v(A_2) = 1 - v(A_1) < 2 \times \left( \frac{\sqrt{5}-1}{2}  - v(A_1) \right) < 2 \times \min_{g \in A_2} v(g).
        \end{equation*}
        This means that $|A_2| = 1$ and $v(A_1) \geq 0 = v(A_2 \setminus \{g\})$ for any $g \in A_2$. Together with (\ref{eqn:efxn=2normid_1}), the result follows.

        \item[Case 2: $v(A_1) \geq \sqrt{5}-2$.]
        If $v(A_2\setminus\{g\})=0$ for some $g\in A_2$, then $v(A_1)\ge \frac{\sqrt{5}-1}{2}\cdot v(A_2\setminus\{g\})$ holds trivially. Otherwise, fix any $g\in A_2$ with $v(A_2\setminus\{g\})>0$. By normalization, $v(A_2)=1-v(A_1)$, and we have $v(g)>\frac{\sqrt{5}-1}{2}-v(A_1)$. 
        Thus,
        \[
        v(A_2\setminus\{g\})=v(A_2)-v(g)
        < (1-v(A_1))-\Bigl(\frac{\sqrt{5}-1}{2}-v(A_1)\Bigr)
        =1-\frac{\sqrt{5}-1}{2}.
        \]
        Therefore,
        \[
        \frac{v(A_1)}{v(A_2\setminus\{g\})}
        >
        \frac{v(A_1)}{1-\frac{\sqrt{5}-1}{2}}
        \ge
        \frac{\sqrt{5}-2}{1-\frac{\sqrt{5}-1}{2}}
        =
        \frac{\sqrt{5}-1}{2},
        \]
        where the inequality uses the assumption of this case that $v(A_1)\ge \sqrt{5}-2$.
        This shows that for every $g\in A_2$ we have $v(A_1)\ge \frac{\sqrt{5}-1}{2}\cdot v(A_2\setminus\{g\})$. On the other hand, (\ref{eqn:efxn=2normid_1}) implies $v(A_2)\ge \frac{\sqrt{5}-1}{2}\cdot v(A_1)$, and since $v(A_1\setminus\{g\})\le v(A_1)$ for all $g\in A_1$, we also have $v(A_2)\ge \frac{\sqrt{5}-1}{2}\cdot v(A_1\setminus\{g\})$ for all $g\in A_1$. 
        Thus, the allocation is $\frac{\sqrt{5}-1}{2}$-EFX.
    \end{description}
    Thus, the algorithm gives us a $\frac{\sqrt{5}-1}{2}$-EFX allocation, and a lower bound in this setting.
\end{proof}
However, the strong impossibility result still persists for $n \geq 3$.

\begin{restatable}{theorem}{efxthreeagentsnormidentical} \label{thm:EFX_n3_norm_identical}
    For $n\geq 3$, under identical valuations and given normalization information, there does not exist any online algorithm with a competitive ratio strictly larger than $0$ with respect to \emph{EFX}.
\end{restatable}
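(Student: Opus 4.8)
The plan is to argue by contradiction: assume some deterministic algorithm is $\gamma$-competitive for EFX with $\gamma>0$, and let an adaptive adversary (using only two value scales, a tiny $\varepsilon$ and a large one, with everything summing to $1$ as required by normalization) drive the realized allocation to be non-$\gamma$-EFX. Since valuations are identical, it is convenient to track two scalars of the final allocation $\mathcal{A}$: the poorest value $L:=\min_{i} v(A_i)$ and the largest ``bundle minus its least-valued good'' $W:=\max_{j}\bigl(v(A_j)-\min_{g\in A_j}v(g)\bigr)$. Under identical valuations, $\mathcal{A}$ is $\gamma$-EFX iff $L\ge \gamma W$, so it suffices to force a configuration with $W>0$ and $L/W<\gamma$. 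The whole construction is designed to produce, in every branch, a genuinely \emph{poor} agent $p$ (value $\le\varepsilon$) coexisting with a \emph{rich-in-$W$} agent $q$ (with $v(A_q)-\min_{g\in A_q}v(g)$ bounded below by a constant).

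First I would release $n-1$ goods of value $\varepsilon$ and branch on the number of empty agents, which is at least $1$ because only $n-1$ goods have been handed out. \textbf{Branch A:} if at least two agents are empty, then at most $n-2$ agents hold all $n-1$ goods, so by pigeonhole some agent holds $\ge 2$ of them. I then release a single final good of value $1-(n-1)\varepsilon$; whatever its placement, at least one empty agent remains (it can fill at most one of the $\ge 2$ empties) and the doubled agent keeps $W\ge\varepsilon>0$, so the empty agent witnesses a violation of $\gamma$-EFX for \emph{every} $\gamma>0$. This branch is easy and the normalization constraint is satisfied since the good values already sum to $1$.

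The crux is \textbf{Branch B}, where exactly one agent $\ell$ is empty and the other $n-1$ agents each hold a single $\varepsilon$-good. Here I release \emph{two} goods, each of value $\tfrac{1-(n-1)\varepsilon}{2}$. The key observation is that, since $\ell$ is the \emph{only} empty slot, at most one large good can be parked as a lone singleton (on $\ell$); the other large good must either join an already-occupied $\varepsilon$-bundle or land on $\ell$ beside the first large good. In every placement this yields an agent $q$ with $v(A_q)-\min_{g\in A_q}v(g)\ge \tfrac{1-(n-1)\varepsilon}{2}$, i.e.\ $W\ge\tfrac{1-(n-1)\varepsilon}{2}$. Simultaneously, the two large goods enrich at most two agents, so at least $n-2\ge 1$ agents receive no large good; any such agent $p$ has value $\le\varepsilon$ and is distinct from $q$. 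Taking $g\in\argmin_{g\in A_q}v(g)$, the ordered pair $(p,q)$ gives $v(A_p)\le\varepsilon < \gamma\cdot\tfrac{1-(n-1)\varepsilon}{2}\le \gamma\,v(A_q\setminus\{g\})$ as soon as $\varepsilon$ is small, contradicting $\gamma$-EFX. Choosing $\varepsilon$ with $\tfrac{2\varepsilon}{1-(n-1)\varepsilon}<\gamma$ closes both branches.

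The hard part is exactly Branch B, and more specifically the need to force a large-$W$ bundle and a truly poor agent \emph{at the same time}. The natural ``defensive'' algorithm — keep one agent in reserve and dump the big good there as a singleton while balancing everyone else — is what must be defeated, and the two-large-good gadget is precisely what defeats it: one empty reserve slot can absorb only one of the two large goods as a harmless singleton, so the second necessarily inflates some $W$, while the counting $n-2\ge 1$ guarantees a leftover poor agent. I expect the delicate point in writing this up to be the bookkeeping that verifies ``a poor agent always exists and differs from the large-$W$ agent'' uniformly over all placements of the two large goods; this is also where the hypothesis $n\ge 3$ is indispensable, since for $n=2$ the two large goods can enrich both agents and no poor agent survives — consistent with the positive $\tfrac{\sqrt5-1}{2}$-EFX algorithm of \Cref{thm:efx_n=2}.
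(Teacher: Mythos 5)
Your proposal is correct, and while it shares the paper's overall strategy (an adaptive two-phase adversary using two value scales plus pigeonhole counting), the actual construction is a genuine dual of the paper's. The paper releases \emph{two} $\varepsilon$-goods and branches on whether they land on the same agent: in its Case~2 it then releases $n-1$ equal large goods of value $\frac{1-2\varepsilon}{n-1}$, so that the algorithm cannot cover all $n$ agents with large goods, yielding a poor agent with value at most $\varepsilon$ against a doubled agent and the vanishing ratio $\frac{(n-1)\varepsilon}{1-2\varepsilon}$. You instead release $n-1$ $\varepsilon$-goods and branch on the number of empty agents, then defeat the ``one reserve slot'' defense with exactly \emph{two} large goods of value $\frac{1-(n-1)\varepsilon}{2}$, getting the ratio $\frac{2\varepsilon}{1-(n-1)\varepsilon}$; both bounds tend to $0$ as $\varepsilon\to 0$, so the two gadgets are equally effective. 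Your bookkeeping checks out: in Branch~B the unique empty slot absorbs at most one large good as a harmless singleton, every placement forces some bundle $A_q$ with $v(A_q)-\min_{g\in A_q}v(g)\ge\frac{1-(n-1)\varepsilon}{2}$, and since the two large goods enrich at most two agents, $n-2\ge 1$ agents (each holding at most one $\varepsilon$-good, or empty) survive as poor witnesses distinct from $q$ --- this is exactly where $n\ge 3$ enters, matching the paper's use of $n-1$ large goods among $n$ agents. Two things your write-up buys that the paper's does not: the clean scalar reduction that, under identical valuations, $\gamma$-EFX is equivalent to $L\ge\gamma W$ with $L=\min_i v(A_i)$ and $W=\max_j\bigl(v(A_j)-\min_{g\in A_j}v(g)\bigr)$ (correct even when the same agent attains both, since then $L\ge W$), which makes the case analysis mechanical; and an explicit explanation of \emph{which} algorithmic defense the gadget defeats. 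One cosmetic difference worth noting: your two branches use different numbers of goods ($n$ versus $n+1$), which is fine in the normalization model since only the total value is promised; the paper instead pads Case~1 with zero-value goods to keep $m=n+1$ in both cases, which you could mimic if one wanted a fixed-length stream.
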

\begin{proof}
Suppose for a contradiction that there exists an online algorithm that is $\gamma$-competitive for approximating
EFX when $n\ge 3$, under identical valuations and given normalization information, for some $\gamma\in(0,1]$.
Let $\varepsilon>0$ be specified later, and recall that normalization means $v(G)=1$.

Consider an instance with $m:=n+1$ goods.
The first two goods are $g_1,g_2$ with $v(g_1)=v(g_2)=\varepsilon$.
Let agent $1$ be the recipient of $g_1$ (renaming agents if necessary).

\textbf{Case 1:} $g_2$ is allocated to agent $1$.
Let $g_3,\dots,g_{n+1}$ arrive with
\[
v(g_3)=1-2\varepsilon
\qquad\text{and}\qquad
v(g_4)=\cdots=v(g_{n+1})=0,
\]
so that $\sum_{t=1}^{n+1} v(g_t)=1$.
At most two agents can receive positive total value (agent $1$, and possibly the recipient of $g_3$), so there exists an agent
$i\in N\setminus\{1\}$ with $v(A_i)=0$.
Since $g_1\in A_1$ and $g_2\in A_1$, we have $v(A_1\setminus\{g_1\})\ge v(g_2)=\varepsilon$.
The $\gamma$-EFX condition for the ordered pair $(i,1)$ and the good $g_1 \in A_1$ requires
\[
0=v(A_i)\ \ge\ \gamma\cdot v(A_1\setminus\{g_1\})\ \ge\ \gamma\varepsilon,
\]
a contradiction for any $\gamma>0$.

\textbf{Case 2:} $g_2$ is not allocated to agent $1$.
Let $g_3,\dots,g_{n+1}$ arrive with
\[
v(g_k)=\frac{1-2\varepsilon}{n-1}\qquad\text{for all }k\in\{3,\dots,n+1\},
\]
so again $\sum_{t=1}^{n+1} v(g_t)=1$.
Among these $n-1$ goods and $n$ agents, there exists an agent $i$ who receives no good from $\{g_3,\dots,g_{n+1}\}$; hence
$v(A_i)\le \varepsilon$ (since every agent receives at most one of $g_1,g_2$).
Because $m=n+1$, by the pigeonhole principle there exists an agent $j$ with $|A_j|\ge 2$.
Moreover, since $g_1$ and $g_2$ are allocated to two different agents, agent $j$ must receive at least one good from
$\{g_3,\dots,g_{n+1}\}$, and therefore $j\neq i$.

Let $g^{*} \in \arg\min_{g\in A_j} v(g)$. Since $A_j$ contains at least one good from $\{g_3,\ldots,g_{n+1}\}$ and each such good has value $\frac{1-2\varepsilon}{n-1}$, we claim that
\[
v(A_j \setminus \{g^{*}\}) \ge \frac{1-2\varepsilon}{n-1}.
\]
Indeed, if $g^{*} \notin \{g_3,\ldots,g_{n+1}\}$ then $A_j \setminus \{g^{*}\}$ still contains a good from $\{g_3,\ldots,g_{n+1}\}$, so the inequality holds. Otherwise $g^{*} \in \{g_3,\ldots,g_{n+1}\}$, and by minimality every good in $A_j \setminus \{g^{*}\}$ has value at least $v(g^{*})=\frac{1-2\varepsilon}{n-1}$; since $|A_j|\ge 2$, the set $A_j\setminus\{g^{*}\}$ is nonempty, so the inequality again holds.

Applying $\gamma$-EFX to the ordered pair $(i,j)$ and the good $g^*\in A_j$ gives
\[
v(A_i)\ \ge\ \gamma\cdot v(A_j\setminus\{g^*\})
\quad\Rightarrow\quad
\gamma \ \le\ \frac{v(A_i)}{v(A_j\setminus\{g^*\})}
\ \le\ \frac{\varepsilon}{(1-2\varepsilon)/(n-1)}
\ =\ \frac{(n-1)\varepsilon}{1-2\varepsilon}.
\]

Choose
\[
\varepsilon \in \left(0,\ \min\left\{\frac{\gamma}{2(n-1)},\ \frac14\right\}\right).
\]
Then $1-2\varepsilon>\tfrac12$, and hence
\[
\frac{(n-1)\varepsilon}{1-2\varepsilon}
\ <\
\frac{(n-1)\cdot \gamma/(2(n-1))}{1/2}
\ =\ \gamma,
\]
contradicting $\gamma \le \frac{(n-1)\varepsilon}{1-2\varepsilon}$.

In both cases we obtain a contradiction; therefore no online algorithm can have competitive ratio strictly larger
than $0$ with respect to EFX in this setting.
\end{proof}

\end{document}